\newcommand{\paperfont}{\fontsize{11pt}{1.2\baselineskip}\selectfont}
\begin{document}
%\linenumbers

%%%%%%%%%% 定理类环境的定义 %%%%%%%%%%
\theoremstyle{definition}
\makeatletter
\thm@headfont{\bf}
\makeatother
\newtheorem{theorem}{Theorem}[section]
\newtheorem{definition}[theorem]{Definition}
\newtheorem{lemma}[theorem]{Lemma}
\newtheorem{proposition}[theorem]{Proposition}
\newtheorem{corollary}[theorem]{Corollary}
\newtheorem{remark}[theorem]{Remark}
\newtheorem{example}[theorem]{Example}
\newtheorem{assumption}[theorem]{Assumption}

%%%%%%%%%% 页眉和页脚的设置 %%%%%%%%%%
\lhead{}
\rhead{}
\lfoot{}
\rfoot{}

%%%%%%%%%% 一些重定义 %%%%%%%%%%
\renewcommand{\refname}{References}
\renewcommand{\figurename}{Figure}
\renewcommand{\tablename}{Table}
\renewcommand{\proofname}{Proof}

%%%%%%%%%% 符号重定义 %%%%%%%%%%
\newcommand{\diag}{\mathrm{diag}}
\newcommand{\tr}{\mathrm{tr}}
\newcommand{\re}{\mathrm{Re}}
\newcommand{\one}{\mathbbm{1}}
\newcommand{\loc}{\textrm{loc}}

% MATH -------------------------------------------------------------------
\newcommand{\Pnum}{\mathbb{P}}
\newcommand{\Enum}{\mathbb{E}}
\newcommand{\Rnum}{\mathbb{R}}
\newcommand{\Cnum}{\mathbb{C}}
\newcommand{\Znum}{\mathbb{Z}}
\newcommand{\Nnum}{\mathbb{N}}
\newcommand{\abs}[1]{\left\vert#1\right\vert}
\newcommand{\set}[1]{\left\{#1\right\}}
\newcommand{\norm}[1]{\left\Vert#1\right\Vert}
\newcommand{\innp}[1]{\langle {#1}\rangle}
\newcommand{\style}{\setlength{\itemsep}{1pt}\setlength{\parsep}{1pt}\setlength{\parskip}{1pt}}

%%%%%%%%%% 论文标题、作者等 %%%%%%%%%%
\title{\textbf{Martingale structure for general thermodynamic functionals of diffusion processes under second-order averaging}}
\author{Hao Ge$^{1,2,*}$,\;\;\;Chen Jia$^{3,*}$,\;\;\;Xiao Jin$^{4}$ \\
\footnotesize $^1$ Beijing International Center for Mathematical Research, Peking University, Beijing 100871, China\\
\footnotesize $^2$ Biomedical Pioneering Innovation Center, Peking University, Beijing 100871, China\\
\footnotesize $^3$ Division of Applied and Computational Mathematics, Beijing Computational Science Research Center, Beijing 100193, China. \\
\footnotesize $^4$ School of Mathematical Sciences, Peking University, Beijing 100871, China\\
\footnotesize $^*$ Correspondence: haoge@pku.edu.cn (H. G.), chenjia@csrc.ac.cn (C. J.)}
\date{}                              % 日期
\maketitle                           % 生成标题
%\tableofcontents                    % 插入目录
\thispagestyle{empty}                % 首页无页眉页脚

%%%%%%%%%% 正式使用字体 %%%%%%%%%%%
\paperfont

%%%%%%%%%% 摘要 %%%%%%%%%%
\begin{abstract}
Novel hidden thermodynamic structures have recently been uncovered during the investigation of nonequilibrium thermodynamics for multiscale stochastic processes. Here we reveal the martingale structure for a general thermodynamic functional of inhomogeneous singularly perturbed diffusion processes under second-order averaging, where a general thermodynamic functional is defined as the logarithmic Radon-Nykodim derivative between the laws of the original process and a comparable process (forward case) or its time reversal (backward case). In the forward case, we prove that the regular and anomalous parts of a thermodynamic functional are orthogonal martingales. In the backward case, while the regular part may not be a martingale, we prove that the anomalous part is still a martingale. With the aid of the martingale structure, we prove the integral fluctuation theorem satisfied by the regular and anomalous parts of a general thermodynamic functional. Further extensions and applications to stochastic thermodynamics are also discussed, including the martingale structure and fluctuation theorems for the regular and anomalous parts of entropy production and housekeeping heat in the absence or presence of odd variables. \\

%%%%%%%%%% 关键词 %%%%%%%%%%
\noindent % 不缩进
\textbf{AMS Subject Classifications}: 60J60, 58J65, 60G44, 46N55, 82C05, 82C35 \\
\textbf{Keywords}: singularly perturbed diffusion, model simplification, multiscale, stochastic thermodynamics, entropy production, housekeeping heat, anomalous contribution, fluctuation theorem
\end{abstract}

%%%%%%%%%% 正文 %%%%%%%%%%
\section{Introduction}
Over the past two decades, significant progress has been made in stochastic thermodynamics \cite{jarzynski2011equalities, seifert2012stochastic, van2015ensemble}, which has grown to become one of the most influential branches of statistical physics. In this field, an irreversible thermodynamic system is usually modeled by a nonstationary and nonsymmetric Markov process, where the breaking of stationarity describes Boltzmann's irreversibility and the breaking of symmetry describes Prigogine's irreversibility \cite{esposito2010letter, esposito2010three, ge2010physical}. Along this line, an equilibrium state is defined as a stationary and symmetric Markov process and the deviation from equilibrium is usually quantified by the concept of entropy production \cite{jiang2004mathematical, zhang2012stochastic}. Entropy production can be decomposed into two parts: the adiabatic part which is also referred to as housekeeping heat and the nonadiabatic part which is also known as free energy dissipation \cite{esposito2010letter, esposito2010three, ge2010physical}. The former characterizes the deviation from symmetry and the latter characterizes the deviation from stationarity \cite{hong2016novel}.

Entropy production, as well as its adiabatic and nonadiabatic parts, can also be defined along a single stochastic trajectory. These thermodynamic quantities are all non-additional functionals of the stochastic trajectory of the underlying Markovian model. One of the major breakthroughs in stochastic thermodynamics is the finding that a broad class of thermodynamic functionals such as entropy production, housekeeping heat, free energy dissipation, and cycle fluxes \cite{andrieux2007fluctuation, jia2016cycle, ge2017cycle} satisfy various types of fluctuation theorems \cite{evans1993probability, gallavotti1995dynamical,jarzynski1997nonequilibrium, sekimoto2010stochastic, kurchan1998fluctuation, crooks1999entropy, searles1999fluctuation, lebowitz1999gallavotti, hatano2001steady, seifert2005entropy, ritort2008nonequilibrium}, which provide nontrivial generalizations of the second law of thermodynamics in terms of equalities rather than inequalities. These fluctuation theorems, which hold generally for states far from equilibrium, have garnered considerable attention from physicists, chemists, biologists, and mathematicians. Thus far, most of these fluctuation theorems have been numerically tested \cite{bonetto1997chaotic, hendrix2001fast, van2003extension, giuliani2005fluctuation, williams2006numerical} or experimentally validated \cite{ciliberto1998experimental, liphardt2002equilibrium, garnier2005nonequilibrium, collin2005verification, imparato2007work}.

Mathematically, there is a universal approach for the investigation of fluctuation theorems. It has been shown that most thermodynamic functionals have a unified mathematical representation in terms of the logarithmic Radon-Nykodim derivative between the laws of the original process and some comparable process \cite{maes2003origin, chernyak2006path, chetrite2008fluctuation}:
\begin{equation}\label{RN}
A_t = \log\frac{dP_t}{dQ_t}(X_\cdot),
\end{equation}
where $A_t$ is a thermodynamic functional, $P_t$ is the law of the original process $X = (X_s)_{0\leq s\leq t}$, and $Q_t$ is the law of the comparable process $Y = (Y_s)_{0\leq s\leq t}$. If the thermodynamic functional of interest has the representation \eqref{RN}, then the following integral fluctuation theorem naturally follows:
\begin{equation*}
\Enum e^{-A_t} = 1.
\end{equation*}
It then follows from Jensen's inequality that $\Enum A_t\geq 0$, which is just the conventional inequality in the second law of thermodynamics.

In recent years, another striking fact in stochastic thermodynamics beyond fluctuation theorems is that, under mild conditions, a wide class of thermodynamic functionals such as entropy production and housekeeping heat are actually martingales \cite{chetrite2011two}. This not only reveals a deep connection between statistical physics and probability theory, but also enables us to apply the powerful tool of martingale theory to derive some novel statistical properties of such thermodynamic functionals, including universal equalities and inequalities for stopping times, passage statistics, and extrema \cite{neri2017statistics, pigolotti2017generic, chetrite2019martingale}. An experimental validation of the infimum law of entropy production with a double electronic dot has been performed recently \cite{singh2017records}.

In stochastic thermodynamics, one of most important mathematical models of a molecular system is the diffusion process, which generalizes the classical Langevin equation describing the stochastic movement of a mesoscopic particle in a fluid due to collisions with the fluid molecules. Diffusion processes have also been widely applied to model various stochastic systems in chemistry, biology, meteorology, and other disciplines \cite{pavliotis2014stochastic}. In practice, a complex molecular system (such as biochemical reaction networks in living cells) often has multiple different time scales spanning many orders of magnitude \cite{jia2017simplification}. For efficient modeling, one usually averages out rapidly varying components and only focuses on slowly varying components. This procedure is called averaging \cite{pavliotis2008multiscale}, coarse graining \cite{pigolotti2008coarse}, or model simplification \cite{jia2016reduction, jia2016simplification}. Mathematically, many previous papers focused on a two-timescale diffusion process whose generator has the singularly perturbed form of
\begin{equation*}
L^\epsilon = \epsilon^{-1}L_0+L_1,
\end{equation*}
where $\epsilon>0$ is a small parameter, $L_0$ is the generator governing the fast components, and $L_1$ is the generator governing the slow components \cite{khasminskii1996transition, khasminskii2004averaging, khasminskii2011stochastic}. The model simplification of such systems is called first-order averaging. In statistical physics, however, many important molecular systems such as the underdamped Langevin equation are actually singularly perturbed diffusion processes with a more complicated generator \cite{spinney2012entropy, lee2013fluctuation, ge2014time}
\begin{equation*}
L^\epsilon = \epsilon^{-2}L_0+\epsilon^{-1}L_1+L_2.
\end{equation*}
The model simplification of such systems is called second-order averaging which has also been studied extensively in probability theory \cite{pardoux2001poisson, pardoux2003poisson, khasminskii2005limit}.

Recently, it has been found that while model simplification such as first-order and second-order averaging retains the dynamic information of the original process, it loses some crucial thermodynamic information as a tradeoff \cite{esposito2012stochastic, kawaguchi2013fluctuation, puglisi2010entropy,rahav2007fluctuation, santillan2011irreversible, nakayama2015invariance, ford2015stochastic, celani2012anomalous, bo2014entropy, lan2015stochastic, jia2016model, marino2016entropy, bo2017multiple, birrell2018entropy}. In particular, under model simplification, it has been shown that entropy production and housekeeping heat usually decrease in the average sense \cite{esposito2012stochastic, jia2016model}. The decreasing part emerging from the elimination of fast components is called the anomalous part of the corresponding thermodynamic functional. In addition, there have been some results showing that the anomalous part of entropy production and housekeeping heat satisfies fluctuation theorems \cite{kawaguchi2013fluctuation, bo2014entropy}. However, it is unclear whether the anomalous parts of such thermodynamic functionals are martingales, and whether similar results also hold for a general thermodynamic functional defined by the representation \eqref{RN}.

In this paper, we investigate the martingale structure for a general thermodynamic functional with the representation \eqref{RN} of inhomogeneous singularly perturbed diffusion processes under second-order averaging. Different thermodynamic functionals correspond to different comparable processes. For most thermodynamic functionals studied in the literature, the comparable process is either a diffusion process having the same diffusion coefficients as the original process or the time reversal of such a diffusion process under the reversed protocol \cite{chetrite2008fluctuation}. To distinguish between these two scenarios, we refer to the former as the forward case and the latter as the backward case. Here we consider the martingale structure of a thermodynamic functional under the most general setup covering these two scenarios, leaving most known physical examples as special cases.

The main results of the present paper are stated as follows. Let $F^\epsilon_t$ be a general forward or backward thermodynamic functional for an inhomogeneous singularly perturbed diffusion process and let $F^{(1)}_t$ be the corresponding thermodynamic functional for the reduced process under second-order averaging. We prove that under mild conditions that are usually satisfied by real physical systems, the thermodynamic functional $F^\epsilon_t$ has a weak limit $F_t$ as $\epsilon\rightarrow 0$. The limit functional $F_t$ is in general not equal to the thermodynamic functional $F^{(1)}_t$ of the reduced process and thus can be decomposed into two parts:
\begin{equation*}
F_t = F^{(1)}_t+F^{(2)}_t,
\end{equation*}
where $F^{(1)}_t$ and $F^{(2)}_t$ are called the regular and anomalous parts of the limit functional, respectively.

In this paper, we prove that in the forward scenario, the thermodynamic functional $e^{-F^\epsilon_t}$, the limit functional $e^{-F_t}$, its regular part $e^{-F^{(1)}_t}$, and its anomalous part $e^{-F^{(2)}_t}$ are all martingales. Moreover, the regular and anomalous parts are actually orthogonal martingales. This provides an orthogonal martingale decomposition of the forward limit functional. In the backward scenario, the thermodynamic functional $e^{-F^\epsilon_t}$, the limit functional $e^{-F_t}$, and its regular part $e^{-F^{(1)}_t}$ may not be martingales due to a non-vanishing boundary term depending on the initial and final distributions. However, we prove that the anomalous part $e^{-F^{(2)}_t}$ is still a martingale. These results reveal a highly nontrivial relationship between model simplification and martingale. In particular, we show that the anomalous part of a general thermodynamic functional is always a martingale. This martingale structure enables us to prove the following integral fluctuation theorem:
\begin{equation*}
\Enum e^{-F^{(2)}_{\tau}} = 1,
\end{equation*}
where $\tau$ is an arbitrary bounded stopping time. Therefore, as a byproduct, we prove that the anomalous part of a general thermodynamic functional satisfies the integral fluctuation theorem.

The structure of this paper is organized as follows. In Section 2, we introduce our model and the fundamental convergence theorem for singularly perturbed diffusion processes under second-order averaging. In Sections 3 and 4, we state our main results about the martingale structure and fluctuation theorems for general forward and backward thermodynamic functionals. In Sections 5 and 6, we apply our general theorems to stochastic thermodynamics. In particular, we establish the martingale structure for entropy production and housekeeping heat in the absence and presence of odd variables. Sections 7 and 8 are devoted to the proof of our main theorems.

\section{Formulation}
We consider a thermodynamic process modeled by a pair of inhomogeneous singularly perturbed diffusions $(X^\epsilon,Y^\epsilon) = (X^\epsilon_t,Y^\epsilon_t)_{0\leq t\leq T}$ with a small parameter $\epsilon>0$ and a fixed ending time $T>0$, where $X^\epsilon$ is a diffusion with values in $\Rnum^m$ and $Y^\epsilon$ is a diffusion with values in $\Rnum^n$. There are weak and strong interactions between these processes so that $X^\epsilon$ is slowly varying and $Y^\epsilon$ is fast varying. The system of diffusions $(X^\epsilon,Y^\epsilon)$ solves the stochastic differential equation (SDE)
\begin{equation}\label{secondorder}\left\{
\begin{split}
dX^{\epsilon}_t
&= [b(X^\epsilon_t,Y^\epsilon_t,t)+\epsilon^{-1}f(X^\epsilon_t,Y^\epsilon_t,t)]dt
+\sigma(X^\epsilon_t,Y^\epsilon_t,t)dW_t,\\
dY^{\epsilon}_t
&= [\epsilon^{-1}g(X^\epsilon_t,Y^\epsilon_t,t)+\epsilon^{-2}c(X^\epsilon_t,Y^\epsilon_t,t)]dt
+\epsilon^{-1}\eta(X^\epsilon_t,Y^\epsilon_t,t)dW_t,
\end{split}\right.
\end{equation}
where $W = (W_t)_{t\geq 0}$ is a $p$-dimensional standard Brownian motion defined on a filtered probability space $(\Omega,\mathcal{F},\{\mathscr{F}_t\},\Pnum)$, $b,f:\Rnum^m\times\Rnum^n\times\Rnum^+\rightarrow\Rnum^m$ and
$c,g:\Rnum^m\times\Rnum^n\times\Rnum^+\rightarrow\Rnum^n$ are vector-valued functions, and
$\sigma:\Rnum^m\times\Rnum^n\times\Rnum^+\rightarrow\Rnum^m\times\Rnum^p$
and $\eta:\Rnum^m\times\Rnum^n\times\Rnum^+\rightarrow\Rnum^n\times\Rnum^p$ are matrix-valued functions. Then $(X^\epsilon,Y^\epsilon)$ is an inhomogeneous diffusion process on $\Rnum^{n+m}$ with drift
\begin{equation*}
B(x,y,t) = \begin{pmatrix}
b(x,y,t)+\epsilon^{-1}f(x,y,t)\\
\epsilon^{-1}g(x,y,t)+\epsilon^{-2}c(x,y,t) \end{pmatrix}
\end{equation*}
and diffusion matrix
\begin{equation*}
D(x,y,t) = \begin{pmatrix}
\sigma(x,y,t)\\
\epsilon^{-1}\eta(x,y,t) \end{pmatrix}\begin{pmatrix}
\sigma(x,y,t)\\
\epsilon^{-1}\eta(x,y,t) \end{pmatrix}' = \begin{pmatrix}
a(x,y,t) & \epsilon^{-1}h(x,y,t) \\
\epsilon^{-1}h(x,y,t)' & \epsilon^{-2}\alpha(x,y,t)\end{pmatrix}.
\end{equation*}
where
\begin{equation*}
a = \sigma\sigma',\;\;\;h = \sigma\eta',\;\;\;\alpha = \eta\eta'.
\end{equation*}
Following previous papers \cite{pardoux2001poisson, pardoux2003poisson, khasminskii2005limit}, we assume that $(X^\epsilon,Y^\epsilon)$ starts from the same initial distribution for each $\epsilon>0$. Recall that the process is called homogenous if the drift and diffusion coefficients are independent of time $t$. Otherwise, it is called inhomogeneous. Inhomogeneous diffusion processes are particularly important in statistical physics because they can be used to describe the evolution of a molecular system controlled by a time-varying protocol.

Throughout this paper, we use \textbf{Einstein's summation convention}: if the same index appears twice in any term, once as an upper index and once as a lower index, that term is understood to be summed over all possible values of that index. Then the generator of $(X^\epsilon,Y^\epsilon)$ has the following second-order singularly perturbed form:
\begin{equation}\label{generator}
L^\epsilon = \epsilon^{-2}L_0+\epsilon^{-1}L_1+L_2,
\end{equation}
where
\begin{gather*}
L_0 = c^i\partial_{y_i}+\frac{1}{2}\alpha^{ij}\partial_{y_iy_j},\\
L_1 = f^i\partial_{x_i}+g^i\partial_{y_i}+h^{ij}\partial_{x_iy_j},\\
L_2 = b^i\partial_{x_i}+\frac{1}{2}a^{ij}\partial_{x_ix_j}.
\end{gather*}
In the special case of $f = 0$, $g = 0$, and $h = 0$, the SDE \eqref{secondorder} reduces to
\begin{equation*}\left\{
\begin{split}
dX^{\epsilon}_t &= b(X^\epsilon_t,Y^\epsilon_t,t)dt+\sigma(X^\epsilon_t,Y^\epsilon_t,t)dW_t,\\
dY^{\epsilon}_t &= \epsilon^{-2}c(X^\epsilon_t,Y^\epsilon_t,t)dt
+\epsilon^{-1}\eta(X^\epsilon_t,Y^\epsilon_t,t)dW_t.
\end{split}\right.
\end{equation*}
In this case, the generator of $(X^\epsilon,Y^\epsilon)$ has the following first-order singularly perturbed form:
\begin{equation*}
L^\epsilon = \epsilon^{-2}L_0+L_2.
\end{equation*}
Compared with the first-order case, second-order singularly perturbed diffusions can deal with more complicated drift terms, as well as correlated noise terms between fast and slow components. In addition, most physics papers focus on Langevin equations with the diffusion matrix $D$ being diagonal. Our setting can deal with the non-diagonal coupling case.

Treating $x\in\Rnum^m$ as a parameter, we next consider a family of $\Rnum^n$-valued inhomogeneous diffusion processes $(Y^x)_{x\in\Rnum^m}$ with generator $L_0$, which solves the SDE
\begin{equation*}
dY^x_t = c(x,Y^x_t,t)dt+\eta(x,Y^x_t,t)dW_t.
\end{equation*}
In this paper, we assume that the following regularity conditions are satisfied:
\begin{itemize}\style
\item[(a)] The functions $b,f,c,g,\sigma,\eta$ are bounded and smooth and all their partial derivatives are bounded.
\item[(b)] The diffusion matrix $\alpha$ satisfies the uniformly elliptic condition, i.e. there exists $\lambda>0$ such that
    \begin{equation*}
    \xi^T\alpha(x,y,t)\xi \geq \lambda|\xi|^2,\;\;\;\textrm{for any}\;\xi\in\Rnum^n.
    \end{equation*}
\item[(c)] For any $x\in\Rnum^m$ and $t\geq 0$, the process $Y^x$ has a pseudo-stationary density $\rho = \rho(x,y,t)$ that is the solution of
    \begin{equation}\label{rho}
    L_0^*\rho = 0,\;\;\;\int_{\Rnum^n}\rho(x,y,t)dy = 1,
    \end{equation}
    where $L_0^*$ is the adjoint operator of $L_0$ with respect to the Lebesgue measure.
\item[(d)] For any $x\in\Rnum^m$ and $t\geq 0$, the function $f$ satisfies the centering condition
    \begin{equation}\label{centering}
    \int_{\Rnum^n}f(x,y,t)\rho(x,y,t)dy = 0.
    \end{equation}
\end{itemize}

In fact, the regularity condition (a) can be weakened by imposing several Lipschitz, H\"{o}lder, and growth conditions \cite{pardoux2001poisson, pardoux2003poisson}. Since our main concern here is to understand the martingale structure of a general thermodynamic functional under model simplification, we do not pursue the weakest regularity conditions. Under the regularity conditions (a)-(c), it is a classical result that the pseudo-stationary density of $Y^x$ must be unique, smooth, and positive for any $x\in\Rnum^m$ and $t\geq 0$ \cite{jiang2004mathematical}. Moreover, under all the regularity conditions, it is also known that there exists a unique smooth solution $\phi$ to the Poisson equation \cite{pardoux2001poisson, pardoux2003poisson}
\begin{equation*}\left\{
\begin{split}
&-L_0\phi(x,y,t) = f(x,y,t),\\
&\int_{\Rnum^n}\phi(x,y,t)\rho(x,y,t)dy = 0.
\end{split}\right.
\end{equation*}
In fact, the asymptotic behavior of the slow component $X^\epsilon$ under second-order averaging has been well established when $g = 0$ \cite{pardoux2001poisson, pardoux2003poisson}. The asymptotic behavior of $X^\epsilon$ in the general case of $g\neq 0$ can be easily obtained using similar methods \cite{hu2017hypoelliptic}, as stated in the following lemma.

\begin{lemma}\label{convergence}
Let $C([0,T],\Rnum^m)$ be the space of continuous functions $f:[0,T]\rightarrow\Rnum^m$. If $(X^\epsilon,Y^\epsilon)$ starts from the same initial distribution for each $\epsilon>0$, then $X^\epsilon$ converges weakly in $C([0,T],\Rnum^m)$ to another diffusion process $X = (X_t)_{0\leq t\leq T}$ as $\epsilon\rightarrow 0$. Moreover, the generator of $X$ is given by
\begin{equation*}
L = w^i(x,t)\partial_{x_i}+\frac{1}{2}A^{ij}(x,t)\partial_{x_ix_j},
\end{equation*}
where
\begin{gather*}
w^i(x,t) = \int_{\Rnum^n}(b^i+\partial_{x_j}\phi^if^j+\partial_{y_j}\phi^ig^j
+\partial_{x_jy_k}\phi^ih^{jk})(x,y,t)\rho(x,y,t)dy,\\
A^{ij}(x,t) = \int_{\Rnum^n}(a^{ij}+\phi^if^j+\phi^jf^i+\partial_{y_k}\phi^ih^{jk}
+\partial_{y_k}\phi^jh^{ik})(x,y,t)\rho(x,y,t)dy.
\end{gather*}
\end{lemma}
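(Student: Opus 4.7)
The plan is to adapt the classical perturbed test function method in the form developed by Pardoux and Veretennikov for the $g = 0$ case, extending it to handle the additional coupling terms produced by $g$ and $h$. First I would reformulate the weak convergence assertion as a martingale problem: for any smooth compactly supported test function $u$ on $\Rnum^m$, it suffices to show that the process $u(X^\epsilon_t) - u(X^\epsilon_0) - \int_0^t L u(X^\epsilon_s, s)\,ds$ converges in law to a continuous martingale. Combined with tightness of $\{X^\epsilon\}$ in $C([0,T], \Rnum^m)$ and uniqueness of the limiting martingale problem, which follows from nondegeneracy of $A$ inherited from ellipticity of $\alpha$ together with the positive contribution of $a$, this identifies the limit $X$ and delivers the stated weak convergence.

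The heart of the argument is the construction of a perturbed test function $u^\epsilon(x,y,t) = u(x) + \epsilon \phi_1(x,y,t) + \epsilon^2 \phi_2(x,y,t)$ chosen so that $L^\epsilon u^\epsilon$ has no singular terms. Expanding $L^\epsilon u^\epsilon$ in powers of $\epsilon$, the $\epsilon^{-2}$ contribution is $L_0 u = 0$ since $u$ is independent of $y$. The $\epsilon^{-1}$ contribution is $L_1 u + L_0 \phi_1 = f^i \partial_{x_i} u + L_0 \phi_1$, and the centering condition \eqref{centering} together with assumption (c) guarantees a unique centered smooth solution $\phi^i$ to $-L_0 \phi^i = f^i$; setting $\phi_1 = \phi^i \partial_{x_i} u$ kills this term. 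The $\epsilon^0$ contribution is $L_2 u + L_1 \phi_1 + L_0 \phi_2$; averaging against $\rho$, using $\int_{\Rnum^n} \rho\, L_0 \phi_2\, dy = 0$ (which follows from \eqref{rho} by integration by parts), and expanding $L_1(\phi^k \partial_{x_k} u)$ into first- and second-order pieces in $u$, the coefficient of $\partial_{x_k} u$ becomes $\int \rho (b^k + f^i \partial_{x_i}\phi^k + g^i \partial_{y_i}\phi^k + h^{ij}\partial_{x_i y_j}\phi^k)\,dy$, and symmetrization in $\partial_{x_i x_k} u$ produces the coefficient $A^{ik}$ as stated. Defining $\phi_2$ as the centered solution of the Poisson equation with right-hand side $Lu - (L_2 u + L_1 \phi_1)$ completes the construction, and the cell-problem regularity theory of Pardoux-Veretennikov yields uniform bounds on $\phi_1, \phi_2$ and their derivatives under (a)-(c).

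Tightness of $\{X^\epsilon\}$ would be verified via Aldous' criterion applied to coordinate maps (suitably truncated): writing $u^\epsilon(X^\epsilon_t, Y^\epsilon_t, t) = u^\epsilon(X^\epsilon_0, Y^\epsilon_0, 0) + \int_0^t (\partial_s + L^\epsilon) u^\epsilon(X^\epsilon_s, Y^\epsilon_s, s)\,ds + M^\epsilon_t$ expresses $X^{\epsilon,i}_t$ as a bounded-variation integral plus a martingale plus $O(\epsilon)$ boundary contributions from $\phi_1$, which, together with boundedness of all coefficients and of the correctors, gives the required modulus-of-continuity bounds. Passing to the limit in the identity then shows that $M^\epsilon_t$ converges in law to the desired limiting martingale with the stated quadratic variation, closing the martingale-problem argument. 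The main obstacle is the careful bookkeeping of the new contributions from $g \neq 0$ and from the cross-term $h$: the terms $g^i \partial_{y_i}\phi^k$, $h^{ij}\partial_{x_i y_j}\phi^k$, and $h^{ij}\partial_{y_j}\phi^k$ do not appear in the original Pardoux-Veretennikov formulas and must be tracked through both the expansion of $L_1 \phi_1$ and the symmetrization step, but once they are isolated and shown to inherit the same uniform bounds as the $f$-induced terms, the rest of the argument proceeds exactly as in \cite{pardoux2001poisson, pardoux2003poisson, hu2017hypoelliptic}.
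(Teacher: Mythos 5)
The paper gives no proof of this lemma at all --- it simply cites Pardoux--Veretennikov for the $g=0$ case and Hu--Spiliopoulos for the general one --- and your proposal correctly reconstructs the standard corrector/perturbed-test-function argument underlying those references; in particular your bookkeeping of the extra $g^i\partial_{y_i}\phi^k$, $h^{ij}\partial_{x_iy_j}\phi^k$, and $h^{ij}\partial_{y_j}\phi^k$ contributions reproduces exactly the stated $w^i$ and $A^{ij}$ after symmetrization. One small imprecision: only $\alpha$ is assumed uniformly elliptic (condition (b) says nothing about $a$), so the representation $A = \int(\sigma+\nabla_y\phi\,\eta)(\sigma+\nabla_y\phi\,\eta)'\rho\,dy$ yields only nonnegative definiteness, and well-posedness of the limiting martingale problem should be drawn from the Lipschitz regularity of $w$ and $A^{1/2}$ (available since $A$ is smooth, bounded, and nonnegative) rather than from a claimed nondegeneracy of $A$.
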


\section{Martingale structure for forward thermodynamic functionals}
In this section, we investigate the martingale structure for forward thermodynamic functionals of inhomogeneous singularly perturbed diffusion processes under second-order averaging. Before doing this, recall that for any $\Rnum^m$-valued diffusion process $Z = (Z_t)_{0\leq t\leq T}$, the distribution (or the law) of $Z$ is a probability measure $P^Z_T$ on the path space $C([0,T],\Rnum^m)$ defined as
\begin{equation*}
P^Z_T(A) = \Pnum(Z_\cdot\in A),
\end{equation*}
where $A\subset C([0,T],\Rnum^m)$ is an arbitrary Borel set.

In this paper, we define the forward thermodynamic functional for a diffusion process as the logarithmic Radon-Nikodym derivative between the laws of the original process and another one, which is called the comparable process. Specifically, let $(\hat{X}^\epsilon,\hat{Y}^\epsilon) = (\hat{X}^\epsilon_t,\hat{Y}^\epsilon_t)_{0\leq t\leq T}$ be a comparable diffusion process solving the SDE
\begin{equation}\label{comparable1}\left\{
\begin{split}
d\hat{X}^\epsilon_t &= [\hat{b}(\hat{X}^\epsilon_t,\hat{Y}^\epsilon_t,t)+
\epsilon^{-1}f(\hat{X}^\epsilon_t,\hat{Y}^\epsilon_t,t)]dt
+\sigma(\hat{X}^\epsilon_t,\hat{Y}^\epsilon_t,t)dW_t,\\
d\hat{Y}^\epsilon_t &= [\epsilon^{-1}\hat{g}(\hat{X}^\epsilon_t,\hat{Y}^\epsilon_t,t)
+\epsilon^{-2}c(\hat{X}^\epsilon_t,\hat{Y}^\epsilon_t,t)]dt
+\epsilon^{-1}\eta(\hat{X}^\epsilon_t,\hat{Y}^\epsilon_t,t)dW_t,
\end{split}\right.
\end{equation}
where $\hat{b}:\Rnum^m\times\Rnum^n\times\Rnum^+\rightarrow\Rnum^m$ and
$\hat{g}:\Rnum^m\times\Rnum^n\times\Rnum^+\rightarrow\Rnum^n$. The initial distribution of $(\hat{X}^\epsilon,\hat{Y}^\epsilon)$ is chosen to be the same as that of $(X^\epsilon,Y^\epsilon)$. We also assume that the comparable process satisfies the regularity conditions (a)-(d). The forward thermodynamic functional for the original process is defined rigorously as follows.

\begin{definition}
Let $P^\epsilon_T$ be the law of the original process $(X^\epsilon,Y^\epsilon)$ and let $\hat{P}^\epsilon_T$ be the law of the comparable process $(\hat{X}^\epsilon,\hat{Y}^\epsilon)$. Then the forward thermodynamic functional of $(X^\epsilon,Y^\epsilon)$ with respect to $(\hat{X}^\epsilon,\hat{Y}^\epsilon)$ is defined as
\begin{equation*}
F^\epsilon_T = \log\frac{dP^\epsilon_T}{d\hat{P}^\epsilon_T}(X^\epsilon_\cdot,Y^\epsilon_\cdot),
\end{equation*}
if $P^\epsilon_T$ and $\hat{P}^\epsilon_T$ are absolutely continuous with respect to each other.
\end{definition}

Since the original and comparable processes are both singularly perturbed diffusions, it follows from Lemma \ref{convergence} that $X^\epsilon$ converges weakly in $C([0,T],\Rnum^m)$ to another diffusion $X = (X_t)_{0\leq t\leq T}$ with generator
\begin{equation*}
L = w^i(x,t)\partial_i+\frac{1}{2}A^{ij}(x,t)\partial_{ij},
\end{equation*}
where $w^i$ and $A^{ij}$ are defined in Lemma \ref{convergence}, and $\hat{X}^\epsilon$ converges weakly in $C([0,T],\Rnum^m)$ to another diffusion $\hat{X} = (\hat{X}_t)_{0\leq t\leq T}$ with generator
\begin{equation*}
\hat{L} = \hat{w}^i(x,t)\partial_{x_i}+\frac{1}{2}A^{ij}(x,t)\partial_{x_ix_j},
\end{equation*}
where
\begin{equation*}
\hat{w}^i(x,t) = \int_{\Rnum^n}(\hat{b}^i+\partial_{x_j}\phi^if^j+\partial_{y_j}\phi^i\hat{g}^j
+\partial_{x_jy_k}\phi^ih^{jk})(x,y,t)\rho(x,y,t)dy.
\end{equation*}
It is important to notice that $X$ and $\hat{X}$ share the same diffusion matrix $A = (A^{ij})$. The forward thermodynamic function for the averaged process is defined rigorously as follows.

\begin{definition}
Let $P_T$ be the law of $X$ and let $\hat{P}_T$ be the law of $\hat{X}$. Then the forward thermodynamic function of $X$ with respect to $\hat{X}$ is defined as
\begin{equation*}
F^{(1)}_T = \log\frac{dP_T}{d\hat{P}_T}(X_\cdot),
\end{equation*}
if $P_T$ and $\hat{P}_T$ are absolutely continuous with respect to each other.
\end{definition}

\begin{remark}
If we let the ending time $T$ change, then both $F^\epsilon_t$ and $F^{(1)}_t$ can be viewed as stochastic processes with $T$ being replaced by an arbitrary time $t$.
\end{remark}

Since $X^\epsilon\Rightarrow X$ in $C([0,T],\Rnum^m)$, one may guess that the forward thermodynamic functional $F^\epsilon_t$ of the original process should converge weakly to the forward thermodynamic functional $F^{(1)}_t$ of the averaged process. However, recent studies in stochastic thermodynamics indicate that this is generally not true \cite{esposito2010three}. In fact, $F^\epsilon_t$ indeed converges weakly to a limit functional $F_t$, as stated in the following theorem. However, $F_t$ is in general not equal to $F^{(1)}_t$, which means that $F_t$ can be decomposed into two parts:
\begin{equation*}
F_t = F^{(1)}_t+F^{(2)}_t.
\end{equation*}
Recall that two continuous martingales $M = (M_t)_{0\leq t\leq T}$ and $N = (N_t)_{0\leq t\leq T}$ are called orthogonal if their quadratic variation process $[M,N]$ vanishes.

\begin{theorem}\label{forward}
The following conclusions hold:
\begin{itemize}\style
\item[(a)] For any $t\geq 0$, both $F^\epsilon_t$ and $F^{(1)}_t$ are well defined.
\item[(b)] The pair of processes $(X^\epsilon_t,F^\epsilon_t)_{0\leq t\leq T}$ converges weakly to another pair of processes $(X_t,F_t)_{0\leq t\leq T}$ in $C([0,T],\Rnum^{m+1})$.
\item[(c)] Both $e^{-F^\epsilon_t}$ and $e^{-F_t}$ are martingales.
\item[(d)] If we decompose the limit functional into $F_t = F^{(1)}_t+F^{(2)}_t$, then $e^{-F^{(1)}_t}$ and $e^{-F^{(2)}_t}$ are orthogonal martingales.
\end{itemize}
\end{theorem}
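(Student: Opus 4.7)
The plan is to build everything from an explicit Girsanov representation of $F^\epsilon_t$, apply the averaging result of Lemma \ref{convergence} to the enlarged singularly perturbed system obtained by adjoining $F^\epsilon$ as a new coordinate, and then decompose the limit using its martingale structure. Since the original SDE \eqref{secondorder} and the comparable SDE \eqref{comparable1} share the same diffusion coefficients and the same $f$ and $c$, the standing structural assumption that there is a bounded smooth $\theta(x,y,t)$ solving $\sigma\theta = b - \hat b$ and $\eta\theta = g - \hat g$ lets Girsanov's theorem give
\begin{equation*}
F^\epsilon_t = \int_0^t \theta(X^\epsilon_s, Y^\epsilon_s, s)' dW_s + \tfrac{1}{2}\int_0^t |\theta(X^\epsilon_s, Y^\epsilon_s, s)|^2 ds,
\end{equation*}
so that $e^{-F^\epsilon_t}$ is the stochastic exponential $\mathcal{E}(-\int \theta'\,dW)_t$. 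Novikov's criterion, applicable because $\theta$ is bounded, then yields both mutual absolute continuity of $P^\epsilon_T$ and $\hat P^\epsilon_T$ (settling (a)) and the martingale property of $e^{-F^\epsilon_t}$ (settling half of (c)); the same argument applied to the averaged drifts $w, \hat w$ and common diffusion $A$ from Lemma \ref{convergence} produces a bounded $\bar\theta(x,t)$ with $A\bar\theta = w - \hat w$, an analogous Girsanov representation of $F^{(1)}_t$, and the martingale property of $e^{-F^{(1)}_t}$.

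For (b), I view $(X^\epsilon, Y^\epsilon, F^\epsilon)$ as an $\Rnum^{m+n+1}$-valued inhomogeneous diffusion with the same fast variable $Y^\epsilon$. Its generator retains the second-order singularly perturbed form $\epsilon^{-2}L_0 + \epsilon^{-1}\tilde L_1 + \tilde L_2$: $L_0$ is unchanged since it acts only on $y$; $\tilde L_1$ inherits one new cross-term $(g - \hat g)^j \partial_{y_j F}$; and $\tilde L_2$ incorporates the drift $\tfrac{1}{2}|\theta|^2 \partial_F$, the diffusion $\tfrac{1}{2}|\theta|^2 \partial_{FF}$, and the $X$--$F$ coupling $(b - \hat b)^i \partial_{x_i F}$. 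The $\epsilon^{-1}$-drift of $F$ vanishes and that of $X$ retains its centering \eqref{centering}, so Lemma \ref{convergence} applies to the enlarged system and yields weak convergence of $(X^\epsilon, F^\epsilon)$ in $C([0,T], \Rnum^{m+1})$ to a limit $(X, F)$ with explicit averaged coefficients for $F$. Passing to the limit in the identity $\Enum e^{-F^\epsilon_t} = 1$ (uniformly integrable because $\theta$ is bounded) upgrades $e^{-F_t}$ to a true martingale as well.

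For (d), a martingale representation of the continuous semimartingale $(X, F)$ on an enlarged filtration writes its martingale part as a stochastic integral against a standard Brownian motion whose first components drive $X$ and whose remaining components are independent of $X$. Identifying $F^{(1)}_t$ with the Girsanov integral $\int_0^t \bar\theta(X_s, s)' dM^X_s + \tfrac{1}{2}\int_0^t |\bar\theta(X_s, s)|^2 ds$ of the averaged process (here $M^X$ is the martingale part of $X$) and setting $F^{(2)}_t = F_t - F^{(1)}_t$, the cross-variation $[F^{(1)}, F^{(2)}]$ vanishes by construction. It\^o's formula then gives $e^{-F_t} = e^{-F^{(1)}_t} e^{-F^{(2)}_t}$, with each factor a stochastic exponential of a bounded-integrand continuous martingale, so these are precisely orthogonal martingales. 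The main obstacle is verifying that $F^{(2)}_t$ is genuinely driven by a Brownian component orthogonal to the driver of $X$: this will require extracting the Poisson-solution $\phi$ contribution to the averaged variance of $F$, matching it against the formulas in Lemma \ref{convergence}, and using the centering condition \eqref{centering} applied to $\phi$ to kill the regular--anomalous cross-term in the averaged generator.
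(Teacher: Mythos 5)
Your overall strategy coincides with the paper's: represent $F^\epsilon_t$ via Girsanov with an $\epsilon$-independent integrand, adjoin $F^\epsilon$ to the slow variables so that the enlarged system is again of the form \eqref{secondorder} with vanishing $\epsilon^{-1}$-drift in the new coordinate, and invoke Lemma \ref{convergence}. Parts (a), (b), and the martingale property of $e^{-F^\epsilon_t}$ and $e^{-F^{(1)}_t}$ are handled correctly (modulo the harmless slip that the quadratic-variation term for the averaged functional is $(w-\hat w)'A^{-1}(w-\hat w)=\bar\theta'A\bar\theta$ rather than $|\bar\theta|^2$).

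The gap is in (c)--(d). You assert that $[F^{(1)},F^{(2)}]$ ``vanishes by construction''; it does not. Writing the limit as $dF_t=\bar w^{m+1}(X_t,t)\,dt+(\bar A^{1/2})^{m+1,\cdot}(X_t,t)\,dB_t$ with $\bar A,\bar w$ the averaged coefficients of the enlarged system, one has $d[F^{(1)},F]_t=(w-\hat w)'A^{-1}\bar A^{x,m+1}\,dt$ while $d[F^{(1)},F^{(1)}]_t=(w-\hat w)'A^{-1}(w-\hat w)\,dt$, so orthogonality is exactly the identity $\bar A^{m+1,x}=w-\hat w$; similarly $e^{-F_t}$ (and with it $e^{-F^{(2)}_t}$) is an exponential local martingale precisely when $\bar A^{m+1,m+1}=2\bar w^{m+1}$. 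These two identities are the crux of the theorem and must be verified from the formulas of Lemma \ref{convergence} applied to the enlarged system: since $\bar f^{m+1}=0$ one has $\bar\phi^{m+1}=0$, $\bar a^{m+1,x}=b-\hat b$ and $\bar h^{m+1,\cdot}=(g-\hat g)'$, whence $\bar A^{m+1,j}=\int(b^j-\hat b^j+\partial_{y_k}\phi^j(g^k-\hat g^k))\rho\,dy=w^j-\hat w^j$ (the $f$- and $h$-terms cancel in $w-\hat w$ because the comparable process keeps the same $f$ and $h$), and $\bar A^{m+1,m+1}=\int(B-\hat B)'D^{-1}(B-\hat B)\rho\,dy=2\bar w^{m+1}$. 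You flag this verification as ``the main obstacle'' but do not carry it out, and the mechanism you propose (the centering condition applied to $\phi$) is not what makes it work. Relatedly, your route to the martingale property of $e^{-F_t}$ --- passing to the limit in $\Enum e^{-F^\epsilon_t}=1$ --- yields only constant expectation, which does not imply the martingale property unless you first know $e^{-F_t}$ is a nonnegative local martingale (i.e., the identity $\bar A^{m+1,m+1}=2\bar w^{m+1}$ again) or you pass to the limit in the full conditional-expectation characterization of the martingale property using the uniform integrability; as written this step is incomplete.
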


\begin{proof}
The proof of this theorem will be given in Section \ref{forwardproof}.
\end{proof}

In nonequilibrium statistical physics, $F^{(1)}_t$ and $F^{(2)}_t$ are referred to as the regular and anomalous parts of the limit functional $F_t$, respectively. In general, the anomalous part does not vanish. The above theorem shows that $e^{-F_t}$ is a martingale that can be decomposed into the product of two orthogonal martingales:
\begin{equation*}
e^{-F_t} = e^{-F^{(1)}_t}e^{-F^{(2)}_t}.
\end{equation*}
This reveals a nontrivial orthogonal decomposition for forward thermodynamic functionals.

\begin{remark}\label{infinitesimal}
From of the proof in Section \ref{forwardproof}, it is not hard to verify that the results of the above theorem also hold if the functions $\hat{b}$ and $\hat{g}$ in the comparable process are replaced by $\hat{b}+o(1)$ and $\hat{g}+o(1)$, respectively. In other words, the proof of our main results is not affected by these $o(1)$ terms.
\end{remark}

The next corollary, which follows directly from the above theorem and the optional sampling theorem for martingales, is particularly important in nonequilibrium statistical physics.

\begin{corollary}[\textbf{Integral fluctuation theorem}]
The forward functionals $F^\epsilon_t$, $F_t$, $F^{(1)}_t$, and $F^{(2)}_t$ satisfy the following integral fluctuation theorem: for any bounded stopping time $\tau$ with respect to $\{\mathscr{F}_t\}$, we have
\begin{equation*}
\Enum e^{-F^\epsilon_\tau} = \Enum e^{-F_\tau} = \Enum e^{-F^{(1)}_\tau} = \Enum e^{-F^{(2)}_\tau} = 1.
\end{equation*}
\end{corollary}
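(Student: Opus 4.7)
The plan is to derive the integral fluctuation theorem as a direct consequence of Theorem \ref{forward} combined with Doob's optional sampling theorem for bounded stopping times. Theorem \ref{forward}(c) and (d) already supply four continuous martingales $e^{-F^\epsilon_t}$, $e^{-F_t}$, $e^{-F^{(1)}_t}$, and $e^{-F^{(2)}_t}$ on $[0,T]$, and the corollary merely asserts that their expectations remain at $1$ when evaluated at an arbitrary bounded stopping time $\tau$.

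First I would verify that each of these four martingales starts at the value $1$ almost surely. Since the original process $(X^\epsilon,Y^\epsilon)$ and the comparable process $(\hat{X}^\epsilon,\hat{Y}^\epsilon)$ share the same initial distribution by construction, the restrictions of $P^\epsilon_t$ and $\hat{P}^\epsilon_t$ to the time-$0$ information coincide, so their Radon--Nikodym derivative equals $1$ there and $F^\epsilon_0 = 0$ a.s. The analogous argument applied to the averaged diffusions $X$ and $\hat{X}$ furnished by Lemma \ref{convergence}, which inherit a common initial law, gives $F^{(1)}_0 = 0$; passing to the weak limit in Theorem \ref{forward}(b) yields $F_0 = 0$; and the decomposition $F^{(2)}_t = F_t - F^{(1)}_t$ then forces $F^{(2)}_0 = 0$. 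Consequently each of the four exponentials equals $1$ at $t=0$.

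Next I would invoke Doob's optional sampling theorem. If $\tau$ is an $\{\mathscr{F}_t\}$-stopping time bounded by some constant $N \leq T$, then for any continuous martingale $M$ on $[0,T]$ adapted to $\{\mathscr{F}_t\}$ the theorem gives $\Enum[M_\tau] = \Enum[M_0]$; no extra uniform integrability hypothesis is required because $|M_{\tau\wedge t}|$ is dominated by $\sup_{s\leq N}|M_s|$, which is integrable by Doob's maximal inequality. Applying this identity in turn to each of the four martingales, together with the initial values computed above, yields the four equalities $\Enum e^{-F^\epsilon_\tau} = \Enum e^{-F_\tau} = \Enum e^{-F^{(1)}_\tau} = \Enum e^{-F^{(2)}_\tau} = 1$.

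The only genuine subtlety, which is more a bookkeeping issue than a true obstacle, is to make sure that each of the four martingales is adapted to the filtration $\{\mathscr{F}_t\}$ driving the Brownian motion $W$, so that an $\{\mathscr{F}_t\}$-stopping time $\tau$ is admissible. For $e^{-F^\epsilon_t}$ this will be immediate from the Girsanov-type stochastic integral representation of $F^\epsilon_t$ against $W$ that appears in the proof of Theorem \ref{forward}. For the limit quantities $e^{-F_t}$, $e^{-F^{(1)}_t}$, and $e^{-F^{(2)}_t}$ one either works on the limiting probability space with its natural filtration or, equivalently, exploits the joint weak convergence in Theorem \ref{forward}(b) to reduce every assertion to a finite-$\epsilon$ computation before passing to the limit. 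Once adaptedness is in place, the corollary follows in a single line from the optional sampling theorem.
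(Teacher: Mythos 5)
Your proposal is correct and is essentially the paper's own argument: the paper states the corollary ``follows directly from the above theorem and the optional sampling theorem for martingales,'' i.e.\ exactly the combination of Theorem \ref{forward}(c)--(d), the normalization $F^\epsilon_0 = F_0 = F^{(1)}_0 = F^{(2)}_0 = 0$, and Doob's optional sampling at a bounded stopping time that you spell out. One minor quibble: your justification that ``$\sup_{s\le N}|M_s|$ is integrable by Doob's maximal inequality'' is not quite right in $L^1$ (that would need $L^p$, $p>1$), but it is also unnecessary, since optional sampling for a right-continuous martingale at bounded stopping times holds without it.
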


\section{Martingale structure for backward thermodynamic functionals}
In this section, we investigate the martingale structure for backward thermodynamic functionals of inhomogeneous singularly perturbed diffusion processes under second-order averaging. To avoid the initial layer effect, we assume that the original process $(X^\epsilon,Y^\epsilon)$ starts at $t = -1$. While the original process starts at $t = -1$, we still focus on the its dynamic behavior during $t\in[0,T]$. The reason why we make this assumption is to guarantee the convergence of backward thermodynamic functionals under second-order averaging. Under this assumption, there is no need to impose additional requirements on the initial distribution of the original process at time $0$.

In this paper, we define the backward thermodynamic functional for a diffusion process as the logarithmic Radon-Nikodym derivative between the laws of the original process and the time reversal of a comparable process. Specifically, let $(\tilde{X}^{R,\epsilon},\tilde{Y}^{R,\epsilon}) = (\tilde{X}^{R,\epsilon}_t,\tilde{Y}^{R,\epsilon}_t)_{0\leq t\leq T}$ be a comparable diffusion process solving the SDE
\begin{equation}\label{comparable2}\left\{
\begin{split}
d\tilde{X}^{R,\epsilon}_t &= (\tilde{b}+\epsilon^{-1}\tilde{f})
(\tilde{X}^{R,\epsilon}_t,\tilde{Y}^{R,\epsilon}_t,T-t)dt
+\sigma(\tilde{X}^{R,\epsilon}_t,\tilde{Y}^{R,\epsilon}_t,T-t)dW_t,\\
d\tilde{Y}^{R,\epsilon}_t &= (\epsilon^{-1}\tilde{g}+\epsilon^{-2}c)
(\tilde{X}^{R,\epsilon}_t,\tilde{Y}^{R,\epsilon}_t,T-t)dt
+\epsilon^{-1}\eta(\tilde{X}^{R,\epsilon}_t,\tilde{Y}^{R,\epsilon}_t,T-t)dW_t,
\end{split}\right.
\end{equation}
where $\tilde{b},\tilde{f}:\Rnum^m\times\Rnum^n\times\Rnum^+\rightarrow\Rnum^m$ and
$\tilde{g}:\Rnum^m\times\Rnum^n\times\Rnum^+\rightarrow\Rnum^n$. Unlike the forward case, the comparable process in the backward case is controlled by a reversed protocol since the temporal variable in the drift and diffusion coefficients is $T-t$ rather than $t$. The initial distribution of $(\tilde{X}^{R,\epsilon},\tilde{Y}^{R,\epsilon})$ (at time $0$) is chosen to be the same as the final distribution of $(X^\epsilon,Y^\epsilon)$ (at time $T$). We also assume that the comparable process satisfies the regularity conditions (a)-(d). The backward thermodynamic functional for the original process is defined rigorously as follows.

\begin{definition}\label{backwardfun1}
Let $P^\epsilon_T$ be the law of the original process $(X^\epsilon,Y^\epsilon)$, let $\tilde{P}^{R,\epsilon}_T$ be the law of the comparable process $(\tilde{X}^{R,\epsilon},\tilde{Y}^{R,\epsilon})$, and let $\tilde{Q}^{R,\epsilon}_T$ be the law of its time reversal $(\tilde{X}^{R,\epsilon}_{T-t},\tilde{Y}^{R,\epsilon}_{T-t})_{0\leq t\leq T}$. Then the backward thermodynamic functional of $(X^\epsilon,Y^\epsilon)$ with respect to $(\tilde{X}^{R,\epsilon},\tilde{Y}^{R,\epsilon})$ is defined as
\begin{equation*}
G^\epsilon_T = \log\frac{dP^\epsilon_T}{d\tilde{Q}^{R,\epsilon}_T}(X^\epsilon_\cdot,Y^\epsilon_\cdot),
\end{equation*}
if $P^\epsilon_T$ and $\tilde{Q}^{R,\epsilon}_T$ are absolutely continuous with respect to each other.
\end{definition}

Since the comparable process satisfies the regularity conditions (a)-(d), there exists a unique smooth solution $\tilde\phi$ to the Poisson equation
\begin{equation*}\left\{
\begin{split}
&-L_0\tilde\phi(x,y,t) = \tilde{f}(x,y,t),\\
&\int_{\Rnum^n}\tilde\phi(x,y,t)\rho(x,y,t)dy = 0.
\end{split}\right.
\end{equation*}
It then follows from Lemma \ref{convergence} that the slow component $\tilde{X}^{R,\epsilon}$ of the comparable process converges weakly in $C([0,T],\Rnum^m)$ to another diffusion $\tilde{X}^R = (\tilde{X}^R_t)_{0\leq t\leq T}$ with generator
\begin{equation*}
\tilde{L} = \tilde{w}^i(x,T-t)\partial_{x_i}+\frac{1}{2}\tilde{A}^{ij}(x,T-t)\partial_{x_ix_j},
\end{equation*}
where
\begin{gather*}
\tilde{w}^i(x,t) = \int_{\Rnum^n}(\tilde{b}^i+\partial_{x_j}\tilde\phi^i\tilde{f}^j
+\partial_{y_j}\tilde\phi^i\tilde{g}^j+\partial_{x_jy_k}\tilde\phi^ih^{jk})(x,y,t)\rho(x,y,t)dy,\\
\tilde{A}^{ij}(x,t) = \int_{\Rnum^n}(a^{ij}+\tilde\phi^i\tilde{f}^j+\tilde\phi^j\tilde{f}^i
+\partial_{y_k}\tilde\phi^ih^{jk}+\partial_{y_k}\tilde\phi^jh^{ik})(x,y,t)\rho(x,y,t)dy.
\end{gather*}
The backward thermodynamic functional for the averaged process is defined rigorously as follows.

\begin{definition}
Let $P_T$ be the law of $X$, let $\tilde{P}^R_T$ be the law of $\tilde{X}^R$, and let $\tilde{Q}^R_T$ be the law of its time reversal $(\tilde{X}^R_{T-t})_{0\leq t\leq T}$. Then the backward thermodynamic function of $X$ with respect to $\tilde{X}^R$ is defined as
\begin{equation*}\label{backwardfun2}
G^{(1)}_T = \log\frac{dP_T}{d\tilde{Q}^R_T}(X_\cdot),
\end{equation*}
if $P_T$ and $\tilde{Q}^R_T$ are absolutely continuous with respect to each other.
\end{definition}

Similarly, if we let the ending time $T$ change, then both $G^\epsilon_t$ and $G^{(1)}_t$ can be viewed as stochastic processes with $T$ being replaced by an arbitrary time $t$. In the backward case, we need the following two important assumptions.

\begin{assumption}\label{a1}
The function $\tilde{f}$ satisfies the compatible condition
\begin{equation*}
f+\tilde{f} = \nabla_y\cdot h+h\nabla_y\log\rho,
\end{equation*}
where $\nabla_y\cdot h$ is a vector-valued function whose $i$th component is given by $(\nabla_y\cdot h)^i = \partial_{y_j}h^{ij}$.
\end{assumption}

\begin{assumption}\label{a2}
The pseudo-stationary density $\rho$ satisfies the compatible condition
\begin{equation*}
2c = \nabla_y\cdot\alpha+\alpha\nabla_y\log\rho,
\end{equation*}
where $\nabla_y\cdot\alpha$ is a vector-valued function whose $i$th component is given by $(\nabla_y\cdot\alpha)^i = \partial_{y_j}\alpha^{ij}$.
\end{assumption}

Roughly speaking, Assumption \ref{a2} holds if and only if the diffusion $Y^x$ is reversible during each infinitesimal time period. It is a classical result that this assumption has the following equivalent statement \cite{jiang2004mathematical}.

\begin{lemma}\label{a2explain}
Assumption \ref{a2} holds if and only if the operator $L_0$ is symmetric with respect to the pseudo-stationary density $\rho$, i.e.
\begin{equation*}
(L_0f,g)_\rho = (f,L_0g)_\rho,\;\;\;f,g\in C_b^\infty(\Rnum^n),
\end{equation*}
where $C_b^\infty(\Rnum^n)$ is the space of all bounded smooth functions on $\Rnum^n$ whose all partial derivatives are bounded and
\begin{equation*}
(f,g)_\rho := \int_{\Rnum^n}f(y)g(y)\rho(x,y,t)dy.
\end{equation*}
\end{lemma}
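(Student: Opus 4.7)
The plan is to compute the adjoint $L_0^{*,\rho}$ of $L_0$ with respect to the weighted inner product $(\cdot,\cdot)_\rho$ explicitly, and to read off from the resulting formula the precise algebraic condition on $c$, $\alpha$, and $\rho$ that forces it to coincide with $L_0$. For $f,g\in C_b^\infty(\Rnum^n)$, I would move both derivatives in $L_0 f$ onto $g\rho$ via integration by parts. The boundary terms vanish since the integrand decays (one may first restrict to compactly supported $f$ and then use density in $C_b^\infty$), yielding
\begin{equation*}
(L_0 f, g)_\rho \;=\; \int_{\Rnum^n} f\cdot \frac{1}{\rho}\Bigl[-\partial_{y_i}(c^i g \rho) + \tfrac{1}{2}\partial_{y_iy_j}(\alpha^{ij} g \rho)\Bigr]\,\rho\, dy.
\end{equation*}

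The heart of the argument is then to expand this bracket via the product rule, sorting contributions by the order of derivative falling on $g$. The terms in which no derivative hits $g$ collect into $g\cdot L_0^*\rho$, which vanishes identically by the pseudo-stationary equation \eqref{rho}; this is the critical cancellation that ensures $L_0^{*,\rho}$ has no zeroth-order part. The second-derivative terms reduce to $\tfrac{1}{2}\alpha^{ij}\partial_{y_iy_j}g$, matching the diffusion part of $L_0$ once the symmetry $\alpha^{ij}=\alpha^{ji}$ is used to combine the two symmetric cross-terms in $\partial_{y_iy_j}(\alpha^{ij}g\rho)$. Collecting the remaining first-derivative terms and using the identity $\rho^{-1}\partial_{y_j}(\alpha^{ij}\rho)=(\nabla_y\cdot\alpha)^i+(\alpha\nabla_y\log\rho)^i$ gives
\begin{equation*}
L_0^{*,\rho}g \;-\; L_0 g \;=\; \bigl[(\nabla_y\cdot\alpha)^i + (\alpha\nabla_y\log\rho)^i - 2c^i\bigr]\partial_{y_i}g.
\end{equation*}

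Both implications then follow. If Assumption \ref{a2} holds, the bracket vanishes identically and therefore $(L_0 f, g)_\rho = (f, L_0 g)_\rho$ on $C_b^\infty(\Rnum^n)$. Conversely, if $L_0$ is symmetric, then, writing $v:=2c-\nabla_y\cdot\alpha-\alpha\nabla_y\log\rho$, the integral $\int f\,v^i\partial_{y_i}g\,\rho\,dy$ vanishes for every admissible pair $f,g$; since $\rho>0$ is continuous, varying $f$ first forces $v^i\partial_{y_i}g\equiv 0$ pointwise, and then choosing smooth cutoff functions that locally reproduce each linear coordinate $y_k$ around an arbitrary base point forces each component of $v$ to vanish there, which is exactly Assumption \ref{a2}.

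There is no serious obstacle once the zeroth-order cancellation is spotted: everything after that is routine index bookkeeping, justified by the smoothness and boundedness assumptions (a)-(c). The only mild care is in the integration by parts, which is easily handled by a cutoff-and-limit argument within $C_b^\infty(\Rnum^n)$.
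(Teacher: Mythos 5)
Your proof is correct. Note, however, that the paper itself does not prove this lemma: it is stated as ``a classical result'' with a citation to the monograph of Jiang, Qian, and Qian, so there is no in-paper argument to compare against. Your direct computation of the adjoint of $L_0$ with respect to $(\cdot,\cdot)_\rho$ is the standard way to establish the equivalence, and the details check out: integrating both derivatives onto $g\rho$, the zeroth-order terms assemble into $g\,L_0^*\rho$, which vanishes by the pseudo-stationarity equation \eqref{rho}; the second-order terms give back $\tfrac{1}{2}\alpha^{ij}\partial_{y_iy_j}g$ after using $\alpha^{ij}=\alpha^{ji}$; and the residual first-order part is exactly $\bigl[(\nabla_y\cdot\alpha)^i+(\alpha\nabla_y\log\rho)^i-2c^i\bigr]\partial_{y_i}g$, so symmetry of $L_0$ is equivalent to the vanishing of this drift discrepancy, which is Assumption \ref{a2}. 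Your converse argument correctly uses the positivity of $\rho$ (guaranteed in the paper under conditions (a)--(c)) and localized test functions reproducing the coordinates $y_k$ to conclude $v\equiv 0$ pointwise. The only point deserving slightly more care than you give it is the justification of the integration by parts for merely bounded $f,g$: compactly supported functions are not dense in $C_b^\infty(\Rnum^n)$ in the uniform norm, so the extension should be phrased as a cutoff-and-dominated-convergence argument using the integrability and decay of $\rho$ and its derivatives (available under the regularity conditions), rather than as a density argument; but this is a routine repair and does not affect the validity of the proof.
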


\begin{remark}
From the proof in Section \ref{backwardproof}, especially the expression \eqref{notdiverge}, it is easy to see that the above two assumptions are the sufficient and necessary conditions that guarantee the backward thermodynamic function $G^\epsilon_T$ not to diverge as $\epsilon\rightarrow 0$. If $h = 0$ (which means that the noise terms of fast and slow components are not correlated), then Assumption \ref{a1} reduces to $\tilde{f} = -f$.

We emphasize that in this section, our theory is established when all slow and fast components of the original process are ``even variables" (this concept will be explained in Section \ref{odd}). However, in many physical systems, some variables of the original processes are actually ``odd variables" and thus the function $\tilde{f}$ should be replaced by another function $\tilde{f}_\delta$ (see Section \ref{odd} for details). For the classical model of underdamped Langevin equations, all slow components are even variables and all fast components are odd variables. In this case, we have $h = 0$ and $\tilde{f}_\delta = -f$, and thus Assumption \ref{a1} is automatically satisfied. This indicates that the above two assumptions are not excessive requirements for real physical systems.

Similarly, in the forward case, the reason why we use the same function $f$ in the comparable process instead of introducing a new function $\hat{f}$ is to guarantee the forward thermodynamic functional $F^\epsilon_T$ not to diverge as $\epsilon\rightarrow 0$.
\end{remark}

Interestingly, the above two assumptions also imply the following proposition.

\begin{proposition}\label{same}
Under Assumptions \ref{a1} and \ref{a2}, we have $A^{ij}(x,t) = \tilde{A}^{ij}(x,t)$, where $A^{ij}(x,t)$ is the diffusion coefficient of $X$ and $\tilde{A}^{ij}(x,T-t)$ is the diffusion coefficient of $\tilde{X}^R$.
\end{proposition}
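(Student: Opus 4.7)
The plan is to reduce the claim to an identity between two integrals against $\rho\,dy$, exploit Assumption \ref{a1} via integration by parts on $\Rnum^n$ to eliminate the $h$-involving terms, and then close the loop using the self-adjointness of $L_0$ granted by Assumption \ref{a2} and Lemma \ref{a2explain}. Since the $a^{ij}$ contribution is common to both $A^{ij}$ and $\tilde A^{ij}$, the task is to show
\begin{equation*}
\int_{\Rnum^n}(\phi^if^j+\phi^jf^i+\partial_{y_k}\phi^ih^{jk}+\partial_{y_k}\phi^jh^{ik})\rho\,dy
=\int_{\Rnum^n}(\tilde\phi^i\tilde f^j+\tilde\phi^j\tilde f^i+\partial_{y_k}\tilde\phi^ih^{jk}+\partial_{y_k}\tilde\phi^jh^{ik})\rho\,dy.
\end{equation*}

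The first step is to treat the ``$(i,j)$-half'' of the left side. Assumption \ref{a1} gives the pointwise identity $f^j=-\tilde f^j+\rho^{-1}\partial_{y_k}(h^{jk}\rho)$, so
\begin{equation*}
\int\phi^if^j\rho\,dy=-\int\phi^i\tilde f^j\rho\,dy+\int\phi^i\partial_{y_k}(h^{jk}\rho)\,dy=-\int\phi^i\tilde f^j\rho\,dy-\int(\partial_{y_k}\phi^i)h^{jk}\rho\,dy,
\end{equation*}
where the boundary contribution in the integration by parts is discarded using the boundedness of $\phi^i,h^{jk}$ (from the Poisson equation theory and the regularity conditions) together with the decay of $\rho$ inherited from the uniform ellipticity of $\alpha$. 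Thus the $(i,j)$-half collapses to $-\int\phi^i\tilde f^j\rho\,dy$, and applying the same computation with $(i,j)$ swapped and then again with the roles of $(f,\phi)$ and $(\tilde f,\tilde\phi)$ interchanged yields
\begin{equation*}
A^{ij}(x,t)-\int a^{ij}\rho\,dy=-\int(\phi^i\tilde f^j+\phi^j\tilde f^i)\rho\,dy,\qquad
\tilde A^{ij}(x,t)-\int a^{ij}\rho\,dy=-\int(\tilde\phi^if^j+\tilde\phi^jf^i)\rho\,dy.
\end{equation*}

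The final step invokes Assumption \ref{a2}: by Lemma \ref{a2explain}, $L_0$ is symmetric with respect to $(\cdot,\cdot)_\rho$, so using $-L_0\phi^i=f^i$ and $-L_0\tilde\phi^j=\tilde f^j$ one obtains
\begin{equation*}
\int\phi^i\tilde f^j\rho\,dy=-\int\phi^i(L_0\tilde\phi^j)\rho\,dy=-\int\tilde\phi^j(L_0\phi^i)\rho\,dy=\int\tilde\phi^jf^i\rho\,dy,
\end{equation*}
and the same reasoning gives $\int\phi^j\tilde f^i\rho\,dy=\int\tilde\phi^if^j\rho\,dy$. Substituting these two identities into the expression for $A^{ij}-\int a^{ij}\rho\,dy$ produces exactly the expression for $\tilde A^{ij}-\int a^{ij}\rho\,dy$, giving $A^{ij}(x,t)=\tilde A^{ij}(x,t)$. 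The only delicate point is the justification of the boundary vanishing in the integration by parts; this is a routine consequence of the standing regularity framework but must be noted explicitly, whereas the algebraic combination of the two assumptions is the conceptual content of the argument.
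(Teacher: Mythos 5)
Your proof is correct and follows essentially the same route as the paper's: both eliminate the $h$-terms by combining Assumption \ref{a1} with integration by parts (reducing each bracketed integrand to a mixed term like $-\int\phi^i\tilde f^j\rho\,dy$), and then invoke the symmetry of $L_0$ from Lemma \ref{a2explain} to swap $\phi$ and $\tilde\phi$. The only difference is organizational (you symmetrize both sides to mixed forms and equate, while the paper chains the identities from the tilded side to the untilded side), plus your explicit remark about the vanishing boundary terms, which the paper leaves implicit.
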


\begin{proof}
Using Assumption \ref{a1} and integration by parts, we obtain
\begin{equation*}
\begin{split}
\int_{\Rnum^n}(\tilde\phi^i\tilde{f}^j+\partial_{y_k}\tilde\phi^ih^{jk})\rho dy
&= \int_{\Rnum^n}\tilde\phi^i[\tilde{f}^j\rho-\partial_{y_k}(h^{jk}\rho)]dy\\
&= \int_{\Rnum^n}\tilde\phi^i[\tilde{f}^j-\partial_{y_k}h^{jk}-h^{jk}\partial_{y_k}\log\rho]\rho dy
= -\int_{\Rnum^n}\tilde\phi^if^j\rho dy.
\end{split}
\end{equation*}
By Assumption \ref{a2} and Lemma \ref{a2explain}, $L^0$ is symmetric with respect to $\rho$ and thus
\begin{equation*}
\int_{\Rnum^n}\tilde\phi^if^j\rho dy = -(\tilde\phi^i,L_0\phi^j)_\rho
= -(L_0\tilde\phi^i,\phi^j)_\rho = \int_{\Rnum^n}\phi^j\tilde f^i\rho dy.
\end{equation*}
Using Assumption \ref{a1} and integration by parts again yields
\begin{equation*}
\begin{split}
\int_{\Rnum^n}(\tilde\phi^i\tilde{f}^j+\partial_{y_k}\tilde\phi^ih^{jk})\rho dy
&= \int_{\Rnum^n}\phi^j(f^i-\partial_{y_k}h^{ik}-h^{ik}\partial_{y_k}\log\rho)\rho dy\\
&= \int_{\Rnum^n}\phi^j[f^i\rho-\partial_{y_k}(h^{ik}\rho)]dy
= \int_{\Rnum^n}(\phi^jf^i+\partial_{y_k}\phi^jh^{ik})\rho dy.
\end{split}
\end{equation*}
Similarly, we can prove that
\begin{equation*}
\int_{\Rnum^n}(\tilde\phi^j\tilde{f}^i+\partial_{y_k}\tilde\phi^jh^{ik})\rho dy
= \int_{\Rnum^n}(\phi^if^j+\partial_{y_k}\phi^ih^{jk})\rho dy.
\end{equation*}
Combining the above two equations finally yields
\begin{equation*}
\begin{split}
\tilde{A}^{ij}(x,t) &= \int_{\Rnum^n}(a^{ij}+\tilde\phi^i\tilde{f}^j+\tilde\phi^j\tilde{f}^i
+\partial_{y_k}\tilde\phi^ih^{jk}+\partial_{y_k}\tilde\phi^jh^{ik})\rho dy\\
&= \int_{\Rnum^n}(a^{ij}+\phi^if^j+\phi^jf^i
+\partial_{y_k}\phi^ih^{jk}+\partial_{y_k}\phi^jh^{ik})\rho dy = A^{ij}(x,t),
\end{split}
\end{equation*}
which gives the desired result.
\end{proof}

For any $\epsilon>0$ and any Borel set $K\subset\Rnum^{n+m}$, let $\tau^\epsilon_K = \inf\{t\geq 0:(X^\epsilon_t,Y^\epsilon_t)\notin K\}$ be the first exit time of $(X^\epsilon,Y^\epsilon)$ from $K$. Since we are working with processes defined on the entire space, rather than on a compact subset as assumed in many previous papers \cite{khasminskii1996transition, khasminskii2004averaging, khasminskii2005limit}, we still need the following technical assumption.

\begin{assumption}\label{a3}
For any $\delta>0$, there exists a compact set $K\subset\Rnum^{n+m}$ such that
\begin{equation*}
\Pnum\left(\inf_{\epsilon>0}\tau^\epsilon_K\geq T\right) \geq 1-\delta.
\end{equation*}
\end{assumption}

The martingale structure for backward thermodynamic functionals under second-order averaging is stated as follows.

\begin{theorem}\label{backward}
Under Assumptions \ref{a1}, \ref{a2}, and \ref{a3}, the following conclusions hold:
\begin{itemize}\style
\item[(a)] For any $t\geq 0$, both $G^\epsilon_t$ and $G^{(1)}_t$ are well defined.
\item[(b)] The pair of processes $(X^\epsilon_t,G^\epsilon_t)_{0\leq t\leq T}$ converges weakly to another pair of processes $(X_t,G_t)_{0\leq t\leq T}$ in $C([0,T],\Rnum^{m+1})$.
\item[(c)] If we decompose the limit functional into $G_t = G^{(1)}_t+G^{(2)}_t$, then $e^{-G^{(2)}_t}$ is a martingale.
\end{itemize}
\end{theorem}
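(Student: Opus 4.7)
The overall plan is to derive an explicit Girsanov-type expression for $G^\epsilon_t$, decompose it into a density-ratio boundary term plus a stochastic-integral part, and show that in the $\epsilon\to 0$ limit the boundary term of $G^{(1)}_t$ exactly cancels the (averaged) boundary term of the full limit $G_t$, so that the anomalous part $G^{(2)}_t$ retains only its martingale component.

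For part (a), I would apply Girsanov's theorem to both $P^\epsilon_T$ and $\tilde{P}^{R,\epsilon}_T$ relative to a common Wiener reference measure, then use the classical representation of the time reversal of a diffusion (which brings in $\nabla_y\log\rho^\epsilon$ and the divergence of the diffusion matrix, where $\rho^\epsilon$ is the joint density of $(X^\epsilon_t,Y^\epsilon_t)$) to combine these into an expression for $G^\epsilon_T$. The resulting integrand a priori carries coefficients of order $\epsilon^{-2}$ and $\epsilon^{-1}$, and Assumptions \ref{a1} and \ref{a2} are precisely what collapse these to a finite integrand, as anticipated by the later equation \eqref{notdiverge} referenced in the remark. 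This yields absolute continuity for each fixed $\epsilon>0$, and an analogous but simpler computation handles $G^{(1)}_t$.

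For part (b), the argument parallels the proof of Lemma \ref{convergence}. Using the Poisson-equation solutions $\phi$ and $\tilde\phi$ together with It\^o's formula applied to $\phi(X^\epsilon_t,Y^\epsilon_t,t)$ and $\tilde\phi(X^\epsilon_t,Y^\epsilon_t,t)$, I would rewrite the remaining $\epsilon^{-1}$ integrands as a total differential plus integrals with bounded-variance integrands, while Assumption \ref{a3} supplies the localization needed to work on the whole space. Joint tightness of $(X^\epsilon,G^\epsilon)$ then follows from standard criteria, and identification of the limit via the associated martingale problem (combined with a Skorohod representation) yields weak convergence of $(X^\epsilon,G^\epsilon)$ to $(X,G)$ in $C([0,T],\Rnum^{m+1})$.

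The heart of the theorem is part (c). Writing $G^\epsilon_t=M^\epsilon_t+B^\epsilon_t$ and $G^{(1)}_t=M^{(1)}_t+B^{(1)}_t$, where $B^\epsilon_t$ and $B^{(1)}_t$ are the density-ratio boundary terms and $M^\epsilon_t,M^{(1)}_t$ are stochastic integrals with zero drift, the crucial step is to show that as $\epsilon\to 0$ the fast variable equilibrates conditionally on the slow variable to the pseudo-stationary density $\rho$, so the limiting boundary term $B_t$ of $G_t$ depends only on $X$ and matches $B^{(1)}_t$. Consequently $G^{(2)}_t=G_t-G^{(1)}_t$ carries no boundary contribution and equals the weak limit of the pure martingale residual $M^\epsilon_t-M^{(1)}_t$. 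A direct It\^o computation, using once more the Poisson-equation structure, then shows that $e^{-G^{(2)}_t}$ has vanishing drift, and the boundedness in regularity condition (a) supplies a Novikov-type bound upgrading it from a local to a genuine martingale. The main obstacle will be justifying the cancellation of boundary terms rigorously: it needs both Assumption \ref{a1} (pairing $\tilde{f}$ with $f$ via the off-diagonal noise $h$) and Assumption \ref{a2} (reversibility of the fast diffusion), together with a controlled expansion of the joint density as $\pi_t(x)\rho(x,y,t)+O(\epsilon)$ whose first-order correction contributes zero to the limiting boundary term precisely because of the compatible conditions.
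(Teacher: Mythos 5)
Your overall route coincides with the paper's: obtain an explicit Girsanov/time-reversal representation of $G^\epsilon_t$, split off the density-ratio boundary term, use the factorization $p^\epsilon(x,y,t)=p(x,t)\rho(x,y,t)+O(\epsilon)$ (justified by matched asymptotic expansions, the start at $t=-1$ to kill the initial layer, and Assumption \ref{a3} for localization, as in Lemma \ref{boundary}) to show that the limiting boundary term depends only on $X$ and coincides with that of $G^{(1)}_t$, conclude that the boundary terms cancel in $G^{(2)}_t$, and finally verify that $e^{-G^{(2)}_t}$ is an exponential martingale. The paper organizes the Girsanov computation through an auxiliary process with drift $\tfrac12\nabla\cdot D$ whose generator is symmetric with respect to Lebesgue measure, but that is a bookkeeping device for the same time-reversal identity you invoke. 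Two points in your plan, however, are not merely under-detailed but need correction.

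First, the decomposition $G^\epsilon_t=M^\epsilon_t+B^\epsilon_t$ with $M^\epsilon_t$ a ``stochastic integral with zero drift'' is wrong as stated: besides the boundary term, $G^\epsilon_t$ contains the finite-variation terms $\tfrac12\int_0^t(B+\tilde B-2\bar B)'D^{-1}(B+\tilde B-2\bar B)\,ds+\int_0^t\nabla\cdot(\tilde B-\bar B)\,ds$, and likewise for $G^{(1)}_t$. If $G^{(2)}_t$ really were the limit of a driftless residual $M^\epsilon_t-M^{(1)}_t$, then $e^{-G^{(2)}_t}$ would be a strict submartingale, contradicting the very statement being proved; the theorem forces $G^{(2)}_t$ to carry a drift equal to half its quadratic variation. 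Second, the claim that ``a direct It\^o computation shows $e^{-G^{(2)}_t}$ has vanishing drift'' compresses the actual mathematical content of part (c). After the boundary terms cancel, one must establish the identities $\bar A^{m+1,x}=w+\tilde w-\nabla_x\cdot A$ and $\bar A^{m+1,m+1}=2\bigl[\bar w^{m+1}-\nabla_x\cdot(\tilde w-\tfrac12\nabla_x\cdot A)\bigr]$ relating the averaged coefficients of the joint limit $(X,H)$ to those of $X$ and $\tilde X^R$; these are Lemmas \ref{w1}--\ref{w3} of the paper and require repeated integration by parts, the symmetry of $L_0$ with respect to $\rho$ (Assumption \ref{a2}), the compatibility condition (Assumption \ref{a1}), and the solution $\phi^{m+1}$ of an auxiliary Poisson equation with source $f^{m+1}=\tfrac1\rho[\nabla_y\cdot(\tilde g\rho)-\nabla_x\cdot(f\rho)]$. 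Your proposal identifies the right ingredients but does not supply the argument that actually makes the drift of $e^{-G^{(2)}_t}$ vanish, and that argument is the heart of the theorem.
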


\begin{proof}
The proof of this theorem will be given in Section \ref{backwardproof}.
\end{proof}

Unlike the forward case, the backward thermodynamic functional $G^\epsilon_t$, the limit functional $G_t$, and its regular part $G^{(1)}_t$ are generally not martingales. In stochastic thermodynamics, most thermodynamic quantities can be represented as forward or backward thermodynamic functionals. In both the forward and backward cases, a general thermodynamic functional has a weak limit under second-order averaging, which can be decomposed into the regular part and the anomalous part. Theorems \ref{forward} and \ref{backward} reveal a highly nontrivial relationship between model simplification and martingale. In the forward case, the limit functional and its two parts are all martingales. In the backward case, while the limit functional and its regular part may not be martingales, the anomalous part is still a martingale. Therefore, our theory indicates that the anomalous part of a general thermodynamic functional is always a martingale under model simplification.

\begin{remark}
In this paper, our theorems are proved under a very general framework, which generalizes the setting in the physical literature in many ways. First, we reveal the martingale structure for a general thermodynamic functional defined as the logarithmic Radon-Nykodim derivative between the laws of the original process and a general comparable process (or its time reversal) whose drift terms $\hat{b}$ and $\hat{g}$ (or $\tilde{b}$ and $\tilde{g}$) can be arbitrarily chosen, rather than focusing on specific thermodynamic functionals such as entropy production and housekeeping heat. It can be seen from the proof that the arbitrariness of the drift terms in the comparable process greatly enhance the theoretical complexity. Second, we investigate model simplification under second-order averaging, which is physically more relevant and mathematically much more general and complicated than first-order averaging. In statistical physics, many important models such as underdamped Langevin equations are actually second-order singularly perturbed diffusions. Third, most previous papers focused on the case of diagonal diffusion matrices, while our theory can deal with the non-diagonal coupling case. In particular, our model covers the case of $h\neq 0$, which describes correlated noise terms between slow and fast components (such effect is particularly important in biochemical reaction networks).
\end{remark}

\begin{corollary}[\textbf{Integral fluctuation theorem}]
The backward functionals $G^\epsilon_t$, $G^{(1)}_t$, and $G^{(2)}_t$ satisfy the following integral fluctuation theorem:
\begin{itemize}\style
\item[(a)] For any $t\geq 0$, we have
\begin{equation*}
\Enum e^{-G^\epsilon_t} = \Enum e^{-G^{(1)}_t} = 1.
\end{equation*}
\item[(b)] For any bounded stopping time $\tau$ with respect to $\{\mathscr{F}_t\}$, we have
\begin{equation*}
\Enum e^{-G^{(2)}_\tau} = 1.
\end{equation*}
\end{itemize}
\end{corollary}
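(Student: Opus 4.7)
The plan is to dispatch part (a) by a direct measure-theoretic identity and to handle part (b) by combining the martingale property from Theorem \ref{backward}(c) with Doob's optional sampling theorem.

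For (a), I would avoid any martingale argument. By the definition of $G^\epsilon_t$, the random variable $e^{-G^\epsilon_t}$ is precisely the Radon--Nikodym derivative $d\tilde{Q}^{R,\epsilon}_t/dP^\epsilon_t$ pulled back along the trajectory $(X^\epsilon_\cdot,Y^\epsilon_\cdot)$, whose law under $\Pnum$ is exactly $P^\epsilon_t$. Therefore
\begin{equation*}
\Enum e^{-G^\epsilon_t}
= \int_{C([0,t],\Rnum^{n+m})} \frac{d\tilde{Q}^{R,\epsilon}_t}{dP^\epsilon_t}(\omega)\,dP^\epsilon_t(\omega)
= \tilde{Q}^{R,\epsilon}_t\bigl(C([0,t],\Rnum^{n+m})\bigr) = 1,
\end{equation*}
where the mutual absolute continuity of $P^\epsilon_t$ and $\tilde{Q}^{R,\epsilon}_t$ (built into the very definition of $G^\epsilon_t$) legitimises the change of measure. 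The identical computation on the reduced path space, with $(X,\tilde{X}^R)$ in place of $(X^\epsilon,\tilde{X}^{R,\epsilon})$, yields $\Enum e^{-G^{(1)}_t}=1$.

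For (b), I would invoke Theorem \ref{backward}(c), which asserts that $(e^{-G^{(2)}_t})_{0\le t\le T}$ is a continuous martingale. Since it is nonnegative and defined on the bounded interval $[0,T]$, it is uniformly integrable on that interval, so Doob's optional sampling theorem applied to the bounded stopping time $\tau\le T$ gives $\Enum e^{-G^{(2)}_\tau}=\Enum e^{-G^{(2)}_0}$. It then remains to verify that $G^{(2)}_0=0$. This will follow from the construction of the decomposition $G_t=G^{(1)}_t+G^{(2)}_t$ carried out in Section~\ref{backwardproof}: the non-martingale boundary contribution of $G_t$ (which is present precisely because the initial law of $(X^\epsilon,Y^\epsilon)$ and the time-$0$ law of the time-reversed comparable process need not coincide) is absorbed into the regular part $G^{(1)}_t$, while $G^{(2)}_t$ is identified as a pure It\^{o} integral against the driving Brownian motion and hence vanishes at $t=0$.

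The main obstacle, beyond these routine invocations, is verifying that no boundary (non-martingale) piece leaks into $G^{(2)}_t$, so that optional sampling returns exactly $1$ rather than an unidentified constant. This requires tracking the Girsanov representation of $dP^\epsilon_t/d\tilde{Q}^{R,\epsilon}_t$ through the asymptotic expansion underlying Theorem \ref{backward}, and separating the divergent and boundary-like terms (which, under Assumptions \ref{a1} and \ref{a2}, either cancel or are routed into $G^{(1)}_t$) from the genuine stochastic-integral contribution that survives in the $\epsilon\to 0$ limit and constitutes $G^{(2)}_t$. Once this bookkeeping is in place, both (a) and (b) drop out without further work.
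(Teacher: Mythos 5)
Your proposal is correct and follows essentially the same route as the paper: part (a) by the direct change-of-measure identity $\Enum\, e^{-G^\epsilon_t}=\int (d\tilde{Q}^{R,\epsilon}_t/dP^\epsilon_t)\,dP^\epsilon_t=1$, and part (b) by combining the martingale property of $e^{-G^{(2)}_t}$ from Theorem \ref{backward}(c) with optional sampling. The only small imprecision is calling $G^{(2)}_t$ a pure It\^{o} integral---it also carries a drift term equal to half the quadratic variation of its martingale part (which is exactly what makes $e^{-G^{(2)}_t}$ an exponential martingale)---but both terms vanish at $t=0$, so your conclusion $G^{(2)}_0=0$ and hence $\Enum\, e^{-G^{(2)}_\tau}=1$ stands.
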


\begin{proof}
Since $e^{-G^{(2)}_t}$ is a martingale, part (b) follows from the optional sampling theorem. We next prove part (a). By Definition \ref{backwardfun1}, we have
\begin{equation*}
\Enum e^{-G^\epsilon_t}
= \Enum\frac{d\tilde{Q}^{R,\epsilon}_t}{dP^\epsilon_t}(X^\epsilon_\cdot,Y^\epsilon_\cdot)
= \int_{C([0,t],\Rnum^{m+n})}\frac{d\tilde{Q}^{R,\epsilon}_t}{dP^\epsilon_t}(w)dP^\epsilon_t(w) = 1.
\end{equation*}
Similarly, the equality $\Enum e^{-G^{(1)}_t} = 1$ can be proved by using Definition \ref{backwardfun2}.
\end{proof}

Unlike the forward case, the limit functional $G_t$ in the backward case may not satisfy the integral fluctuation theorem because the family of random variables $(e^{-G^\epsilon_t})_{\epsilon>0}$ may not be uniformly integrable.

\section{Physical applications in the absence of odd variables}
In stochastic thermodynamics, most thermodynamic quantities can be represented as forward or backward thermodynamic functionals. Therefore, our general theory can be applied to understand the martingale structure and prove the integral fluctuation theorem for such thermodynamic quantities under second-order averaging. Two of the most important thermodynamic quantities among them are entropy production and housekeeping heat, which characterize the irreversibility of a molecular system. To define them rigorously, we need the following concepts.

\subsection{Some concepts}
In stochastic thermodynamics, there are many important processes associated with a given diffusion process $Z$. Specifically, let $Z = (Z_t)_{0\leq t\leq T}$ be an $\Rnum^m$-valued inhomogeneous diffusion process with generator
\begin{equation*}
\mathcal{A}_t = b^i(x,t)\partial_i+\frac{1}{2}a^{ij}(x,t)\partial_{ij}.
\end{equation*}
For simplicity, we assume that the drift $b = (b^i)$ and diffusion matrix $a = (a^{ij})$ are bounded and smooth, all their partial derivatives are bounded, and $a$ is uniformly elliptic.

\begin{definition}
The diffusion process with generator $\mathcal{A}_{T-t}$ and with its initial distribution being chosen as the final distribution of $Z$ (at time $T$) is called the original process $Z$ under the reversed protocol.
\end{definition}

For each $t$, let $p(x,t)$ be the probability density of $Z_t$ and let $\mathcal{B}_t$ be the adjoint operator of $\mathcal{A}_t$ with respect to the probability density $p(x,t)$ satisfying
\begin{equation*}
(\mathcal{A}_tf,g)_p = (f,\mathcal{B}_tg)_p,\;\;\;f,g\in C_b^\infty(\Rnum^m).
\end{equation*}
It is easy to check that $\mathcal{B}_t$ is given by
\begin{equation*}
\mathcal{B}_t = (-b^i+\partial_ja^{ij}+a^{ij}\log p)\partial_i+\frac{1}{2}a^{ij}\partial_{ij}.
\end{equation*}
Clearly, $\mathcal{B}_t$ is the generator of an inhomogeneous diffusion process with drift $-b+\nabla\cdot a+a\nabla\log p$ and diffusion matrix $a$. The following lemma is a classical result \cite{jiang2004mathematical}.

\begin{lemma}
The time reversal $(Z_{T-t})_{0\leq t\leq T}$ of $Z$ is a diffusion process with generator $\mathcal{B}_{T-t}$.
\end{lemma}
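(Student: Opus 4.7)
The plan is to identify the law of the time-reversed process $\tilde{Z}_t := Z_{T-t}$ via its transition density, and then verify that this density satisfies the Fokker-Planck equation associated with $\mathcal{B}_{T-t}$. First I would note that, under the boundedness, smoothness, and uniform ellipticity assumptions on $(b,a)$, the process $Z$ admits a smooth strictly positive one-time density $p(x,t)$ and a smooth strictly positive transition density $p(s,x;t,y)$ for $0\leq s<t\leq T$. By conditioning on $\tilde{Z}_s=x$ and $\tilde{Z}_t=y$, Bayes' rule gives the transition density of the reversed process as
\begin{equation*}
\tilde{p}(s,x;t,y) \;=\; \frac{p(T-t,y;\,T-s,x)\,p(y,T-t)}{p(x,T-s)},\qquad 0\leq s<t\leq T.
\end{equation*}

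Next I would check that, as a function of $(t,y)$ with $(s,x)$ fixed, $\tilde{p}$ solves the forward equation $\partial_t \tilde{p} = \mathcal{B}_{T-t}^{*}\tilde{p}$, where $\mathcal{B}_{T-t}^{*}$ denotes the formal Lebesgue adjoint of the operator $\mathcal{B}_{T-t}$ displayed in the excerpt. This is done by applying the product rule in $t$ to the right-hand side, using the backward Kolmogorov equation $\partial_{(T-t)} p(T-t,y;T-s,x) + \mathcal{A}_{T-t}\,p(T-t,y;T-s,x) = 0$ in the forward variable, and the forward Kolmogorov equation $\partial_t p(y,T-t) = -\mathcal{A}_{T-t}^{*}\,p(y,T-t)$ for the one-time density. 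A careful rearrangement, collecting the $\partial_{y_i}\log p$ terms generated when one divides by $p(y,T-t)$ to factor out the product, produces precisely the reversed drift $-b+\nabla_y\cdot a+a\,\nabla_y\log p$ together with the unchanged diffusion matrix $a$, which is exactly the coefficient pair of $\mathcal{B}_{T-t}$.

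Finally, since $\mathcal{B}_{T-t}$ has bounded smooth coefficients and uniformly elliptic diffusion matrix $a$, the corresponding martingale problem is well-posed. The finite-dimensional distributions of $\tilde{Z}$ are reconstructed from $\tilde{p}$ together with the initial law $\tilde{Z}_0 \stackrel{d}{=} Z_T$, which by construction coincides with the initial distribution specified in the definition of the reversed-protocol process. By uniqueness of the martingale problem, $\tilde{Z}$ is the unique diffusion with generator $\mathcal{B}_{T-t}$.

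The main obstacle is the algebraic step that matches $\partial_t \tilde{p}$ to $\mathcal{B}_{T-t}^{*}\tilde{p}$: one must track how the logarithmic derivative $\nabla_y \log p(y,T-t)$ interacts with the two mixed terms that arise from differentiating the quotient, and verify the cancellation that converts $+\mathcal{A}_{T-t}$ into the reversed operator $\mathcal{B}_{T-t}^{*}$. An alternative, and essentially equivalent, route is to invoke the Haussmann-Pardoux time-reversal theorem directly, which yields the reversed drift $-b+\nabla\cdot a + a\nabla\log p$ under precisely the regularity conditions assumed here; this is what makes the statement ``classical.''
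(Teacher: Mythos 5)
The paper does not actually prove this lemma: it is introduced with ``The following lemma is a classical result'' and a citation to \cite{jiang2004mathematical}, so there is no in-paper argument to compare yours against. Your sketch is precisely the standard classical proof that the citation stands in for --- Bayes' rule for the reversed transition density, $\tilde{p}(s,x;t,y)=p(T-t,y;T-s,x)\,p(y,T-t)/p(x,T-s)$, combined with the backward Kolmogorov equation in $(T-t,y)$ and the forward equation for the one-time density, or equivalently a direct appeal to the Haussmann--Pardoux time-reversal theorem --- and it produces the correct reversed drift $-b+\nabla\cdot a+a\nabla\log p$ with unchanged diffusion matrix $a$. One point to repair: your final step asserts that $\mathcal{B}_{T-t}$ has bounded smooth coefficients, but the drift contains $a\nabla\log p$ and $\nabla\log p$ is generically unbounded on $\Rnum^m$ (already for a Gaussian density it grows linearly), so well-posedness of the reversed martingale problem cannot be claimed on those grounds; it should instead be argued from local regularity and positivity of $p$ together with non-explosion, which is automatic here because the reversed process is by construction defined on all of $[0,T]$. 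With that adjustment the outline is sound, though the central algebraic cancellation converting $\mathcal{A}_{T-t}$ into $\mathcal{B}_{T-t}^{*}$ is asserted rather than carried out.
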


For each $t$, the pseudo-stationary density $\mu(x,t)$ of $Z$ is defined as the solution to the equation
\begin{gather*}
\mathcal{A}_t^*\mu = 0,\;\;\;\int_{\Rnum^m}\mu(x,t)dx = 1,
\end{gather*}
where
\begin{equation*}
\mathcal{A}_t^* = (\partial_ja^{ij}-b^i)\partial_i+\frac{1}{2}a^{ij}\partial_{ij}
+\frac{1}{2}\partial_{ij}a^{ij}-\partial_ib^i
\end{equation*}
is the adjoint operator of $\mathcal{A}_t$ with respect to the Lebesgue measure. Suppose that the pseudo-stationary density exists and is unique for each $t$. Moreover, let $\mathcal{A}^\dag_t$ be the adjoint operator of $\mathcal{A}_t$ with respect to the pseudo-stationary density $\mu(x,t)$ satisfying
\begin{equation*}
(\mathcal{A}_tf,g)_\mu = (f,\mathcal{A}^\dag_tg)_\mu,\;\;\;f,g\in C_b^\infty(\Rnum^m).
\end{equation*}
It is easy to check that $\mathcal{A}^\dag_t$ is given by
\begin{equation*}
\mathcal{A}^\dag_t = (-b^i+\partial_ja^{ij}+a^{ij}\log\mu)\partial_i+\frac{1}{2}a^{ij}\partial_{ij}.
\end{equation*}
Clearly, $\mathcal{A}^\dag_t$ is the generator of an inhomogeneous diffusion process with drift $-b+\nabla\cdot a+a\nabla\log\mu$ and diffusion matrix $a$.

\begin{definition}\label{adjoint}
The diffusion process with generator $\mathcal{A}^\dag_t$ and with its initial distribution being chosen as the initial distribution of $Z$ is called the adjoint process of $Z$.
\end{definition}

It is easy to see that when $Z$ is homogeneous and stationary, its time-reversed process and adjoint process are exactly the same.

\subsection{Martingale structure for entropy production}
In statistical physics, the deviation of a molecular system from equilibrium is usually characterized by the concept of entropy production. From the trajectory perspective, the entropy production of the original and reduced processes is defined rigorously as follows \cite{jiang2004mathematical, seifert2005entropy}.

\begin{definition}
Let $P^\epsilon_T$ be the law of the original process $(X^\epsilon,Y^\epsilon)$ and let $Q^{R,\epsilon}_T$ be the law of the time reversal of $(X^\epsilon,Y^\epsilon)$ under the reversed protocol. Then the entropy production of the original process is defined as
\begin{equation*}
S^\epsilon_{tot}(T) = \log\frac{dP^\epsilon_T}{dQ^{R,\epsilon}_T}(X^\epsilon_\cdot,Y^\epsilon_\cdot),
\end{equation*}
if $P^\epsilon_T$ and $Q^{R,\epsilon}_T$ are absolutely continuous with respect to each other. The entropy production $S^{(1)}_{tot}(T)$ of the averaged process $X$ can be defined similarly.
\end{definition}

Under the reversed protocol, $(X^\epsilon,Y^\epsilon)$ becomes a diffusion process with drift $B(x,y,T-t)$ and diffusion matrix $D(x,y,T-t)$. This is a special case of the comparable process \eqref{comparable2} in the backward case with $\tilde{b} = b$ and $\tilde{g} = g$. Therefore, entropy production is a kind of backward thermodynamic functional. The next theorem follows directly from Theorem \ref{backward}.

\begin{theorem}\label{total}
Suppose that Assumption \ref{a3} and the following two compatible conditions are satisfied:
\begin{gather*}
2f = \nabla_y\cdot h+h\nabla_y\log\rho,\\
2c = \nabla_y\cdot\alpha+\alpha\nabla_y\log\rho.
\end{gather*}
Then the following conclusions hold:
\begin{itemize}\style
\item[(a)] For any $t\geq 0$, both $S^\epsilon_{tot}(t)$ and $S^{(1)}_{tot}(t)$ are well defined.
\item[(b)] The pair of processes $(X^\epsilon_t,S^\epsilon_{tot}(t))_{0\leq t\leq T}$ converges weakly in $C([0,T],\Rnum^{m+1})$ to another pair of processes $(X_t,S_{tot}(t))_{0\leq t\leq T}$.
\item[(c)] If we decompose the limit functional into $S_{tot}(t) = S^{(1)}_{tot}(t)+S^{(2)}_{tot}(t)$, then $e^{-S^{(2)}_{tot}(t)}$ is a martingale.
\item[(d)] For any $t\geq 0$ and bounded stopping time $\tau$ with respect to $\{\mathscr{F}_t\}$, we have the following integral fluctuation theorem:
    \begin{equation*}
    \Enum e^{-S^\epsilon_{tot}(t)} = \Enum e^{-S^{(1)}_{tot}(t)} = \Enum e^{-S^{(2)}_{tot}(\tau)} = 1.
    \end{equation*}
\end{itemize}
\end{theorem}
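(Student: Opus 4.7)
The plan is to recognize the entropy production as a specific instance of the backward thermodynamic functional studied in Theorem \ref{backward}, and then simply verify that the hypotheses of that theorem reduce to the compatible conditions assumed here. This turns the statement into a near-immediate corollary.

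First, I would make the identification. The law $Q^{R,\epsilon}_T$ is defined as the law of $(X^\epsilon, Y^\epsilon)$ run under the reversed protocol, i.e.\ the process obtained from the SDE \eqref{secondorder} by replacing every time-dependent coefficient $b(\cdot,\cdot,t),\ f(\cdot,\cdot,t),\ g(\cdot,\cdot,t),\ c(\cdot,\cdot,t),\ \sigma(\cdot,\cdot,t),\ \eta(\cdot,\cdot,t)$ by its value at $T-t$, and then taking the time-reversal. Comparing with \eqref{comparable2}, this is exactly the time reversal of the comparable process $(\tilde{X}^{R,\epsilon},\tilde{Y}^{R,\epsilon})$ under the choice $\tilde{b}=b$, $\tilde{f}=f$, $\tilde{g}=g$. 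Moreover, taking $(\tilde{X}^{R,\epsilon},\tilde{Y}^{R,\epsilon})$ to start from the final distribution of $(X^\epsilon,Y^\epsilon)$ at time $T$ is precisely the convention required in Definition \ref{backwardfun1}. Consequently, $S^\epsilon_{tot}(t) = G^\epsilon_t$ and $S^{(1)}_{tot}(t) = G^{(1)}_t$ as processes.

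Next, I would verify the three assumptions needed for Theorem \ref{backward}. Assumption \ref{a3} is granted directly by the hypothesis. For Assumption \ref{a1}, substitute $\tilde{f}=f$ into $f+\tilde{f}=\nabla_y\cdot h + h\nabla_y\log\rho$; this reduces exactly to the first compatible condition $2f = \nabla_y\cdot h + h\nabla_y\log\rho$ stated in the theorem. Assumption \ref{a2} is literally the second compatible condition. The regularity conditions (a)-(d) for the comparable process are inherited automatically from those of the original process because the coefficients are the same (evaluated at $T-t$), and the centering condition \eqref{centering} for $\tilde{f}=f$ holds because it holds for $f$. Hence all hypotheses of Theorem \ref{backward} are met.

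Parts (a), (b), and (c) now follow verbatim from the corresponding parts of Theorem \ref{backward}, with $G^\epsilon_t, G^{(1)}_t, G^{(2)}_t$ replaced by $S^\epsilon_{tot}(t), S^{(1)}_{tot}(t), S^{(2)}_{tot}(t)$. For part (d), the first two equalities $\Enum e^{-S^\epsilon_{tot}(t)} = \Enum e^{-S^{(1)}_{tot}(t)} = 1$ are part (a) of the Integral Fluctuation Theorem corollary following Theorem \ref{backward}, and the equality $\Enum e^{-S^{(2)}_{tot}(\tau)} = 1$ for bounded stopping times $\tau$ is part (b) of that corollary, which is itself a consequence of the martingale property in (c) together with the optional sampling theorem.

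Because the conclusion is essentially a specialization, there is no real obstacle beyond the bookkeeping above; all the analytic difficulty (weak convergence of the pair $(X^\epsilon_t,G^\epsilon_t)$, the non-triviality of the boundary terms in the backward case, the martingale property of the anomalous part) is already absorbed into Theorem \ref{backward}. The only point meriting a moment's care is confirming that the identification $\tilde{f}=f$ is forced by the definition of ``reversed protocol'' (rather than being an extra assumption), so that the two compatible conditions in the hypothesis are genuinely the translations of Assumptions \ref{a1} and \ref{a2} and not strictly stronger statements.
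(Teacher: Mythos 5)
Your proposal is correct and follows exactly the paper's route: the paper likewise identifies the reversed-protocol process as the comparable process \eqref{comparable2} with $\tilde{b}=b$, $\tilde{f}=f$, $\tilde{g}=g$, notes that the two compatible conditions are then precisely Assumptions \ref{a1} and \ref{a2}, and deduces the theorem directly from Theorem \ref{backward} and its integral fluctuation theorem corollary. No gaps; your extra care about the centering/regularity conditions for the comparable process and about $\tilde{f}=f$ being forced by the reversed protocol is sound bookkeeping that the paper leaves implicit.
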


\begin{remark}
Recently, it has been reported that when the original process is homogeneous and stationary, its entropy production is a martingale \cite{neri2017statistics}. Here we prove that under model simplification, the anomalous part of entropy production is always a martingale and hence satisfies the integral fluctuation theorem, whether the original process is homogeneous and stationary or not.
\end{remark}

\subsection{Martingale structure for housekeeping heat}
From the trajectory perspective, the housekeeping heat of the original and reduced processes is defined rigorously as follows.

\begin{definition}
Let $P^\epsilon_T$ be the law of the original process $(X^\epsilon,Y^\epsilon)$ and let $\hat{P}^\epsilon_{T}$ be the law of its adjoint process. Then the housekeeping heat, also called adiabatic entropy production, of the original process is defined as
\begin{equation*}
S^\epsilon_{hk}(T) = \log\frac{dP^\epsilon_T}{d\hat{P}^\epsilon_T}(X^\epsilon_\cdot,Y^\epsilon_\cdot),
\end{equation*}
if $P^\epsilon_T$ and $\hat{P}^\epsilon_T$ are absolutely continuous with respect to each other. The housekeeping heat $S^{(1)}_{hk}(T)$ of the averaged process $X$ can be defined similarly.
\end{definition}

For each $t$, let $\mu^\epsilon(x,y,t)$ be the the pseudo-stationary density of $(X^\epsilon,Y^\epsilon)$. Suppose that the pseudo-stationary density exists and is unique. As discussed earlier, the adjoint process of $(X^\epsilon,Y^\epsilon)$ is a diffusion process with drift
\begin{equation*}
\hat{B}(x,y,t) =
\begin{pmatrix}
\hat{b}(x,y,t)+\epsilon^{-1}\hat{f}(x,y,t) \\
\epsilon^{-1}\hat{g}(x,y,t)+\epsilon^{-2}\hat{c}(x,y,t)
\end{pmatrix}
\end{equation*}
and diffusion matrix $D(x,y,t)$, where
\begin{gather*}
\hat{b} = -b+\nabla_x\cdot a+a\nabla_x\log\mu^\epsilon,\\
\hat{f} = -f+\nabla_y\cdot h+h\nabla_y\log\mu^\epsilon,\\
\hat{g} = -g+\nabla_x\cdot h'+h'\nabla_x\log\mu^\epsilon,\\
\hat{c} = -c+\nabla_y\cdot\alpha+\alpha\nabla_y\log\mu^\epsilon.
\end{gather*}
Under mild conditions, by using the matched asymptotic expansions of singularly perturbed diffusion processes \cite{khasminskii1996transition, khasminskii2004averaging, khasminskii2005limit}, it can be proved that
\begin{gather*}
\mu^\epsilon(x,y,t) = \mu(x,t)\rho(x,y,t)+O(\epsilon),\\
\nabla_x\log\mu^\epsilon(x,y,t) = \nabla_x\log\mu(x,t)+\nabla_x\log\rho(x,y,t)+O(\epsilon),\\
\nabla_y\log\mu^\epsilon(x,y,t) = \nabla_y\log\rho(x,y,t)+O(\epsilon),
\end{gather*}
where $\mu(x,t)$ is the pseudo-stationary density of the averaged process $X$ which is also assumed to exist and be unique for each $t$. Therefore, the drift coefficients have the following asymptotic expansions:
\begin{gather*}
\hat{b} = -b+\nabla_x\cdot a+a\nabla_x\log\mu+a\nabla_x\log\rho+O(1),\\
\hat{f} = -f+\nabla_y\cdot h+h\nabla_y\log\rho,\\
\hat{g} = -g+\nabla_x\cdot h'+h'\nabla_x\log\mu+h'\nabla_x\log\rho+O(1),\\
\hat{c} = -c+\nabla_y\cdot\alpha+\alpha\nabla_y\log\rho,
\end{gather*}
where we have moved the $O(\epsilon)$ terms of $\hat{f}$ and $\hat{c}$ into $\hat{b}$ and $\hat{g}$, respectively. If Assumption \ref{a2} is satisfied and and if we ignore the $O(1)$ terms of $\hat{b}$ and $\hat{g}$ (see Remark \ref{infinitesimal} for its justification), then we have $\hat{c} = c$ and thus the adjoint process of $(X^\epsilon,Y^\epsilon)$ is a special case of the comparable process \eqref{comparable1} in the forward case. Therefore, housekeeping heat is a kind of forward thermodynamic functional. The next theorem follows directly from Theorem \ref{forward}.

\begin{theorem}\label{adiabatic}
Suppose that the following two compatible conditions are satisfied:
\begin{gather*}
2f = \nabla_y\cdot h+h\nabla_y\log\rho,\\
2c = \nabla_y\cdot\alpha+\alpha\nabla_y\log\rho.
\end{gather*}
Then the following conclusions hold:
\begin{itemize}\style
\item[(a)] For any $t\geq 0$, both $S^\epsilon_{hk}(t)$ and $S^{(1)}_{hk}(t)$ are well defined.
\item[(b)] The pair of processes $(X^\epsilon_t,S^\epsilon_{hk}(t))_{0\leq t\leq T}$ converges weakly in $C([0,T],\Rnum^{m+1})$ to another pair of processes $(X_t,S_{hk}(t))_{0\leq t\leq T}$.
\item[(c)] Both $e^{-S^\epsilon_{hk}(t)}$ and $e^{-S_{hk}(t)}$ are martingales.
\item[(d)] If we decompose the limit functional into $S_{hk}(t) = S^{(1)}_{hk}(t)+S^{(2)}_{hk}(t)$, then $e^{-S^{(1)}_{hk}(t)}$ and $e^{-S^{(2)}_{hk}(t)}$ are orthogonal martingales.
\item[(e)] For any bounded stopping time $\tau$ with respect to $\{\mathscr{F}_t\}$, we have the following integral fluctuation theorem:
    \begin{equation*}
    \Enum e^{-S^\epsilon_{hk}(\tau)} = \Enum e^{-S_{hk}(\tau)} = \Enum e^{-S^{(1)}_{hk}(\tau)} = \Enum e^{-S^{(2)}_{hk}(\tau)} = 1.
    \end{equation*}
\end{itemize}
\end{theorem}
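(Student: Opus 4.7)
The plan is to reduce Theorem \ref{adiabatic} directly to Theorem \ref{forward} (and its fluctuation theorem corollary) by showing that, under the two stated compatibility conditions, the adjoint process of $(X^\epsilon,Y^\epsilon)$ falls into the class of forward comparable processes of the form \eqref{comparable1}, up to $o(1)$ perturbations in the non-singular drift components of the type addressed by Remark \ref{infinitesimal}. The heart of the argument is a verification that the adjoint drift, rewritten in the $\epsilon^{-1}$--$\epsilon^{-2}$ format, has the \emph{same} fast and ultra-fast coefficients $f$ and $c$ as the original process.

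First, I would justify the matched-asymptotic expansion $\mu^\epsilon(x,y,t)=\mu(x,t)\rho(x,y,t)+O(\epsilon)$, together with the corresponding expansions $\nabla_x\log\mu^\epsilon=\nabla_x\log\mu+\nabla_x\log\rho+O(\epsilon)$ and $\nabla_y\log\mu^\epsilon=\nabla_y\log\rho+O(\epsilon)$, that the paper has already quoted. This is a standard singular-perturbation calculation: substitute an ansatz $\mu^\epsilon=\mu_0+\epsilon\mu_1+\cdots$ into the stationary Kolmogorov equation $(L^\epsilon)^*\mu^\epsilon=0$, collect powers of $\epsilon$, solve $L_0^*\mu_0=0$ to obtain $\mu_0=\mu(x,t)\rho(x,y,t)$ via uniqueness of the pseudo-stationary density, and check the Fredholm solvability condition at the next order using the centering condition \eqref{centering}. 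Second, I would plug these expansions into the explicit formulas for $\hat b,\hat f,\hat g,\hat c$ given in the paper. Using the two compatibility conditions $2f=\nabla_y\!\cdot h+h\nabla_y\log\rho$ and $2c=\nabla_y\!\cdot\alpha+\alpha\nabla_y\log\rho$, a short calculation gives $\hat f=f+O(\epsilon)$ and $\hat c=c+O(\epsilon)$, while $\hat b$ and $\hat g$ equal explicit smooth expressions plus $o(1)$.

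Third, I would rearrange the drift of the adjoint process so that the coefficient of $\epsilon^{-1}$ in the slow equation is exactly $f$ and the coefficient of $\epsilon^{-2}$ in the fast equation is exactly $c$; the $O(\epsilon)$ remainders from $\hat f$ and $\hat c$ become additive $o(1)$ contributions to the new slow drifts, which I would absorb into effective functions $\tilde b=\hat b+o(1)$ and $\tilde g=\hat g+o(1)$. At this point, the adjoint process coincides with a comparable process of the form \eqref{comparable1} with drifts $(\tilde b,\tilde g)$, and by Remark \ref{infinitesimal} the $o(1)$ perturbations do not affect the conclusions of Theorem \ref{forward}. Parts (a)--(d) of Theorem \ref{adiabatic} then follow by applying Theorem \ref{forward} verbatim with $F^\epsilon_t=S^\epsilon_{hk}(t)$, $F^{(1)}_t=S^{(1)}_{hk}(t)$, and $F^{(2)}_t=S^{(2)}_{hk}(t)$. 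Part (e), the integral fluctuation theorem, is then an immediate consequence of the optional sampling theorem applied to each of the four nonnegative martingales, exactly as in the corollary to Theorem \ref{forward}.

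The main obstacle I anticipate is the rigorous control of the asymptotic expansion of $\mu^\epsilon$ and of the resulting $O(\epsilon)$ remainders in $\hat f$ and $\hat c$: one needs uniform-in-$(x,t)$ bounds so that the induced perturbations of $(\hat b,\hat g)$ are genuinely $o(1)$ in the sense required by Remark \ref{infinitesimal}, rather than merely $O(1)$ pointwise. Once the uniformity of the expansion is in hand --- which requires smoothness and uniform ellipticity of $L_0$ plus the regularity conditions (a)--(c) on the coefficients --- the rest of the argument is a purely algebraic reduction to the forward framework already established, so no new probabilistic input beyond Theorem \ref{forward} is needed.
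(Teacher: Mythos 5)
Your proposal is correct and takes essentially the same route as the paper: the paper likewise identifies the adjoint process, via the matched asymptotic expansion of $\mu^\epsilon$ and the two compatibility conditions, as a forward comparable process of the form \eqref{comparable1} up to drift perturbations absorbed into $\hat{b}$ and $\hat{g}$ and justified by Remark \ref{infinitesimal}, and then deduces parts (a)--(d) directly from Theorem \ref{forward} and part (e) from optional sampling. Your explicit concern about uniform control of the expansion remainders so that they are genuinely of the type covered by Remark \ref{infinitesimal} is, if anything, slightly more careful than the paper's own treatment, which labels those absorbed terms $O(1)$.
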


\begin{remark}
Recently, it has been reported that the housekeeping heat of the original process is a martingale \cite{chetrite2019martingale}. Here we prove that under model simplification, the regular and anomalous parts of housekeeping heat are always orthogonal martingales and hence satisfy the integral fluctuation theorem.
\end{remark}

\section{Physical applications in the presence of odd variables}\label{odd}

\subsection{General theorem}
In statistical physics, physical quantities are often divided into even and odd variables. For example, position, kinetic energy, and potential energy are even variables, while velocity and Lorentz force are odd variables. Intuitively, if a smooth function $z(t)$ describes the trajectory of a classical particle, then its time reversal should be $z(T-t)$. The velocity of the original trajectory is $\dot{z}(t)$ and the velocity of the time-reversed trajectory is
\begin{equation*}
\frac{d}{dt}z(T-t) = -\dot{z}(T-t).
\end{equation*}
This indicates that the time reversal of the velocity process $\dot{z}(t)$ should be defined as $-\dot{z}(T-t)$ rather than $\dot{z}(T-t)$. Such kind of variables are called odd variables. Specifically, we give the following definition.

\begin{definition}\label{evenodd}
Let $Z = (Z_t)_{0\leq t\leq T}$ be a one-dimensional stochastic process.
\begin{itemize}\style
\item[(a)] If the time reversal of $Z$ is defined as $(Z_{T-t})_{0\leq t\leq T}$, then $Z$ is called an even variable.
\item[(b)] If the time reversal of $Z$ is defined as $(-Z_{T-t})_{0\leq t\leq T}$, then $Z$ is called an odd variable.
\end{itemize}
\end{definition}

Since the definition of forward thermodynamic functionals is irrelevant of time reversal, the results in the forward case will not change when odd variables are considered. Unlike the forward case, the definition of backward thermodynamic functionals depends on the time reversal of the comparable process. Therefore, all the results in the backward case discussed earlier only hold for systems whose components are all even variables and thus should be modified in the presence of odd variables.

From now on, we focus on the general case when each component of the original process $(X^\epsilon,Y^\epsilon)$ can be either an even or an odd variable. In the sense of Definition \ref{evenodd}, the time reversal of the original process will be denoted by $(\delta X^\epsilon_{T-t},\delta Y^\epsilon_{T-t})_{0\leq t\leq T}$, where
\begin{equation*}
\delta = (\delta_{x_1},\cdots,\delta_{x_m},\delta_{y_1},\cdots,\delta_{y_n})
\end{equation*}
with each component of $\delta$ taking the value of $1$ or $-1$, corresponding to an even or an odd variable respectively.

In the backward case, let $(\tilde{X}^{R,\epsilon},\tilde{Y}^{R,\epsilon})$ be the comparable process given in \eqref{comparable2}. Let $P^\epsilon_T$ be the law of the original process $(X^\epsilon,Y^\epsilon)$ and let $\tilde{Q}^{R,\epsilon}_T$ be the law of the time reversal of the comparable process in the sense of Definition \ref{evenodd}. In other words, $\tilde{Q}^{R,\epsilon}_T$ is the law of the time reversal of $(\delta\tilde{X}^{R,\epsilon},\delta\tilde{Y}^{R,\epsilon})$ in the usual sense. The backward thermodynamic functional of $(X^\epsilon,Y^\epsilon)$ with respect to $(\tilde{X}^{R,\epsilon},\tilde{Y}^{R,\epsilon})$ is still defined as
\begin{equation*}
G^\epsilon_T = \log\frac{dP^\epsilon_T}{d\tilde{Q}^{R,\epsilon}_T}(X^\epsilon_\cdot,Y^\epsilon_\cdot).
\end{equation*}
It then follows from Lemma \ref{convergence} that $X^\epsilon\Rightarrow X$ and $\tilde{X}^{R,\epsilon}\Rightarrow\tilde{X}^R$ in $C([0,T],\Rnum^m)$ as $\epsilon\rightarrow 0$. Similarly, let $P_T$ be the law of the averaged process $X$ and let $\tilde{Q}^R_T$ be the law of the time reversal of $\tilde{X}^R$ in the sense of Definition \ref{evenodd}. The backward thermodynamic function of $X$ with respect to $\tilde{X}^R$ is still defined as
\begin{equation*}
G^{(1)}_T = \log\frac{dP_T}{d\tilde{Q}^R_T}(X_\cdot).
\end{equation*}
Clearly, $(\delta\tilde{X}^{R,\epsilon},\delta\tilde{Y}^{R,\epsilon})$ is a diffusion process with drift
\begin{equation*}
\tilde{B}_\delta =
\begin{pmatrix}
\tilde{b}_\delta(x,y,T-t)+\epsilon^{-1}\tilde{f}_\delta(x,y,T-t) \\
\epsilon^{-1}\tilde{g}_\delta(x,y,T-t)+\epsilon^{-2}c_\delta(x,y,T-t)
\end{pmatrix}.
\end{equation*}
and diffusion matrix
\begin{equation*}
D_\delta =
\begin{pmatrix}
a_\delta(x,y,T-t) & \epsilon^{-1}h_\delta(x,y,T-t) \\
\epsilon^{-1}h_\delta'(x,y,T-t) & \epsilon^{-2}\alpha_\delta(x,y,T-t)
\end{pmatrix},
\end{equation*}
where
\begin{gather*}
\tilde{b}^i_\delta(x,y,t) = \delta_{x_i}\tilde{b}^i(\delta x,\delta y,t),\;\;\;
\tilde{f}^i_\delta(x,y,t) = \delta_{x_i}\tilde{f}^i(\delta x,\delta y,t),\\
\tilde{g}^i_\delta(x,y,t) = \delta_{y_i}\tilde{g}^i(\delta x,\delta y,t),\;\;\;
c^i_\delta(x,y,t) = \delta_{y_i}c^i(\delta x,\delta y,t),\\
a^{ij}_\delta(x,y,t) = \delta_{x_i}\delta_{x_j}a^{ij}(\delta x,\delta y,t),\\
h^{ij}_\delta(x,y,t) = \delta_{x_i}\delta_{y_j}h^{ij}(\delta x,\delta y,t),\\
\alpha^{ij}_\delta(x,y,t) = \delta_{y_i}\delta_{y_j}\alpha^{ij}(\delta x,\delta y,t).
\end{gather*}
To make $(\delta\tilde{X}^{R,\epsilon},\delta\tilde{Y}^{R,\epsilon})$ have the form of the comparable process \eqref{comparable2}, we need to impose the following compatible conditions.

\begin{assumption}\label{a4}
\begin{equation*}
c_\delta = c,\;\;\;a_\delta = a,\;\;\;h_\delta = h,\;\;\;\alpha_\delta = \alpha.
\end{equation*}
\end{assumption}

In fact, this assumption guarantees that the laws $P^\epsilon_T$ and $\tilde{Q}^{R,\epsilon}_T$ are absolutely continuous with respect to each other. In most cases of physical interests, we have $h = 0$ (which means that the noise terms of fast and slow components are not correlated), the slow components are all even variables, and the fast components are all odd variables \cite{spinney2012entropy, lee2013fluctuation, ge2014time}. In this case, we automatically have
\begin{equation}\label{trivial}
a_\delta = a,\;\;\;h_\delta = h,\;\;\;\alpha_\delta = \alpha.
\end{equation}

\begin{theorem}\label{backwardodd}
In the presence of odd variables, suppose that Assumptions \ref{a3} and \ref{a4} and the following two compatible conditions are satisfied:
\begin{gather*}
f+\tilde{f}_\delta = \nabla_y\cdot h+h\nabla_y\log\rho,\\
2c = \nabla_y\cdot\alpha+\alpha\nabla_y\log\rho.
\end{gather*}
Then the conclusions in Theorem \ref{backward} hold.
\end{theorem}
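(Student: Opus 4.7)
The plan is to reduce Theorem \ref{backwardodd} to Theorem \ref{backward} by treating the sign-flipped process $(\delta\tilde{X}^{R,\epsilon},\delta\tilde{Y}^{R,\epsilon})$ as a new comparable process of the type studied in Section 4. The first step is to verify that under Assumption \ref{a4}, $(\delta\tilde{X}^{R,\epsilon},\delta\tilde{Y}^{R,\epsilon})$ itself solves an SDE of the form \eqref{comparable2}, but with drift functions $\tilde{b}_\delta,\tilde{f}_\delta,\tilde{g}_\delta$ in place of $\tilde{b},\tilde{f},\tilde{g}$ and with the \emph{same} fast drift $c$ and diffusion coefficients $\sigma,\eta$ as those appearing in the original process. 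This is a routine It\^o change-of-variables computation; the point of Assumption \ref{a4} is precisely that $c_\delta=c$, $a_\delta=a$, $h_\delta=h$, $\alpha_\delta=\alpha$, which guarantees that only the drift terms pick up the $\delta$-twist and that the fast dynamics of the twisted comparable has the same pseudo-stationary density $\rho$ as the original.

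Once this reformulation is in place, the odd-variable time reversal of $(\tilde{X}^{R,\epsilon},\tilde{Y}^{R,\epsilon})$ whose law defines $\tilde{Q}^{R,\epsilon}_T$ coincides with the \emph{usual} time reversal of $(\delta\tilde{X}^{R,\epsilon},\delta\tilde{Y}^{R,\epsilon})$. Therefore $G^\epsilon_T$ in the odd-variable setting is nothing but the backward thermodynamic functional in the sense of Definition \ref{backwardfun1} attached to the twisted comparable. The next step is to check that this twisted comparable satisfies all the hypotheses of Theorem \ref{backward}. The regularity conditions (a)--(c) are preserved under $\delta$-composition, since composing with $\delta$ sends bounded smooth functions with bounded derivatives to themselves and leaves $c,\alpha$ (and hence $\rho$) untouched. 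For the centering condition (d) on $\tilde{f}_\delta$, the first compatible condition of Theorem \ref{backwardodd} combined with $\int f\rho\,dy=0$ yields
\begin{equation*}
\int_{\Rnum^n}\tilde{f}_\delta(x,y,t)\,\rho(x,y,t)\,dy=\int_{\Rnum^n}\nabla_y\cdot(h\rho)\,dy=0
\end{equation*}
via integration by parts. The two compatible conditions of Theorem \ref{backwardodd} are then precisely Assumptions \ref{a1} and \ref{a2} reread for the twisted comparable, and Assumption \ref{a3} is assumed directly.

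With all hypotheses in force, I would apply Theorem \ref{backward} to the twisted comparable to obtain conclusions (a)--(c) of Theorem \ref{backwardodd}. The averaged slow drift and diffusion of $\delta\tilde{X}^{R,\epsilon}$ are given by Lemma \ref{convergence} with $\tilde\phi$ replaced by the Poisson solution $\tilde\phi_\delta$ associated with $\tilde{f}_\delta$, and the analog of Proposition \ref{same} (via an identical integration-by-parts argument) yields $A^{ij}=\tilde{A}^{ij}_\delta$, which is needed to ensure that the backward functional does not diverge as $\epsilon\to 0$. The main obstacle is careful bookkeeping: one must verify that every $\delta$-induced sign flip propagates consistently through the SDE, the generator $L_0$ of the fast dynamics, the pseudo-stationary density $\rho$, and the Poisson equation for $\tilde\phi_\delta$, so that the argument underlying Theorem \ref{backward} applies verbatim to the twisted comparable. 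Once that bookkeeping is complete, no new analytic input is required beyond the even-variable case.
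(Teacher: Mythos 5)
Your proposal is correct and follows essentially the same route as the paper: the paper gives no separate proof of this theorem, relying exactly on the observation (made in the text preceding Assumption \ref{a4}) that under $c_\delta=c$, $a_\delta=a$, $h_\delta=h$, $\alpha_\delta=\alpha$ the sign-flipped process $(\delta\tilde{X}^{R,\epsilon},\delta\tilde{Y}^{R,\epsilon})$ is a comparable process of the form \eqref{comparable2} with $\tilde{b},\tilde{f},\tilde{g}$ replaced by $\tilde{b}_\delta,\tilde{f}_\delta,\tilde{g}_\delta$, so that the two compatible conditions become Assumptions \ref{a1} and \ref{a2} for this comparable process and Theorem \ref{backward} applies verbatim. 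Your additional checks (the centering condition for $\tilde{f}_\delta$ via integration by parts, the invariance of $L_0$ and $\rho$, and the analog of Proposition \ref{same}) are exactly the bookkeeping the paper leaves implicit.
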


\subsection{Martingale structure for entropy production}
Let $P^\epsilon_T$ be the law of the original process $(X^\epsilon,Y^\epsilon)$ and let $Q^{R,\epsilon}_T$ be the law of the time reversal in the sense of Definition \ref{evenodd} of $(X^\epsilon,Y^\epsilon)$ under the reversed protocol. In other words, $Q^{R,\epsilon}_T$ is the law of the time reversal in the usual sense of $(\delta X^\epsilon,\delta Y^\epsilon)$ under the reversed protocol. In the presence of odd variables, the entropy production of the original process is still defined as
\begin{equation*}
S^\epsilon_{tot}(T) = \log\frac{dP^\epsilon_T}{dQ^{R,\epsilon}_T}(X^\epsilon_\cdot,Y^\epsilon_\cdot).
\end{equation*}
The entropy production $S^{(1)}_{tot}(T)$ of the averaged process $X$ can be defined similarly. Under the reversed protocol, $(\delta X^\epsilon,\delta Y^\epsilon)$ becomes a diffusion process with drift
\begin{equation*}
B_\delta =
\begin{pmatrix}
b_\delta(x,y,T-t)+\epsilon^{-1}f_\delta(x,y,T-t) \\
\epsilon^{-1}g_\delta(x,y,T-t)+\epsilon^{-2}c_\delta(x,y,T-t)
\end{pmatrix}
\end{equation*}
and diffusion matrix $D_\delta$, where
\begin{gather*}
b^i_\delta(x,y,t) = \delta_{x_i}b^i(\delta x,\delta y,t),\;\;\;
f^i_\delta(x,y,t) = \delta_{x_i}f^i(\delta x,\delta y,t),\\
g^i_\delta(x,y,t) = \delta_{y_i}g^i(\delta x,\delta y,t),\;\;\;
c^i_\delta(x,y,t) = \delta_{y_i}c^i(\delta x,\delta y,t).
\end{gather*}
If Assumption \ref{a4} is satisfied, then we have $c_\delta = c$ and $D_\delta = D$, and thus is a special case of the comparable process \eqref{comparable2} in the backward case with $\tilde{b} = b_\delta$ and $\tilde{g} = g_\delta$. The next theorem follows directly from Theorem \ref{backwardodd}.

\begin{theorem}
In the presence of odd variables, suppose that Assumptions \ref{a3} and \ref{a4} and the following two compatible conditions are satisfied:
\begin{gather*}
f+f_\delta = \nabla_y\cdot h+h\nabla_y\log\rho,\\
2c = \nabla_y\cdot\alpha+\alpha\nabla_y\log\rho.
\end{gather*}
Then the conclusions in Theorem \ref{total} hold.
\end{theorem}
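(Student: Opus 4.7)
The plan is to reduce this theorem to an immediate application of Theorem \ref{backwardodd} by recognizing entropy production as a specific instance of the backward thermodynamic functional framework. First, I would identify the comparable process underlying the entropy production definition: the process $(X^\epsilon,Y^\epsilon)$ under the reversed protocol corresponds exactly to the SDE \eqref{comparable2} with the choices $\tilde{b}=b$, $\tilde{f}=f$, and $\tilde{g}=g$. With this identification, $S^\epsilon_{tot}(T)$ coincides with $G^\epsilon_T$ for this particular comparable process, and similarly $S^{(1)}_{tot}(T)$ coincides with $G^{(1)}_T$.

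Next, I would verify that the hypotheses of Theorem \ref{backwardodd} are met under our assumptions. Since $\tilde{f}=f$, the definition of the $\delta$-conjugate function gives $\tilde{f}^i_\delta(x,y,t)=\delta_{x_i}\tilde{f}^i(\delta x,\delta y,t)=\delta_{x_i}f^i(\delta x,\delta y,t)=f^i_\delta(x,y,t)$, so $\tilde{f}_\delta=f_\delta$. The first compatible condition required by Theorem \ref{backwardodd}, namely $f+\tilde{f}_\delta=\nabla_y\cdot h+h\nabla_y\log\rho$, therefore reduces to $f+f_\delta=\nabla_y\cdot h+h\nabla_y\log\rho$, which is precisely the first hypothesis of our theorem. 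The second compatible condition $2c=\nabla_y\cdot\alpha+\alpha\nabla_y\log\rho$ and Assumptions \ref{a3} and \ref{a4} are assumed outright. Thus all hypotheses of Theorem \ref{backwardodd} are satisfied, and conclusions (a), (b), (c) of Theorem \ref{total} follow immediately.

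It remains to establish the integral fluctuation theorem in part (d). For the first two equalities, I would argue exactly as in the proof of the backward corollary: by the Radon-Nikodym definition,
\begin{equation*}
\Enum e^{-S^\epsilon_{tot}(t)} = \int_{C([0,t],\Rnum^{m+n})}\frac{dQ^{R,\epsilon}_t}{dP^\epsilon_t}(w)\,dP^\epsilon_t(w)=1,
\end{equation*}
and analogously $\Enum e^{-S^{(1)}_{tot}(t)}=1$. For the stopping-time equality $\Enum e^{-S^{(2)}_{tot}(\tau)}=1$, I would invoke the optional sampling theorem applied to the martingale $e^{-S^{(2)}_{tot}(t)}$ provided by conclusion (c), using boundedness of $\tau$ to ensure applicability.

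There is no serious obstacle here beyond a careful bookkeeping check. The only subtle point is confirming that the $\delta$-transformation applied to the drift coefficient $f$ for the specific choice $\tilde{f}=f$ indeed yields $f_\delta$ (which is trivial but essential), and that Assumption \ref{a4} guarantees absolute continuity between $P^\epsilon_T$ and $Q^{R,\epsilon}_T$ so that the Radon-Nikodym derivative defining $S^\epsilon_{tot}$ is well-posed. Everything else is a direct appeal to prior results in the paper.
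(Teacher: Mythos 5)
Your proposal is correct and follows essentially the same route as the paper: the paper likewise obtains this theorem by recognizing entropy production in the presence of odd variables as the backward functional of Theorem \ref{backwardodd} with the reversed-protocol process as the comparable process, checking that the stated compatible conditions are exactly the specialized hypotheses, and then getting part (d) from the Radon--Nikodym normalization together with the optional sampling theorem applied to the martingale $e^{-S^{(2)}_{tot}(t)}$. The only (cosmetic) divergence is in the bookkeeping of the identification: the paper writes the comparable process as having drift coefficients $b_\delta$, $g_\delta$ (i.e., it names the $\delta$-flipped reversed-protocol process), whereas your choice $\tilde{b}=b$, $\tilde{f}=f$, $\tilde{g}=g$ — so that $\tilde{f}_\delta=f_\delta$ — is the reading under which the condition $f+\tilde{f}_\delta=\nabla_y\cdot h+h\nabla_y\log\rho$ of Theorem \ref{backwardodd} literally reduces to the stated hypothesis $f+f_\delta=\nabla_y\cdot h+h\nabla_y\log\rho$, so your bookkeeping is consistent with (and arguably cleaner than) the paper's.
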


\begin{example}\label{underdamp}
In statistical physics, a classical model is the following underdamped Langevin equation under the zero-mass limit \cite{spinney2012entropy}:
\begin{equation*}\left\{
\begin{split}
dX^\epsilon_t &= \epsilon^{-1}Y^{\varepsilon}_t,\\
dY^\epsilon_t &= [\epsilon^{-1}g(X^\epsilon_t,V^\epsilon_t,t)-\epsilon^{-2}\gamma Y^\epsilon_t]dt
+\epsilon^{-1}\eta dW_t,
\end{split}\right.
\end{equation*}
where $X^\epsilon$ represents the position of a mesoscopic particle, $Y^\epsilon$ represents its velocity, $\gamma$ is the fraction coefficient of the particle with the fluid, and $\eta$ is a constant matrix. In this case, the slow components are all even variables and the fast components are all odd variables. Clearly, the relations \eqref{trivial} hold for this model with $h = 0$. Since $f(x,y,t) = y$ and $c(x,y,t) = -\gamma y$, we also have
\begin{gather*}
f_\delta(x,y,t) = -y = -f(x,y,t),\\
c_\delta(x,y,t) = -\gamma y = c(x,y,t).
\end{gather*}
Therefore, for the underdamped Langevin equation, both Assumption \ref{a4} and the first compatible condition in the above theorem are automatically satisfied.
\end{example}

\subsection{Martingale structure for housekeeping heat}
Since the adjoint process coincides with the time reversal when the original process is homogenous and stationary, we should also modify the definition of the adjoint process in the presence of odd variables.

\begin{definition}\label{adjointodd}
In the presence of odd variables, the adjoint process of $(X^\epsilon,Y^\epsilon)$ is defined as the adjoint process of $(\delta X^\epsilon,\delta Y^\epsilon)$ in the sense of Definition \ref{adjoint}.
\end{definition}

Let $P^\epsilon_T$ be the law of the original process $(X^\epsilon,Y^\epsilon)$ and let $\hat{P}^\epsilon_T$ be the law of its adjoint process in the sense of Definition \ref{adjointodd}. In other words, $\hat{P}^\epsilon_T$ is the law of the adjoint process of $(\delta X^\epsilon,\delta Y^\epsilon)$ in the usual sense. In the presence of odd variables, the housekeeping heat of the original process is still defined as
\begin{equation*}
S^\epsilon_{hk}(T) = \log\frac{dP^\epsilon_T}{d\hat{P}^\epsilon_T}(X^\epsilon_\cdot,Y^\epsilon_\cdot).
\end{equation*}
The housekeeping heat $S^{(1)}_{hk}(T)$ of the averaged process $X$ can be defined similarly.

For each $t$, let $\mu^\epsilon(x,y,t)$ be the pseudo-stationary density of $(\delta X^\epsilon,\delta Y^\epsilon)$. As discussed earlier, the adjoint process of $(\delta X^\epsilon,\delta Y^\epsilon)$ in the usual sense is a diffusion process with drift
\begin{equation*}
\hat{B}_\delta =
\begin{pmatrix}
\hat{b}_\delta+\epsilon^{-1}\hat{f}_\delta \\
\epsilon^{-1}\hat{g}_\delta+\epsilon^{-2}\hat{c}_\delta
\end{pmatrix}
\end{equation*}
and diffusion matrix $D_\delta$, where
\begin{gather*}
\hat{b}_\delta = -b_\delta+\nabla_x\cdot a+a\nabla_x\log\mu^\epsilon,\\
\hat{f}_\delta = -f_\delta+\nabla_y\cdot h+h\nabla_y\log\mu^\epsilon,\\
\hat{g}_\delta = -g_\delta+\nabla_x\cdot h'+h'\nabla_x\log\mu^\epsilon,\\
\hat{c}_\delta = -c_\delta+\nabla_y\cdot\alpha+\alpha\nabla_y\log\mu^\epsilon.
\end{gather*}
Under mild conditions, by using the matched asymptotic expansions of singularly perturbed diffusion processes \cite{khasminskii1996transition, khasminskii2004averaging, khasminskii2005limit}, it can be proved that
\begin{gather*}
\mu^\epsilon(x,y,t) = \mu(x,t)\rho(x,y,t)+O(\epsilon),\\
\nabla_x\log\mu^\epsilon(x,y,t) = \nabla_x\log\mu(x,t)+\nabla_x\log\rho(x,y,t)+O(\epsilon),\\
\nabla_y\log\mu^\epsilon(x,y,t) = \nabla_y\log\rho(x,y,t)+O(\epsilon),
\end{gather*}
where $\mu(x,t)$ is the pseudo-stationary density of the averaged process $\delta X$. Therefore, the drift coefficients have the following asymptotic expansions:
\begin{gather*}
\hat{b}_\delta = -b_\delta+\nabla_x\cdot a+a\nabla_x\log\mu+a\nabla_x\log\rho+O(1),\\
\hat{f}_\delta = -f_\delta+\nabla_y\cdot h+h\nabla_y\log\rho,\\
\hat{g}_\delta = -g_\delta+\nabla_x\cdot h'+h'\nabla_x\log\mu+h'\nabla_x\log\rho+O(1),\\
\hat{c}_\delta = -c_\delta+\nabla_y\cdot\alpha+\alpha\nabla_y\log\rho,
\end{gather*}
where we have moved the $O(\epsilon)$ terms of $\hat{f}_\delta$ and $\hat{c}_\delta$ into $\hat{b}_\delta$ and $\hat{g}_\delta$, respectively. If Assumptions \ref{a2} and \ref{a4} are satisfied, then we have $\hat{c}_\delta = c$ and $D_\delta = D$, and thus the adjoint process of $(\delta X^\epsilon,\delta Y^\epsilon)$ in the usual sense is a special case of the comparable process \eqref{comparable1} in the forward case. The next theorem follows directly from Theorem \ref{forward}.

\begin{theorem}
In the presence of odd variables, suppose that Assumption \ref{a4} and the following two compatible conditions are satisfied:
\begin{gather*}
f+f_\delta = \nabla_y\cdot h+h\nabla_y\log\rho,\\
2c = \nabla_y\cdot\alpha+\alpha\nabla_y\log\rho.
\end{gather*}
Then the conclusions in Theorem \ref{adiabatic} hold.
\end{theorem}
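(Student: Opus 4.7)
The plan is to recognize the housekeeping heat in the presence of odd variables as a forward thermodynamic functional with an appropriately chosen comparable process, and then invoke Theorem \ref{forward}. By Definition \ref{adjointodd}, the law $\hat{P}^\epsilon_T$ is the law of the adjoint process of $(\delta X^\epsilon, \delta Y^\epsilon)$ in the usual sense, which according to the asymptotic analysis preceding the theorem is a diffusion with drift components $\hat{b}_\delta, \hat{f}_\delta, \hat{g}_\delta, \hat{c}_\delta$ and diffusion matrix $D_\delta$, after absorbing the $O(\epsilon)$ corrections in $\hat{f}_\delta$ and $\hat{c}_\delta$ into $\hat{b}_\delta$ and $\hat{g}_\delta$ as $O(1)$ terms.

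The central step is to verify that under the stated hypotheses, this adjoint process fits the form of the comparable process \eqref{comparable1} in the forward case. Assumption \ref{a4} ensures $c_\delta = c$, $a_\delta = a$, $h_\delta = h$, and $\alpha_\delta = \alpha$, so that $D_\delta = D$. The second compatible condition, combined with $c_\delta = c$, then gives
\[
\hat{c}_\delta = -c + \nabla_y \cdot \alpha + \alpha \nabla_y \log \rho = -c + 2c = c,
\]
while the first compatible condition yields
\[
\hat{f}_\delta = -f_\delta + \nabla_y \cdot h + h \nabla_y \log \rho = f.
\]
Hence the adjoint process shares the same $f$, $c$, $\sigma$, $\eta$ as the original process and differs only in the slow and fast drifts $\hat{b}_\delta$ and $\hat{g}_\delta$, so it has exactly the structure of \eqref{comparable1} with $\hat{b} := \hat{b}_\delta$ and $\hat{g} := \hat{g}_\delta$, up to the residual $O(1)$ terms.

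Once this identification is established, $S^\epsilon_{hk}(t) = \log(dP^\epsilon_t/d\hat{P}^\epsilon_t)(X^\epsilon_\cdot, Y^\epsilon_\cdot)$ becomes a forward thermodynamic functional of the form treated in Theorem \ref{forward}, and conclusions (a)--(d) of Theorem \ref{adiabatic} follow immediately; conclusion (e) is then obtained from the optional sampling theorem applied to the martingales produced in (c) and (d), exactly as in the corollary following Theorem \ref{forward}. The only delicate point is the presence of the residual $O(1)$ perturbations in $\hat{b}_\delta$ and $\hat{g}_\delta$ coming from the $O(\epsilon)$ remainder in the matched asymptotic expansion $\mu^\epsilon = \mu\rho + O(\epsilon)$; these fall outside the precise form of \eqref{comparable1}, but are handled exactly by Remark \ref{infinitesimal}, which asserts that perturbations of this type in $\hat{b}$ and $\hat{g}$ do not affect the conclusions of the forward theorem. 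Apart from this invocation, the argument is a routine translation of definitions, parallel to the derivation of Theorem \ref{adiabatic} itself.
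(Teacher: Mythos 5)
Your proposal is correct and follows essentially the same route as the paper: you identify the adjoint process of $(\delta X^\epsilon,\delta Y^\epsilon)$ as a comparable process of the form \eqref{comparable1} by checking $D_\delta = D$, $\hat{c}_\delta = c$, and $\hat{f}_\delta = f$ from Assumption \ref{a4} and the two compatible conditions, absorb the residual $O(1)$ drift perturbations via Remark \ref{infinitesimal}, and then invoke Theorem \ref{forward} together with optional sampling. This is exactly the paper's argument.
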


\section{Proof of Theorem \ref{forward}}\label{forwardproof}
Recall that the original process $(X^\epsilon,Y^\epsilon)$ is a diffusion process with drift $B(x,y,t)$ and diffusion matrix $D(x,y,t)$. Moreover, the comparable process $(\hat{X}^\epsilon,\hat{Y}^\epsilon)$ is another diffusion process with drift
\begin{equation*}
\hat{B}(x,y,t) = \begin{pmatrix}
\hat{b}(x,y,t)+\epsilon^{-1}f(x,y,t)\\
\epsilon^{-1}\hat{g}(x,y,t)+\epsilon^{-2}c(x,y,t) \end{pmatrix}
\end{equation*}
and the same diffusion matrix $D(x,y,t)$. Without loss of generality, we assume that $D$ is invertible.

\begin{proof}[Proof of Theorem \ref{forward}]
Since $(X^\epsilon,Y^\epsilon)$ and $(\hat{X}^\epsilon,\hat{Y}^\epsilon)$ have the same diffusion matrix $D(x,y,t)$, it follows from Girsanov's theorem that their laws are absolutely continuous with respect to each other and
\begin{equation*}
\frac{dP^\epsilon_t}{d\hat{P}^\epsilon_t}(X^\epsilon_\cdot,Y^\epsilon_\cdot)
= e^{\int_0^t(B-\hat{B})'D^{-1}\Sigma(X^\epsilon_s,Y^\epsilon_s,s)dW_s
+\frac{1}{2}\int_0^t(B-\hat{B})'D^{-1}(B-\hat{B})(X^\epsilon_s,Y^\epsilon_s,s)ds},
\end{equation*}
where
\begin{equation*}
\Sigma(x,y,t) = \begin{pmatrix}
\sigma(x,y,t) \\ \epsilon^{-1}\eta(x,y,t)
\end{pmatrix}.
\end{equation*}
Therefore, the thermodynamic functional $F^\epsilon_t$ has the following explicit expression:
\begin{equation*}
F^\epsilon_t = \int_0^t(B-\hat{B})'D^{-1}\Sigma(X^\epsilon_s,Y^\epsilon_s,s)dW_s
+\frac{1}{2}\int_0^t(B-\hat{B})'D^{-1}(B-\hat{B})(X^\epsilon_s,Y^\epsilon_s,s)ds,
\end{equation*}
and it is clear that $e^{-F^\epsilon_t}$ is an exponential martingale. Moreover, it is easy to see that both $(B-\hat{B})'D^{-1}\Sigma$ and $(B-\hat{B})'D^{-1}(B-\hat{B})$ are independent of $\epsilon$. As a result, the ordered triple $(X^\epsilon,F^\epsilon,Y^\epsilon)$ is a diffusion process solving the SDE
\begin{equation*}\left\{
\begin{split}
dX^{\epsilon}_t &= [b(X^\epsilon_t,Y^\epsilon_t,t)+\epsilon^{-1}f(X^\epsilon_t,Y^\epsilon_t,t)]dt
+\sigma(X^\epsilon_t,Y^\epsilon_t,t)dW_t,\\
dF^{\epsilon}_t &= \frac{1}{2}(B-\hat{B})'D^{-1}(B-\hat{B})(X^\epsilon_t,Y^\epsilon_t,t)dt
+(B-\hat{B})'D^{-1}\Sigma(X^\epsilon_t,Y^\epsilon_t,t)dW_t,\\
dY^{\epsilon}_t
&= [\epsilon^{-1}g(X^\epsilon_t,Y^\epsilon_t,t)+\epsilon^{-2}c(X^\epsilon_t,Y^\epsilon_t,t)]dt
+\epsilon^{-1}\eta(X^\epsilon_t,Y^\epsilon_t,t)dW_t,
\end{split}\right.
\end{equation*}
where $X^\epsilon$ and $F^\epsilon$ are slowly varying and $Y^\epsilon$ is rapidly varying. This SDE can be rewritten as
\begin{equation*}\left\{
\begin{split}
&d\begin{pmatrix}X^{\epsilon}_t \\ F^{\epsilon}_t\end{pmatrix}
= [\bar{b}(X^\epsilon_t,Y^\epsilon_t,t)+\epsilon^{-1}\bar{f}(X^\epsilon_t,Y^\epsilon_t,t)]dt
+\bar\sigma(X^{\epsilon}_t,Y^{\epsilon}_t,t)dW_t,\\
&dY^{\epsilon}_t = [\epsilon^{-1}g(X^\epsilon_t,Y^\epsilon_t,t)+\epsilon^{-2}c(X^\epsilon_t,Y^\epsilon_t,t)]dt
+\epsilon^{-1}\eta(X^\epsilon_t,Y^\epsilon_t,t)dW_t,
\end{split}\right.
\end{equation*}
where
\begin{equation*}
\bar{b} = \begin{pmatrix} b \\ \frac{1}{2}(B-\hat{B})'D^{-1}(B-\hat{B}) \end{pmatrix},\;\;\;
\bar{f} = \begin{pmatrix} f \\ 0 \end{pmatrix},\;\;\;
\bar{\sigma} = \begin{pmatrix} \sigma \\ (B-\hat{B})'D^{-1}\Sigma \end{pmatrix}.
\end{equation*}
The diffusion matrix of $(X^\epsilon,F^\epsilon,Y^\epsilon)$ is given by
\begin{equation*}
\begin{pmatrix} \bar{\sigma} \\ \epsilon^{-1}\eta \end{pmatrix}
\begin{pmatrix} \bar{\sigma} \\ \epsilon^{-1}\eta \end{pmatrix}'
= \begin{pmatrix} \bar{a} & \epsilon^{-1}\bar{h} \\
\epsilon^{-1}\bar{h}' & \epsilon^{-2}\alpha \end{pmatrix},
\end{equation*}
where
\begin{equation*}
\bar{a} = \begin{pmatrix} a & b-\hat{b} \\
(b-\hat{b})' & (B-\hat{B})'D^{-1}(B-\hat{B}) \end{pmatrix},\;\;\;
\bar{h} = \begin{pmatrix} h \\ (g-\hat{g})' \end{pmatrix}.
\end{equation*}
Since $\bar{f}^{m+1} = 0$, the solution to the Poisson equation
\begin{equation*}\left\{
\begin{split}
&-L_0\bar\phi(x,y,t) = \bar{f}(x,y,t),\\
&\int_{\Rnum^n}\bar\phi(x,y,t)\rho(x,y,t)dy = 0
\end{split}\right.
\end{equation*}
is given by $\bar\phi = (\phi,0)'$. It thus follows from Lemma \ref{convergence} that $(X^\epsilon,F^\epsilon)\Rightarrow(X,F)$ in $C([0,T],\Rnum^{m+1})$, where $(X,F)$ is a diffusion process with generator
\begin{equation*}
\bar{L} = \sum_{i=1}^{m+1}\bar{w}^i(x,t)\partial_{x_i}+\frac{1}{2}\sum_{i,j=1}^{m+1}\bar{A}^{ij}(x,t)\partial_{x_ix_j},
\end{equation*}
where
\begin{gather*}
\bar{w}^i(x,t) = \int_{\Rnum^n}(\bar{b}^i+\partial_{x_j}\bar\phi^if^j+\partial_{y_j}\bar\phi^ig^j
+\partial_{x_jy_k}\bar\phi^ih^{jk})(x,y,t)\rho(x,y,t)dy,\\
\bar{A}^{ij}(x,t) = \int_{\Rnum^n}(\bar{a}^{ij}+\bar\phi^i\bar{f}^j+\bar\phi^j\bar{f}^i
+\partial_{y_k}\bar\phi^i\bar{h}^{jk}+\partial_{y_k}\bar\phi^j\bar{h}^{ik})(x,y,t)\rho(x,y,t)dy.
\end{gather*}
Since $\bar{f}^{m+1} = \bar\phi^{m+1} = 0$, direct computations show that
\begin{gather*}
\bar{w}^i(x,t) = w^i(x,t),\;\;\;\bar{A}^{ij}(x,t) = A^{ij}(x,t),\;\;\;1\leq i,j\leq m,\\
\bar{w}^{m+1}(x,t) = \frac{1}{2}\int_{\Rnum^n}(B-\hat{B})'D^{-1}(B-\hat{B})(x,y,t)\rho(x,y,t)dy,\\
\bar{A}^{m+1,j}(x,t) = \int_{\Rnum^n}(b^j-\hat{b}^j+\partial_{y_k}\phi^jg^k
-\partial_{y_k}\phi^j\hat{g}^k)(x,y,t)\rho(x,y,t)dy,\;\;\;1\leq j\leq m,\\
\bar{A}^{m+1,m+1}(x,t) = \int_{\Rnum^n}(B-\hat{B})'D^{-1}(B-\hat{B})(x,y,t)\rho(x,y,t)dy.
\end{gather*}
Therefore, $(X,F)$ can be viewed as the solution to the SDE
\begin{equation*}
\begin{cases}
dX^i_t = w^i(X_t,t)dt+(\bar{A}^{1/2})^{x,\cdot}(X_t,t)dB_t,\\
dF_t = \bar{w}^{m+1}(X_t,t)dt+(\bar{A}^{1/2})^{m+1,\cdot}(X_t,t)dB_t,
\end{cases}
\end{equation*}
where $\bar{A} = (\bar{A}^{ij})$ is an $(m+1)\times(m+1)$ matrix, $(\bar{A}^{1/2})^{x,\cdot}$ is the first $m$ rows of $\bar{A}^{1/2}$, $(\bar{A}^{1/2})^{m+1,\cdot}$ is the last row of $\bar{A}^{1/2}$, and $B = (B_t)_{t\geq 0}$ is an $(m+1)$-dimensional standard Brownian motion defined on some probability space. Since the diffusion matrix of $X$ is $A = (A^{ij})$, we have $\bar{A}^{x,x} = A$, where $\bar{A}^{x,x}$ is the matrix obtained from $\bar{A}$ by retaining the first $m$ rows and first $m$ columns. Let $\hat{X}$ be the averaged process of the comparable process $(\hat{X}^\epsilon,\hat{Y}^\epsilon)$. Since $X$ and $\hat{X}$ have the same diffusion matrix $A(x,t)$, it follows from Girsanov's theorem that their laws are absolutely continuous with respect to each other and
\begin{equation*}
\frac{dP_t}{d\hat{P}_t}(X_\cdot)
= e^{\int_0^t(w-\hat{w})'A^{-1}(\bar{A}^{1/2})^{x,\cdot}(X_s,s)dB_s
+\frac{1}{2}\int_0^t(w-\hat{w})'A^{-1}(w-\hat{w})(X_s,s)ds}.
\end{equation*}
Therefore, the thermodynamic functional $F^{(1)}_t$ has the following explicit expression:
\begin{equation*}
F^{(1)}_t = \int_0^t(w-\hat{w})'A^{-1}(\bar{A}^{1/2})^{x,\cdot}(X_s,s)dB_s
+\frac{1}{2}\int_0^t(w-\hat{w})'A^{-1}(w-\hat{w})(X_s,s)ds,
\end{equation*}
and it is clear that $e^{-F^{(1)}_t}$ is an exponential martingale. Obviously, we have
\begin{equation*}
F_t = \int_0^t(\bar{A}^{1/2})^{m+1,\cdot}(X_s,s)dB_s+\int_0^t\bar{w}^{m+1}(X_s,s)ds.
\end{equation*}
Therefore, we obtain
\begin{equation*}
\begin{split}
F^{(2)}_t = F_t-F^{(1)}_t
=&\; \int_0^t[(\bar{A}^{1/2})^{m+1,\cdot}-(w-\hat{w})'A^{-1}(\bar{A}^{1/2})^{x,\cdot}](X_s,s)dB_s \\
&\;+\int_0^t\left[\bar{w}^{m+1}-\frac{1}{2}(w-\hat{w})'A^{-1}(w-\hat{w})\right](X_s,s)ds.
\end{split}
\end{equation*}
We next make a crucial observation that
\begin{equation*}
\bar{A}^{m+1,m+1} = 2\bar{w}^{m+1},\;\;\;\bar{A}^{m+1,x} = w-\hat{w},
\end{equation*}
which shows that $e^{-F_t}$ is an exponential martingale. Moreover, straightforward calculations show that
\begin{equation*}
\begin{split}
&\;[(\bar{A}^{1/2})^{m+1,\cdot}-(w-\hat{w})'A^{-1}(\bar{A}^{1/2})^{x,\cdot}]
[(\bar{A}^{1/2})^{m+1,\cdot}-(w-\hat{w})'A^{-1}(\bar{A}^{1/2})^{x,\cdot}]' \\
=&\; \bar{A}^{m+1,m+1}-2(\bar{A}^{1/2})^{m+1,\cdot}(\bar{A}^{1/2})^{\cdot,x}A^{-1}(w-\hat{w})
+(w-\hat{w})'A^{-1}(w-\hat{w}) \\
=&\; 2\bar{w}^{m+1}-(w-\hat{w})'A^{-1}(w-\hat{w}).
\end{split}
\end{equation*}
This shows that $e^{-F^{(2)}_t}$ is also an exponential martingale. Since $M_t = e^{-F^{(1)}_t}$ and $N_t = e^{-F^{(2)}_t}$ are both martingales and $M_tN_t = e^{-F_t}$ is a martingale, we must have $[M,N]_t = 0$ and thus $M$ and $N$ are orthogonal martingales.
\end{proof}

\section{Proof of Theorem \ref{backward}}\label{backwardproof}
The proof of Theorem \ref{backward} is much more complicated than that of Theorem \ref{forward}. To do this, we consider an auxiliary process $(\bar{X}^\epsilon,\bar{Y}^\epsilon) = (\bar{X}^\epsilon_t,\bar{Y}^\epsilon_t)_{0\leq t\leq T}$ solving the SDE
\begin{equation*}
\begin{cases}
d\bar{X}^\epsilon_t
= \frac{1}{2}(\partial_{x_j}a^{ij}+\epsilon^{-1}\partial_{y_j}h^{ij})
(\bar{X}^\epsilon_t,\bar{Y}^\epsilon_t,t)dt
+\sigma(\bar{X}^\epsilon_t,\bar{Y}^\epsilon_t,t)dW_t,\\
d\bar{Y}^\epsilon_t
= \frac{1}{2}(\epsilon^{-1}\partial_{x_j}h^{ji}+\frac{1}{2}\partial_{y_j}\alpha^{ij})
(\bar{X}^\epsilon_t,\bar{Y}^\epsilon_t,t)dt
+\epsilon^{-1}\eta(\bar{X}^\epsilon_t,\bar{Y}^\epsilon_t,t)dW_t,
\end{cases}
\end{equation*}
with its initial distribution being chosen as that of $(X^\epsilon,Y^\epsilon)$. Then $(\bar{X}^\epsilon,\bar{Y}^\epsilon)$ is a diffusion process with drift
\begin{equation*}
\bar{B} = \frac{1}{2}\begin{pmatrix}
\nabla_x\cdot a+\epsilon^{-1}\nabla_y\cdot h \\
\epsilon^{-1}\nabla_x\cdot h'+\epsilon^{-2}\nabla_y\cdot\alpha\end{pmatrix}.
\end{equation*}
and diffusion matrix $D$. Moreover, we consider another process $(\bar{X}^{R,\epsilon},\bar{Y}^{R,\epsilon}) = (\bar{X}^{R,\epsilon}_t,\bar{Y}^{R,\epsilon}_t)_{0\leq t\leq T}$ solving the SDE
\begin{equation*}
\begin{cases}
d\bar{X}^{R,\epsilon}_t
= \frac{1}{2}(\partial_{x_j}a^{ij}+\epsilon^{-1}\partial_{y_j}h^{ij})
(\bar{X}^{R,\epsilon}_t,\bar{Y}^{R,\epsilon}_t,T-t)dt
+\sigma(\bar{X}^{R,\epsilon}_t,\bar{Y}^{R,\epsilon}_t,t)dW_t,\\
d\bar{Y}^{R,\epsilon}_t
= \frac{1}{2}(\epsilon^{-1}\partial_{x_j}h^{ji}+\frac{1}{2}\partial_{y_j}\alpha^{ij})
(\bar{X}^{R,\epsilon}_t,\bar{Y}^{R,\epsilon}_t,T-t)dt
+\epsilon^{-1}\eta(\bar{X}^{R,\epsilon}_t,\bar{Y}^{R,\epsilon}_t,t)dW_t,
\end{cases}
\end{equation*}
with its initial distribution being chosen as that of $(\tilde{X}^{R,\epsilon},\tilde{Y}^{R,\epsilon})$. Then $(\bar{X}^{R,\epsilon},\bar{Y}^{R,\epsilon})$ is a diffusion process with drift $\bar{B}(x,y,T-t)$ and diffusion matrix $D(x,y,T-t)$.

On the other hand, $(\tilde{X}^{R,\epsilon},\tilde{Y}^{R,\epsilon})$ is a diffusion process with drift
$\tilde{B}(x,y,T-t)$ and the same diffusion matrix $D(x,y,T-t)$, where
\begin{equation*}
\tilde{B} = \begin{pmatrix}
\tilde{b}+\epsilon^{-1}\tilde{f} \\
\epsilon^{-1}\tilde{g}+\epsilon^{-2}c\end{pmatrix}.
\end{equation*}
Let $\bar{P}^\epsilon_T$ be the law of $(\bar{X}^\epsilon,\bar{Y}^\epsilon)$, let $\bar{P}^{R,\epsilon}_T$ be the law of $(\bar{X}^{R,\epsilon},\bar{Y}^{R,\epsilon})$, and let $\bar{Q}^{R,\epsilon}_T$ be the law of its time reversal $(\bar{X}^{R,\epsilon}_{T-t},\bar{Y}^{R,\epsilon}_{T-t})_{0\leq t\leq T}$. Then the backward thermodynamic functional $G^\epsilon_T$ can be decomposed into three parts as
\begin{equation*}
G^\epsilon_T = \log\frac{dP^\epsilon_T}{d\bar{P}^\epsilon_T}(X^\epsilon_\cdot,Y^\epsilon_\cdot)
+\log\frac{d\bar{P}^\epsilon_T}{d\bar{Q}^{R,\epsilon}_T}(X^\epsilon_\cdot,Y^\epsilon_\cdot)
+\log\frac{d\bar{Q}^{R,\epsilon}_T}{d\tilde{Q}^{R,\epsilon}_T}(X^\epsilon_\cdot,Y^\epsilon_\cdot)
= \mathrm{I}+\mathrm{II}+\mathrm{III}.
\end{equation*}
We next calculate the explicit expressions of the three parts.

\begin{lemma}\label{part1}
\begin{equation*}
\mathrm{I} = \int_0^T(B-\bar{B})'D^{-1}\Sigma(X^\epsilon_s,Y^\epsilon_s,s)dW_s
+\frac{1}{2}\int_0^T(B-\bar{B})'D^{-1}(B-\bar{B})(X^\epsilon_s,Y^\epsilon_s,s)ds.
\end{equation*}
\end{lemma}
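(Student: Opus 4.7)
My plan is to compute $\mathrm{I} = \log(dP^\epsilon_T/d\bar{P}^\epsilon_T)(X^\epsilon_\cdot,Y^\epsilon_\cdot)$ by a direct application of Girsanov's theorem, exactly mirroring the computation of $F^\epsilon_t$ in the proof of Theorem \ref{forward}. The point is that the processes $(X^\epsilon,Y^\epsilon)$ and $(\bar{X}^\epsilon,\bar{Y}^\epsilon)$ are both diffusions on $\Rnum^{m+n}$ driven by the same Brownian motion $W$, sharing the same noise coefficient $\Sigma$ (and hence the same diffusion matrix $D = \Sigma\Sigma'$) and starting from the same initial distribution; they differ only in their drifts, $B$ versus $\bar{B}$. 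This is precisely the setting in which Girsanov's theorem yields mutual absolute continuity between $P^\epsilon_T$ and $\bar{P}^\epsilon_T$ and supplies the derivative in closed form.

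The first technical step is to pick a predictable kernel $u_s$ satisfying $\Sigma u_s = (B-\bar{B})$ along the path $(X^\epsilon_\cdot,Y^\epsilon_\cdot)$. Using the invertibility of $D$, the natural choice is
$$u_s = \Sigma'(X^\epsilon_s,Y^\epsilon_s,s)\,D^{-1}(X^\epsilon_s,Y^\epsilon_s,s)\,(B-\bar{B})(X^\epsilon_s,Y^\epsilon_s,s),$$
for which the identity $\Sigma\Sigma' = D$ immediately gives $\Sigma u_s = (B-\bar{B})$ and
$$|u_s|^2 \;=\; u_s'u_s \;=\; (B-\bar{B})'D^{-1}\Sigma\Sigma'D^{-1}(B-\bar{B}) \;=\; (B-\bar{B})'D^{-1}(B-\bar{B}).$$
The second step is to check Novikov's condition so that the stochastic exponential is a genuine martingale; for fixed $\epsilon>0$, the regularity condition (a) makes every coefficient bounded, so $|u_s|$ is uniformly bounded in $\omega$ and $s\in[0,T]$ and Novikov is trivial.

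With these two pieces in hand, Girsanov's theorem (with matching initial distributions contributing no boundary factor) yields
$$\frac{dP^\epsilon_T}{d\bar{P}^\epsilon_T}(X^\epsilon_\cdot,Y^\epsilon_\cdot) \;=\; \exp\Bigl(\int_0^T u_s'\,dW_s + \tfrac{1}{2}\int_0^T |u_s|^2\,ds\Bigr),$$
and substituting $u_s'\,dW_s = (B-\bar{B})'D^{-1}\Sigma\,dW_s$ together with the expression for $|u_s|^2$ above produces the claimed formula for $\mathrm{I}$ after taking logarithm. The only real pitfall is matrix-algebra bookkeeping: one must remember that $\Sigma$ denotes the full $(m+n)\times p$ noise coefficient of the joint process, not the $m\times p$ coefficient of the slow part, and that both $\Sigma u = B-\bar{B}$ and $\Sigma\Sigma' = D$ refer to this block form. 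No asymptotic analysis enters in this lemma, so the $\epsilon^{-1}$ and $\epsilon^{-2}$ factors hiding inside $B$, $\bar{B}$ and $D^{-1}$ are harmless: for each fixed $\epsilon>0$ every quantity involved is bounded.
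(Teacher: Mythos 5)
Your proposal is correct and follows essentially the same route as the paper: the paper's proof of Lemma \ref{part1} is a one-line application of Girsanov's theorem using the fact that $(X^\epsilon,Y^\epsilon)$ and $(\bar{X}^\epsilon,\bar{Y}^\epsilon)$ share the diffusion matrix $D$ and the initial distribution, differing only in drift. You merely make explicit the choice of kernel $u_s = \Sigma'D^{-1}(B-\bar{B})$ and the Novikov check, which the paper leaves implicit.
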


\begin{proof}
Since $(X^\epsilon,Y^\epsilon)$ and $(\bar{X}^\epsilon,\bar{Y}^\epsilon)$ have the same diffusion matrix, it follows from Girsanov's theorem that their laws are absolutely continuous with respect to each other and
\begin{equation*}
\frac{dP^\epsilon_T}{d\bar{P}^\epsilon_T}(X^\epsilon_\cdot,Y^\epsilon_\cdot)
= e^{\int_0^T(B-\bar{B})'D^{-1}\Sigma(X^\epsilon_s,Y^\epsilon_s,s)dW_s
+\frac{1}{2}\int_0^T(B-\bar{B})'D^{-1}(B-\bar{B})(X^\epsilon_s,Y^\epsilon_s,s)ds},
\end{equation*}
which completes the proof.
\end{proof}

\begin{lemma}
For each $t$, let $p^\epsilon(x,y,t)$ be the probability density of $(X^\epsilon_t,Y^\epsilon_t)$. Then
\begin{equation*}
\mathrm{II} = \log\frac{p^\epsilon(X^\epsilon_0,Y^\epsilon_0,0)}{p^\epsilon(X^\epsilon_T,Y^\epsilon_T,T)}.
\end{equation*}
\end{lemma}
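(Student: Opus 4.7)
The plan is to exploit the special symmetric structure of the auxiliary drift $\bar B = \frac{1}{2}\nabla\cdot D$, which is chosen precisely so that the frozen-time generator $\bar L_t = \bar B^i\partial_i + \frac{1}{2}D^{ij}\partial_{ij} = \frac{1}{2}\partial_i(D^{ij}\partial_j\,\cdot\,)$ is self-adjoint with respect to the Lebesgue measure on $\Rnum^{m+n}$ for every $t$. This hidden self-adjointness is the reason $\mathrm{II}$ can depend only on the endpoints of the path, even though a priori it involves an entire logarithmic Radon--Nikodym derivative between two path measures.

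The first step is a symmetry of transition kernels: writing $K^\epsilon(z,s;z',s')$ for the transition density of $(\bar X^\epsilon,\bar Y^\epsilon)$ from time $s$ to $s'$ and $K^{R,\epsilon}$ for that of $(\bar X^{R,\epsilon},\bar Y^{R,\epsilon})$, the claim is
\begin{equation*}
K^\epsilon(z,s;z',s') = K^{R,\epsilon}(z',T-s';z,T-s),\qquad 0\le s\le s'\le T.
\end{equation*}
This is the integrated form of the adjoint identity $\bar L_t = \bar L_t^{\ast}$. A clean route is the Haussmann--Pardoux formula: if one formally starts $(\bar X^\epsilon,\bar Y^\epsilon)$ from the Lebesgue measure, which is instantaneously invariant under $\bar L_t$, the spatial log-density term in the time-reversal drift vanishes and the time-reversed process is literally $(\bar X^{R,\epsilon},\bar Y^{R,\epsilon})$; the kernel identity then follows by an exhaustion with normalized uniform measures on large balls.

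With the kernel symmetry in hand, I would decompose each path law as (initial density) $\times$ (endpoint transition) $\times$ (interior bridge). Under $\bar P^\epsilon_T$, the joint density of $(Z_0,Z_T)$ is $p^\epsilon(z_0,0)\,K^\epsilon(z_0,0;z_T,T)$, since $(\bar X^\epsilon,\bar Y^\epsilon)$ inherits the initial law $p^\epsilon(\cdot,0)$ of the original process. Under $\bar Q^{R,\epsilon}_T$, the endpoints $(V_0,V_T)=(\bar X^{R,\epsilon}_T,\bar Y^{R,\epsilon}_T;\bar X^{R,\epsilon}_0,\bar Y^{R,\epsilon}_0)$ have joint density $p^\epsilon(z_T,T)\,K^{R,\epsilon}(z_T,0;z_0,T)$, because $(\bar X^{R,\epsilon},\bar Y^{R,\epsilon})$ is started from $p^\epsilon(\cdot,T)$ by the backward-case convention. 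Applying the kernel symmetry, this equals $p^\epsilon(z_T,T)\,K^\epsilon(z_0,0;z_T,T)$, so the two endpoint measures differ only by the factor $p^\epsilon(z_0,0)/p^\epsilon(z_T,T)$. The same symmetry applied at every intermediate time shows that the interior bridge laws under $\bar P^\epsilon_T$ and $\bar Q^{R,\epsilon}_T$ coincide, since bridges are determined by products and ratios of transition kernels. Combining these, $d\bar P^\epsilon_T / d\bar Q^{R,\epsilon}_T(z_\cdot) = p^\epsilon(z_0,0)/p^\epsilon(z_T,T)$, and evaluating along $(X^\epsilon_\cdot,Y^\epsilon_\cdot)$ yields the claimed identity.

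The main technical obstacle is making the ``Lebesgue-started'' time-reversal argument rigorous, since the Lebesgue measure is only $\sigma$-finite. This can be handled by a truncation/limit procedure with uniform distributions on large balls, or equivalently by invoking parabolic PDE duality between the forward Kolmogorov equations for $K^\epsilon$ and $K^{R,\epsilon}$ under the regularity conditions (a)--(c), which directly gives the kernel symmetry. A more computational alternative is to apply Girsanov's theorem to the Haussmann--Pardoux representation of $\bar Q^{R,\epsilon}$ as a forward diffusion with the additional drift $D\nabla\log\tilde q$, where $\tilde q$ is the marginal density of $(\bar X^{R,\epsilon},\bar Y^{R,\epsilon})$; Itô's formula applied to $\log\tilde q(Z^\epsilon_t,T-t)$ combined with the Fokker--Planck equation for $\tilde q$ then telescopes the stochastic and drift-correction terms into the boundary values of $\log p^\epsilon$, again thanks to the self-adjointness of $\bar L_t$.
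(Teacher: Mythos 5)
Your proposal is correct and follows essentially the same route as the paper: both hinge on the self-adjointness of $\bar L_t$ with respect to the Lebesgue measure, deduce the transition-kernel symmetry $\bar p(x,y,t|x',y',s)=\bar p^R(x',y',T-s|x,y,T-t)$ from the forward/backward Kolmogorov equations and uniqueness of the Cauchy problem (your ``parabolic PDE duality''), and then telescope the finite-dimensional densities so that everything cancels except the ratio of the initial densities, which match $p^\epsilon(\cdot,0)$ and $p^\epsilon(\cdot,T)$ by the choice of initial laws. Your bridge-decomposition phrasing is just a repackaging of the paper's cylinder-set computation, so no substantive difference remains.
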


\begin{proof}
Let $\bar{p}(x,y,t|x',y',s)$ be the transition probability density of the auxiliary process $(\bar{X}^\epsilon,\bar{Y}^\epsilon)$ from time $s$ to time $t$. Then it satisfies the Kolmogorov forward equation
\begin{equation*}
\partial_t\bar{p}(x,y,t|x',y',s) = (\bar{L}_t)^*\bar{p}(x,y,t|x',y',s),
\end{equation*}
where $\bar{L}_t$ is the generator of $(\bar{X}^\epsilon,\bar{Y}^\epsilon)$ and $(\bar{L}_t)^*$ is the adjoint operator of $L_t$ with respect to the Lebesgue measure. By the definition of the auxiliary process, it is easy to check that $\bar{L}_t$ is symmetric with respect to the Lebesgue measure. Therefore, we have
\begin{equation*}
\partial_t\bar{p}(x,y,t|x',y',s) = \bar{L}_t\bar{p}(x,y,t|x',y',s).
\end{equation*}
On the other hand, let $\bar{p}^R(x',y',t|x,y,s)$ be the transition probability density of the process $(\bar{X}^{R,\epsilon},\bar{Y}^{R,\epsilon})$ from time $s$ to time $t$. Then it satisfies the Kolmogorov backward equation
\begin{equation*}
\partial_s\bar{p}^R(x',y',t|x,y,s) = -\bar{L}^R_s\bar{p}^R(x',y',t|x,y,s),
\end{equation*}
where $\bar{L}^R_s$ is the generator of $(\bar{X}^{R,\epsilon},\bar{Y}^{R,\epsilon})$. This shows that
\begin{equation*}
\begin{split}
\partial_t\bar{p}^R(x',y',T-s|x,y,T-t) &= \bar{L}^R_{T-t}\bar{p}^R(x',y',T-s|x,y,T-t) \\
&= \bar{L}_t\bar{p}^R(x',y',T-s|x,y,T-t),
\end{split}
\end{equation*}
where have used the fact that $\bar{L}^R_{T-t} = \bar{L}_t$. Therefore, both the functions $f(x,y,t) = \bar{p}(x,y,t|x',y',s)$ and $g(x,y,t) = \bar{p}^R(x',y',T-s|x,y,T-t)$ satisfy the following Cauchy problem of the parabolic equation:
\begin{equation*}
\begin{cases}
\partial_tu = \bar{L}_tu,\;\;\;s\leq t\leq T,\\
u(s) = \delta(x-x')\delta(y-y').
\end{cases}
\end{equation*}
By the uniqueness of this parabolic equation, we have
\begin{equation}\label{symmetry}
\bar{p}(x,y,t|x',y',s) = \bar{p}^R(x',y',T-s|x,y,T-t).
\end{equation}
To proceed, we consider the finite-dimensional measurable cylinder set
\begin{equation*}
\begin{split}
E =&\; \{(x,y)\in C([0,T],\Rnum^{m+n}):
x(t_0)\in A_0,y(t_0)\in B_0,x(t_1)\in A_1,y(t_1)\in B_1,\cdots, \\
&\; x(t_N)\in A_N,\cdots,y(t_N)\in B_N,0=t_0<t_1<\cdots<t_N=T\},
\end{split}
\end{equation*}
where $A_i$ are Borel sets in $\Rnum^m$ and $B_i$ are Borel sets in $\Rnum^n$. Then we have
\begin{equation*}
\bar{P}^\epsilon_T(E) = \int_{A_0\times B_0\times\cdots A_N\times B_N}
\Pnum(\bar{X}^\epsilon_{t_0}\in dx_0,\bar{Y}^\epsilon_{t_0}\in dy_0,\cdots
\bar{X}^\epsilon_{t_N}\in dx_N,\bar{Y}^\epsilon_{t_N}\in dy_N).
\end{equation*}
For each $t$, let $\bar{p}(x,y,t)$ be the probability density of $(\bar{X}^\epsilon_t,\bar{Y}^\epsilon_t)$ and let $\bar{p}^R(x,y,t)$ be the probability density of $(\bar{X}^{R,\epsilon}_t,\bar{Y}^{R,\epsilon}_t)$. It then follows from \eqref{symmetry} that
\begin{equation*}
\begin{split}
&\; \Pnum(\bar{X}^\epsilon_{t_0}\in dx_0,\bar{Y}^\epsilon_{t_0}\in dy_0,\cdots
\bar{X}^\epsilon_{t_N}\in dx_N,\bar{Y}^\epsilon_{t_N}\in dy_N) \\
=&\; \bar{p}(x_0,y_0,t_0)\bar{p}(x_1,y_1,t_1|x_0,y_0,t_0)\cdots
\bar{p}(x_N,y_N,t_N|x_{N-1},y_{N-1},t_{N-1})dx_0dy_0\cdots dx_Ndy_N \\
=&\; \bar{p}(x_0,y_0,0)\bar{p}^R(x_0,y_0,T-t_0|x_1,y_1,T-t_1)\cdots
\bar{p}^R(x_{N-1},y_{N-1},T-t_{N-1}|x_N,y_N,T-t_N)\\
&\;dx_0dy_0\cdots dx_Ndy_N \\
=&\; \frac{\bar{p}(x_0,y_0,0)}{\bar{p}^R(x_N,y_N,0)}
\Pnum(\bar{X}^{R,\epsilon}_{T-t_N}\in dx_N,\bar{Y}^{R,\epsilon}_{T-t_N}\in dy_N,\cdots
\bar{X}^{R,\epsilon}_{T-t_0}\in dx_0,\bar{Y}^{R,\epsilon}_{T-t_0}\in dy_0).
\end{split}
\end{equation*}
Since $(\bar{X}^\epsilon,\bar{Y}^\epsilon)$ and $(X^\epsilon,Y^\epsilon)$ have the same initial distribution and $(\bar{X}^{R,\epsilon},\bar{Y}^{R,\epsilon})$ and $(\tilde{X}^{R,\epsilon},\tilde{Y}^{R,\epsilon})$ have the same initial distribution, which is the finial distribution of $(X^\epsilon,Y^\epsilon)$, we have
\begin{equation*}
\bar{p}(x_0,y_0,0) = p^\epsilon(x_0,y_0,0),\;\;\;
\bar{p}^R(x_N,y_N,0) = p^\epsilon(x_N,y_N,T).
\end{equation*}
Therefore, we obtain
\begin{equation*}
\begin{split}
\bar{P}^\epsilon_T(E) =&\; \int\Pnum(\bar{X}^\epsilon_{t_0}\in dx_0,\bar{Y}^\epsilon_{t_0}\in dy_0,\cdots
\bar{X}^\epsilon_{t_N}\in dx_N,\bar{Y}^\epsilon_{t_N}\in dy_N) \\
=&\; \int\frac{p^\epsilon(x_0,y_0,0)}{p^\epsilon(x_N,y_N,T)}
\Pnum(\bar{X}^{R,\epsilon}_{T-t_0}\in dx_0,\bar{Y}^{R,\epsilon}_{T-t_0}\in dy_0,\cdots,
\bar{X}^{R,\epsilon}_{T-t_N}\in dx_N,\bar{Y}^{R,\epsilon}_{T-t_N}\in dy_N) \\
=&\; \int_E\frac{p^\epsilon(x(0),y(0),0)}{p^\epsilon(x(T),y(T),T)}d\bar{Q}^{R,\epsilon}(x,y),
\end{split}
\end{equation*}
where first two integrals are taken over $A_0\times B_0\times\cdots A_N\times B_N$. Since the path space $C([0,T],\Rnum^{n+m})$ can be generated by finite-dimensional measurable cylinder sets, we finally obtain
\begin{equation*}
\frac{d\bar{P}^\epsilon_T}{d\bar{Q}^{R,\epsilon}_T}(X^\epsilon_\cdot,Y^\epsilon_\cdot)
= \frac{p^\epsilon(X^\epsilon_0,Y^\epsilon_0,0)}{p^\epsilon(X^\epsilon_T,X^\epsilon_T,T)},
\end{equation*}
which gives the desired result.
\end{proof}

\begin{lemma}\label{part3}
\begin{equation*}
\begin{split}
\mathrm{III} =&\; \int_0^T(\tilde{B}-\bar{B})'D^{-1}\Sigma(X^\epsilon_s,Y^\epsilon_s,s)dW_s
+\frac{1}{2}\int_0^T(\tilde{B}-\bar{B})'D^{-1}(\tilde{B}-\bar{B})(X^\epsilon_s,Y^\epsilon_s,s)ds \\
&\; +\int_0^T(B-\bar{B})'D^{-1}(\tilde{B}-\bar{B})(X^\epsilon_s,Y^\epsilon_s,s)ds
+\int_0^T\nabla\cdot(\tilde{B}-\bar{B})(X^\epsilon_s,Y^\epsilon_s,s)ds,
\end{split}
\end{equation*}
\end{lemma}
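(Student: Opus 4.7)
The plan is to reduce the computation to a Girsanov density between two forward diffusion measures via the time-reversal bijection, and then convert the ensuing backward It\^{o} integral into a forward one using the structural identity $\bar{B}=\frac{1}{2}\nabla\cdot D$ built into the auxiliary process.

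First I would use that the time-reversal map $\mathcal{R}:\omega\mapsto\omega_{T-\cdot}$ is an involution on $C([0,T],\Rnum^{m+n})$, so that $\bar{Q}^{R,\epsilon}_T=\mathcal{R}_*\bar{P}^{R,\epsilon}_T$ and $\tilde{Q}^{R,\epsilon}_T=\mathcal{R}_*\tilde{P}^{R,\epsilon}_T$ yield
\begin{equation*}
\mathrm{III}=\log\frac{d\bar{P}^{R,\epsilon}_T}{d\tilde{P}^{R,\epsilon}_T}(X^\epsilon_{T-\cdot},Y^\epsilon_{T-\cdot}).
\end{equation*}
Because $\bar{P}^{R,\epsilon}_T$ and $\tilde{P}^{R,\epsilon}_T$ are laws of forward diffusions sharing the same diffusion matrix $D(\cdot,T-t)$ and the same initial distribution (the final distribution of $(X^\epsilon,Y^\epsilon)$ at time $T$), Girsanov's theorem applies and gives, for any semimartingale $U_\cdot$,
\begin{equation*}
\log\frac{d\bar{P}^{R,\epsilon}_T}{d\tilde{P}^{R,\epsilon}_T}(U_\cdot)=\int_0^T(\bar{B}-\tilde{B})'D^{-1}(U_s,T-s)\bigl[dU_s-\tilde{B}(U_s,T-s)\,ds\bigr]-\frac{1}{2}\int_0^T(\bar{B}-\tilde{B})'D^{-1}(\bar{B}-\tilde{B})(U_s,T-s)\,ds.
\end{equation*}

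Next, writing $Z=(X^\epsilon,Y^\epsilon)$, I would set $U_s=Z_{T-s}$ and change the time variable to $r=T-s$. A Riemann-sum comparison of left versus right endpoints shows
\begin{equation*}
\int_0^T f(Z_{T-s},T-s)\,dZ_{T-s}=-\int_0^T f(Z_r,r)\,d^*Z_r,
\end{equation*}
where $d^*$ is the backward It\^{o} differential; the Lebesgue integrals transport through the substitution trivially. This reduces $\mathrm{III}$ to a backward It\^{o} integral $\int_0^T(\tilde{B}-\bar{B})'D^{-1}\,d^*Z_r$ plus two explicit Lebesgue drift terms. Under $\Pnum$ the process $Z$ admits the decomposition $dZ=B\,dt+\Sigma\,dW$ with $d[Z,Z]=D\,dt$, so the standard forward/backward relation yields, with $F^i=(D^{-1})^{ij}(\tilde{B}-\bar{B})^j$,
\begin{equation*}
\int_0^T F^i\,d^*Z^i-\int_0^T F^i\,dZ^i=\int_0^T\partial_k F^i\,D^{ki}\,ds.
\end{equation*}
Differentiating the summed identity $(D^{-1})^{ij}D^{ki}=\delta^j_k$ reduces the right-hand integrand to $-(\nabla\cdot D)'D^{-1}(\tilde{B}-\bar{B})+\nabla\cdot(\tilde{B}-\bar{B})$, and here the structural identity $\bar{B}=\frac{1}{2}\nabla\cdot D$, baked into the very definition of $(\bar{X}^\epsilon,\bar{Y}^\epsilon)$, replaces $\nabla\cdot D$ by $2\bar{B}$. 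Splitting $\int F^i\,dZ^i$ via $dZ=B\,dt+\Sigma\,dW$ and grouping the drift contributions using $B-2\bar{B}+\tilde{B}=(B-\bar{B})+(\tilde{B}-\bar{B})$ then collapses the remainder into precisely the four terms claimed.

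The main technical hurdle will be the backward-to-forward It\^{o} conversion: one must carefully track the sign that appears on reversing the partition (this is why $(\tilde{B}-\bar{B})$ rather than $(\bar{B}-\tilde{B})$ ends up multiplying $\Sigma\,dW$ in the final answer), and then verify that the identity $\bar{B}=\frac{1}{2}\nabla\cdot D$ is exactly what produces the divergence correction $\int_0^T\nabla\cdot(\tilde{B}-\bar{B})\,ds$ with the stated coefficient, while the leftover $-2\bar{B}'D^{-1}(\tilde{B}-\bar{B})$ piece merges with the Girsanov drift to generate the cross-term $\int_0^T(B-\bar{B})'D^{-1}(\tilde{B}-\bar{B})\,ds$. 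The rest is bookkeeping on Girsanov densities.
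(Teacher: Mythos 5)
Your proposal is correct and follows essentially the same route as the paper: push the Radon--Nikodym derivative through the time-reversal involution to reduce to a forward Girsanov density, convert the resulting reversed-time stochastic integral into a right-endpoint (backward It\^{o}) integral of $(\tilde{B}-\bar{B})'D^{-1}$ against $Z^\epsilon$, apply the forward/backward correction $\int\partial_kF^iD^{ki}\,ds$ together with $\nabla\cdot D=2\bar{B}$ to produce the divergence term, and regroup the drift via $B+\tilde{B}-2\bar{B}=(B-\bar{B})+(\tilde{B}-\bar{B})$. All signs and groupings check out against the paper's computation.
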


\begin{proof}
Since $(\bar{X}^{R,\epsilon},\bar{Y}^{R,\epsilon})$ and $(\tilde{X}^{R,\epsilon},\tilde{Y}^{R,\epsilon})$ have the same diffusion matrix, it follows from Girsanov's theorem that their laws are absolutely continuous with respect to each other and
\begin{equation*}
\frac{d\bar{P}^{R,\epsilon}_T}{d\tilde{P}^{R,\epsilon}_T}(\bar{X}^\epsilon_\cdot,\bar{Y}^\epsilon_\cdot)
= e^{\int_0^T(\bar{B}-\tilde{B})'D^{-1}\Sigma(\bar{X}^\epsilon_s,\bar{Y}^\epsilon_s,T-s)dW_s
+\frac{1}{2}\int_0^T(\bar{B}-\tilde{B})'D^{-1}(\bar{B}-\tilde{B})(\bar{X}^\epsilon_s,\bar{Y}^\epsilon_s,T-s)ds}.
\end{equation*}
For convenience of notation, set $\bar{Z}^{R,\epsilon} = (\bar{X}^{R,\epsilon},\bar{Y}^{R,\epsilon})'$. Then we have
\begin{equation*}
d\bar{Z}^{R,\epsilon}_t = \bar{B}(\bar{Z}^{R,\epsilon}_t,T-t)dt+\Sigma(\bar{Z}^{R,\epsilon}_t,T-t)dW_t.
\end{equation*}
This shows that
\begin{equation*}
\frac{d\bar{P}^{R,\epsilon}_T}{d\tilde{P}^{R,\epsilon}_T}(\bar{Z}^\epsilon_\cdot)
= e^{\int_0^T(\bar{B}-\tilde{B})'D^{-1}(\bar{Z}^{R,\epsilon}_s,T-s)d\bar{Z}^{R,\epsilon}_s
-\frac{1}{2}\int_0^T(\bar{B}-\tilde{B})'D^{-1}(\bar{B}+\tilde{B})(\bar{Z}^{R,\epsilon}_s,T-s)ds}.
\end{equation*}
In other words, we have
\begin{equation*}
\begin{split}
\log\frac{d\bar{P}^{R,\epsilon}_T}{d\tilde{P}^{R,\epsilon}_T}(w)
= \int_0^T(\bar{B}-\tilde{B})'D^{-1}(w(s),T-s)dw(s)-\int_0^TC(w(s),T-s)ds := M(w),
\end{split}
\end{equation*}
where
\begin{equation*}
C = \frac{1}{2}(\bar{B}-\tilde{B})'D^{-1}(\bar{B}+\tilde{B}).
\end{equation*}
Let $\phi:C([0,T],\Rnum^{m+n})\rightarrow C([0,T],\Rnum^{m+n})$ be the time-reversed operator on the path space defined as
\begin{equation*}
\phi(w(\cdot)) = w(T-\cdot).
\end{equation*}
Clearly, we have $\phi_*\bar{P}^{R,\epsilon}_T = \bar{Q}^{R,\epsilon}_T$ and $\phi_*\tilde{P}^{R,\epsilon}_T = \tilde{Q}^{R,\epsilon}_T$, where $\phi_*\mu$ is the push-forward measure of $\mu$ defined as $\phi_*\mu(\cdot) = \mu(\phi^{-1}(\cdot))$. It thus follows that for any Borel set $E\subset C([0,T],\Rnum^{m+n})$,
\begin{equation*}
\bar{Q}^{R,\epsilon}_T(E) = \bar{P}^{R,\epsilon}_T(\phi^{-1}(E))
= \int_{\phi^{-1}(E)}e^{M(w)}d\tilde{P}^{R,\epsilon}_T
= \int_{E}e^{M(\phi(w))}d\phi_*\tilde{P}^{R,\epsilon}_T
= \int_{E}e^{M(\phi(w))}d\tilde{Q}^{R,\epsilon}_T.
\end{equation*}
This shows that
\begin{equation*}
\frac{d\bar{Q}^{R,\epsilon}_T}{d\tilde{Q}^{R,\epsilon}_T}(\cdot) = e^{M(\phi(\cdot))}.
\end{equation*}
For convenience of notation, set $Z^\epsilon = (X^\epsilon,Y^\epsilon)'$. Then we have
\begin{equation*}
\begin{split}
\log\frac{d\bar{Q}^{R,\epsilon}_T}{d\tilde{Q}^{R,\epsilon}_T}(Z^\epsilon_\cdot)
&= \int_0^T(\bar{B}-\tilde{B})'D^{-1}(Z^\epsilon_{T-s},T-s)dZ^\epsilon_{T-s}
-\int_0^TC(Z^\epsilon_{T-s},T-s)ds \\
&= \int_0^T(\tilde{B}-\bar{B})'D^{-1}(Z^\epsilon_s,s)*dZ^\epsilon_s-\int_0^TC(Z^\epsilon_s,s)ds,
\end{split}
\end{equation*}
where the first integral in the last equality is the right-endpoint stochastic integral (recall that Ito's integral is the left-endpoint stochastic integral and the Stratonovich's integral is the middle-endpoint stochastic integral). Since the sum of the right-endpoint and left-endpoint stochastic integrals is twice the middle-endpoint stochastic integral, we have $f*dZ = fdZ+dfdZ$, which implies that
\begin{equation*}
\begin{split}
\log\frac{d\bar{Q}^{R,\epsilon}_T}{d\tilde{Q}^{R,\epsilon}_T}(Z^\epsilon_\cdot)
=&\; \int_0^T(\tilde{B}-\bar{B})'D^{-1}(Z^\epsilon_s,s)dZ^\epsilon_s
+\int_0^Td(\tilde{B}-\bar{B})'D^{-1}(Z^\epsilon_s,s)dZ^\epsilon_s-\int_0^TC(Z^\epsilon_s,s)ds \\
=&\; \int_0^T(\tilde{B}-\bar{B})'D^{-1}\Sigma(Z^\epsilon_s,s)dW_s
+\int_0^T(\tilde{B}-\bar{B})'D^{-1}B(Z^\epsilon_s,s)ds\\
&\; +\int_0^T\partial_j[(\tilde{B}-\bar{B})'D^{-1}]_iD^{ij}(Z^\epsilon_s,s)ds
-\int_0^TC(Z^\epsilon_s,s)ds.
\end{split}
\end{equation*}
By the definition of the auxiliary process, it is easy to see that $\nabla\cdot D = 2\bar{B}$, which implies that
\begin{equation*}
\partial_j[(\tilde{B}-\bar{B})'D^{-1}]_iD^{ij}
= \nabla\cdot(\tilde{B}-\bar{B})-2(\tilde{B}-\bar{B})'D^{-1}\bar{B}.
\end{equation*}
Therefore, we obtain
\begin{equation*}
\begin{split}
\log\frac{d\bar{Q}^{R,\epsilon}_T}{d\tilde{Q}^{R,\epsilon}_T}(Z^\epsilon_\cdot)
=&\; \int_0^T(\tilde{B}-\bar{B})'D^{-1}\Sigma(Z^\epsilon_s,s)dW_s
+\frac{1}{2}\int_0^T(\tilde{B}-\bar{B})'D^{-1}(\tilde{B}-\bar{B})(Z^\epsilon_s,s)ds \\
&\; +\int_0^T(B-\bar{B})'D^{-1}(\tilde{B}-\bar{B})(Z^\epsilon_s,s)ds
+\int_0^T\nabla\cdot(\tilde{B}-\bar{B})(Z^\epsilon_s,s)ds.
\end{split}
\end{equation*}
which gives the desired result.
\end{proof}

So far, we have obtained the explicit expressions of the three parts. Combining these three parts, we obtain the following corollary.

\begin{corollary}\label{Gexpression}
\begin{equation*}
\begin{split}
G^\epsilon_T
=&\;\log\frac{p^\epsilon(X^\epsilon_0,Y^\epsilon_0,0)}{p^\epsilon(X^\epsilon_T,Y^\epsilon_T,T)}
+\int_0^T(B+\tilde{B}-2\bar{B})'D^{-1}\Sigma(X^\epsilon_s,Y^\epsilon_s,s)dW_s\\
&\;+\frac{1}{2}\int_0^T(B+\tilde{B}-2\bar{B})'D^{-1}(B+\tilde{B}-2\bar{B})(X^\epsilon_s,Y^\epsilon_s,s)ds
+\int_0^T\nabla\cdot(\tilde{B}-\bar{B})(X^\epsilon_s,Y^\epsilon_s,s)ds.
\end{split}
\end{equation*}
\end{corollary}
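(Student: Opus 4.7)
The proof is essentially algebraic bookkeeping: Lemmas \ref{part1} through \ref{part3} have already computed the three pieces $\mathrm{I}$, $\mathrm{II}$, $\mathrm{III}$ of the decomposition
\begin{equation*}
G^\epsilon_T = \log\frac{dP^\epsilon_T}{d\bar{P}^\epsilon_T}(X^\epsilon_\cdot,Y^\epsilon_\cdot)
+\log\frac{d\bar{P}^\epsilon_T}{d\bar{Q}^{R,\epsilon}_T}(X^\epsilon_\cdot,Y^\epsilon_\cdot)
+\log\frac{d\bar{Q}^{R,\epsilon}_T}{d\tilde{Q}^{R,\epsilon}_T}(X^\epsilon_\cdot,Y^\epsilon_\cdot),
\end{equation*}
so the plan is simply to add them and collect terms.

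The stochastic-integral contributions are linear: $\mathrm{I}$ contributes $\int_0^T(B-\bar{B})'D^{-1}\Sigma\,dW_s$ and $\mathrm{III}$ contributes $\int_0^T(\tilde{B}-\bar{B})'D^{-1}\Sigma\,dW_s$, so their sum immediately produces the coefficient $(B+\tilde{B}-2\bar{B})$ shown in the target formula. The boundary term from $\mathrm{II}$ carries over unchanged, as does the divergence term $\int_0^T\nabla\cdot(\tilde{B}-\bar{B})\,ds$ from $\mathrm{III}$.

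The only algebraic point worth isolating is the completion of the square in the finite-variation part. Writing $u = B-\bar{B}$ and $v = \tilde{B}-\bar{B}$, the three quadratic pieces (one from $\mathrm{I}$, two from $\mathrm{III}$) read
\begin{equation*}
\tfrac{1}{2}u'D^{-1}u \;+\; \tfrac{1}{2}v'D^{-1}v \;+\; u'D^{-1}v \;=\; \tfrac{1}{2}(u+v)'D^{-1}(u+v),
\end{equation*}
where the symmetry of $D^{-1}$ is used to identify the cross terms. Since $u+v = B+\tilde{B}-2\bar{B}$, this produces exactly the quadratic term in the statement. Assembling the boundary log-ratio from Lemma for $\mathrm{II}$, the combined stochastic integral, this squared term, and the divergence term yields the claimed identity.

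There is no real obstacle here; all the analytical work (Girsanov transforms in $\mathrm{I}$ and $\mathrm{III}$, the Kolmogorov-equation symmetry argument behind $\mathrm{II}$, and the Stratonovich/right-endpoint conversion that introduced $\nabla\cdot(\tilde{B}-\bar{B})$ via $\nabla\cdot D = 2\bar{B}$) has already been carried out in the preceding lemmas. The corollary is therefore just a one-line substitution followed by the completion of the square above.
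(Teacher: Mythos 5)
Your proposal is correct and is exactly what the paper does: the corollary is obtained by summing the expressions for $\mathrm{I}$, $\mathrm{II}$, and $\mathrm{III}$ from the three preceding lemmas, with the quadratic terms combining via the completion of the square $\tfrac{1}{2}u'D^{-1}u+\tfrac{1}{2}v'D^{-1}v+u'D^{-1}v=\tfrac{1}{2}(u+v)'D^{-1}(u+v)$ for $u=B-\bar{B}$, $v=\tilde{B}-\bar{B}$. The paper states this step without further comment, so your explicit verification matches its intended (and only possible) argument.
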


To proceed, note that the second part can be rewritten as
\begin{equation*}
\mathrm{II} = \log\frac{p^\epsilon(X^\epsilon_0,Y^\epsilon_0,0)}{\rho(X^\epsilon_0,Y^\epsilon_0,0)}
-\log\frac{p^\epsilon(X^\epsilon_T,Y^\epsilon_T,T)}{\rho(X^\epsilon_T,Y^\epsilon_T,T)}
+\log\frac{\rho(X^\epsilon_0,Y^\epsilon_0,0)}{\rho(X^\epsilon_T,Y^\epsilon_T,T)},
\end{equation*}
where $\rho$ is the pseudo-stationary density of $Y^x$ defined in \eqref{rho}. Therefore, the backward thermodynamic functional can be represented as
\begin{equation*}
G^\epsilon_T = I^\epsilon_T+H^\epsilon_T,
\end{equation*}
where
\begin{equation*}
I^\epsilon_T = \log\frac{p^\epsilon(X^\epsilon_0,Y^\epsilon_0,0)}{\rho(X^\epsilon_0,Y^\epsilon_0,0)}
-\log\frac{p^\epsilon(X^\epsilon_T,Y^\epsilon_T,T)}{\rho(X^\epsilon_T,Y^\epsilon_T,T)},\;\;\;
H^\epsilon_T = \mathrm{I}+\mathrm{III}+\log\frac{\rho(X^\epsilon_0,Y^\epsilon_0,0)}{\rho(X^\epsilon_T,Y^\epsilon_T,T)}.
\end{equation*}

We shall next investigate the weak limits of $H^\epsilon_T$ and $I^\epsilon_T$ separately. The following lemma gives another explicit expression of $H^\epsilon_T$, whose weak limit is much easier to deal with.

\begin{lemma}\label{Hexpression}
\begin{equation*}
H^\epsilon_T = \int_0^T\sigma^{m+1}(X^\epsilon_s,Y^\epsilon_s,s)dW_s
+\int_0^T(b^{m+1}+\epsilon^{-1}f^{m+1})(X^\epsilon_s,Y^\epsilon_s,s)ds,
\end{equation*}
where
\begin{equation*}
\begin{split}
\sigma^{m+1} &= (B+\tilde{B}-2\bar{B}-D\nabla\log\rho)'D^{-1}\Sigma, \\
b^{m+1} &= \tfrac{1}{2}\sigma^{m+1}(\sigma^{m+1})'
+\tfrac{1}{\rho}\nabla_x\cdot[(\tilde{b}-\tfrac{1}{2}\nabla_x\cdot a-\tfrac{1}{2}a\nabla_x\log\rho)\rho]
-\partial_s\log\rho, \\
f^{m+1} &= \tfrac{1}{\rho}[\nabla_y\cdot(\tilde{g}\rho)-\nabla_x\cdot(f\rho)].
\end{split}
\end{equation*}
\end{lemma}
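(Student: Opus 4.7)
The plan is to start from the decomposition $H^\epsilon_T = \mathrm{I}+\mathrm{III}+\log[\rho(X^\epsilon_0,Y^\epsilon_0,0)/\rho(X^\epsilon_T,Y^\epsilon_T,T)]$, insert the explicit formulas obtained in Lemmas \ref{part1} and \ref{part3} for the first two pieces, and then use It\^o's formula on $\log\rho(X^\epsilon_s,Y^\epsilon_s,s)$ to convert the boundary term into a stochastic integral plus a Lebesgue integral. Since the generator associated with the full diffusion $(X^\epsilon,Y^\epsilon)$ is $L^\epsilon=B^i\partial_i+\tfrac12 D^{ij}\partial_{ij}$, It\^o's formula yields
\begin{equation*}
\log\tfrac{\rho(X^\epsilon_0,Y^\epsilon_0,0)}{\rho(X^\epsilon_T,Y^\epsilon_T,T)}
= -\int_0^T\bigl[\partial_s\log\rho+L^\epsilon\log\rho\bigr](X^\epsilon_s,Y^\epsilon_s,s)\,ds
-\int_0^T(\nabla\log\rho)'\Sigma(X^\epsilon_s,Y^\epsilon_s,s)\,dW_s.
\end{equation*}
Combining this with the formulas for $\mathrm{I}$ and $\mathrm{III}$, the stochastic integrand becomes $(B+\tilde B-2\bar B)'D^{-1}\Sigma-(\nabla\log\rho)'\Sigma$, and since $D=\Sigma\Sigma'$ this is exactly $\sigma^{m+1}=(B+\tilde B-2\bar B-D\nabla\log\rho)'D^{-1}\Sigma$, so the martingale part is already as claimed.

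For the drift, the remaining task is purely algebraic: show that
\begin{equation*}
\tfrac12(B+\tilde B-2\bar B)'D^{-1}(B+\tilde B-2\bar B)+\nabla\cdot(\tilde B-\bar B)-\partial_s\log\rho-L^\epsilon\log\rho
=b^{m+1}+\epsilon^{-1}f^{m+1}.
\end{equation*}
First I would use the chain-rule identity $L^\epsilon\log\rho=L^\epsilon\rho/\rho-\tfrac12(\nabla\log\rho)'D\nabla\log\rho$ and expand $\tfrac12\sigma^{m+1}(\sigma^{m+1})'=\tfrac12(B+\tilde B-2\bar B)'D^{-1}(B+\tilde B-2\bar B)-(B+\tilde B-2\bar B)'\nabla\log\rho+\tfrac12(\nabla\log\rho)'D\nabla\log\rho$, so that after cancellation the $-\partial_s\log\rho$ on both sides matches and the task reduces to proving
\begin{equation*}
\nabla\cdot(\tilde B-\bar B)-\tfrac{L^\epsilon\rho}{\rho}+(B+\tilde B-2\bar B)'\nabla\log\rho
=\tfrac1\rho\nabla_x\!\cdot\!\bigl[(\tilde b-\tfrac12\nabla_x\!\cdot\! a-\tfrac12 a\nabla_x\log\rho)\rho\bigr]+\epsilon^{-1}f^{m+1}.
\end{equation*}
Next I would invoke the elementary identity $V'\nabla\log\rho=\rho^{-1}\nabla\cdot(V\rho)-\nabla\cdot V$ together with $2\bar B=\nabla\cdot D$, which allows the $B$-contributions to cancel and lets one rewrite the whole left-hand side as $-\rho^{-1}(\tilde L^\epsilon)^*\rho$, where $\tilde L^\epsilon=\epsilon^{-2}L_0+\epsilon^{-1}\tilde L_1+\tilde L_2$ is the generator of the pre-reversed comparable process. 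This is the conceptual heart of the lemma: the drift of $H^\epsilon_T$ is the negative normalized adjoint of $\tilde L^\epsilon$ acting on $\rho$, which explains why Assumption \ref{a1} is precisely what is needed for non-divergence and why $\rho$ (the pseudo-stationary density of the fast system) arises.

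Finally, expanding $(\tilde L^\epsilon)^*\rho$ in powers of $\epsilon$, the $\epsilon^{-2}$ term vanishes by the defining equation $L_0^*\rho=0$. For the $\epsilon^{-1}$ term, the compatibility condition in Assumption \ref{a1} rewrites as $\tilde f\rho=\nabla_y\!\cdot\!(h\rho)-f\rho$, and this collapses the mixed derivative in $\tilde L_1^*\rho=-\nabla_x\!\cdot\!(\tilde f\rho)-\nabla_y\!\cdot\!(\tilde g\rho)+\partial_{x_iy_j}(h^{ij}\rho)$ against the explicit mixed derivative, leaving $\tilde L_1^*\rho=\nabla_x\!\cdot\!(f\rho)-\nabla_y\!\cdot\!(\tilde g\rho)$, which gives exactly $-\rho^{-1}\epsilon^{-1}\tilde L_1^*\rho=\epsilon^{-1}f^{m+1}$. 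For the $\epsilon^0$ term, one simply expands $\partial_{x_ix_j}(a^{ij}\rho)=\nabla_x\!\cdot\![(\nabla_x\!\cdot\! a)\rho]+\nabla_x\!\cdot\!(a\nabla_x\rho)$, so that $-\rho^{-1}\tilde L_2^*\rho$ equals the $\tilde b$-divergence expression inside $b^{m+1}$. The main obstacle is bookkeeping during the intermediate algebra—in particular, organizing the $\nabla\cdot\bar B$ and $\bar B\cdot\nabla\log\rho$ terms so that the identity $\tfrac{1}{2\rho}[\partial_{ij}(D^{ij}\rho)-D^{ij}\partial_{ij}\rho]=\nabla\cdot\bar B+2\bar B\cdot\nabla\log\rho$ can be recognized and used; once the compact form $-\rho^{-1}(\tilde L^\epsilon)^*\rho$ is identified, the rest follows mechanically.
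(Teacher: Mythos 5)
Your proposal is correct and follows essentially the same route as the paper's proof: the same decomposition $H^\epsilon_T=\mathrm{I}+\mathrm{III}+\log\bigl(\rho(X^\epsilon_0,Y^\epsilon_0,0)/\rho(X^\epsilon_T,Y^\epsilon_T,T)\bigr)$, the same It\^o step on $\log\rho$, and the same completion of the square in the quadratic form. Your repackaging of the residual drift as $-\rho^{-1}(\tilde L^\epsilon)^*\rho$ is an equivalent (and slightly tidier) form of the paper's identity $\rho^{-1}\nabla\cdot[(\tilde B-\bar B-\tfrac{1}{2}D\nabla\log\rho)\rho]$ via $\nabla\cdot D=2\bar B$, and your order-by-order expansion in $\epsilon$ reproduces the paper's componentwise simplification, the only cosmetic difference being that you dispatch the $\epsilon^{-2}$ order with $L_0^*\rho=0$ where the paper invokes Assumption \ref{a2} pointwise.
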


\begin{proof}
By Ito's formula, we have
\begin{equation*}
\begin{split}
\log\frac{\rho(X^\epsilon_0,Y^\epsilon_0,0)}{\rho(X^\epsilon_T,Y^\epsilon_T,T)}
=&\; -\int_0^Td\log\rho(X^\epsilon_s,Y^\epsilon_s,s)\\
=&\; -\int_0^T(\nabla\log\rho)'\Sigma(X^\epsilon_s,Y^\epsilon_s,s)dW_s \\
&\; -\int_0^T[(\nabla\log\rho)'B+\partial_s\log\rho +\tfrac{1}{2}\partial_{ij}\log\rho D^{ij}](X^\epsilon_s,Y^\epsilon_s,s)ds.
\end{split}
\end{equation*}
Since $\nabla\cdot D = 2\bar{B}$, we have
\begin{equation*}
\frac{1}{2}\partial_{ij}\log\rho D^{ij}
= \frac{1}{2}\nabla\cdot(D\nabla\log\rho)-\bar{B}'\nabla\log\rho.
\end{equation*}
This shows that
\begin{equation*}
\begin{split}
\log\frac{\rho(X^\epsilon_0,Y^\epsilon_0,0)}{\rho(X^\epsilon_T,Y^\epsilon_T,T)}
=&\; -\int_0^T(\nabla\log\rho)'\Sigma(X^\epsilon_s,Y^\epsilon_s,s)dW_s
-\int_0^T(B-\bar{B})'\nabla\log\rho(X^\epsilon_s,Y^\epsilon_s,s)ds \\
&\; -\frac{1}{2}\int_0^T\nabla\cdot(D\nabla\log\rho)(X^\epsilon_s,Y^\epsilon_s,s)ds
-\int_0^T\partial_s\log\rho(X^\epsilon_s,Y^\epsilon_s,s)ds.
\end{split}
\end{equation*}
This equation, together with Corollary \ref{Gexpression}, shows that
\begin{equation}\label{Hraw}
\begin{split}
H^\epsilon_T
=&\;\int_0^T(B+\tilde{B}-2\bar{B}-D\nabla\log\rho)'D^{-1}\Sigma(X^\epsilon_s,Y^\epsilon_s,s)dW_s\\
&\;+\frac{1}{2}\int_0^T(B+\tilde{B}-2\bar{B}-D\nabla\log\rho)'D^{-1}
(B+\tilde{B}-2\bar{B}-D\nabla\log\rho)(X^\epsilon_s,Y^\epsilon_s,s)ds \\
&\;+\int_0^T(\tilde{B}-\bar{B}-\tfrac{1}{2}D\nabla\log\rho)'\nabla\log\rho(X^\epsilon_s,Y^\epsilon_s,s)ds\\
&\;+\int_0^T\nabla\cdot(\tilde{B}-\bar{B}-\tfrac{1}{2}D\nabla\log\rho)(X^\epsilon_s,Y^\epsilon_s,s)ds
-\int_0^T\partial_s\log\rho(X^\epsilon_s,Y^\epsilon_s,s)ds.
\end{split}
\end{equation}
It is easy to see that
\begin{equation*}
\begin{split}
(\tilde{B}-\bar{B}-\tfrac{1}{2}D\nabla\log\rho)'\nabla\log\rho
+\nabla\cdot(\tilde{B}-\bar{B}-\tfrac{1}{2}D\nabla\log\rho)
= \tfrac{1}{\rho}\nabla\cdot[(\tilde{B}-\bar{B}-\tfrac{1}{2}D\nabla\log\rho)\rho].
\end{split}
\end{equation*}
Inserting this equation into \eqref{Hraw} yields
\begin{equation}\label{h1}
\begin{split}
H^\epsilon_T
=&\;\int_0^T(B+\tilde{B}-2\bar{B}-D\nabla\log\rho)'D^{-1}\Sigma(X^\epsilon_s,Y^\epsilon_s,s)dW_s\\
&\;+\frac{1}{2}\int_0^T(B+\tilde{B}-2\bar{B}-D\nabla\log\rho)'D^{-1}
(B+\tilde{B}-2\bar{B}-D\nabla\log\rho)(X^\epsilon_s,Y^\epsilon_s,s)ds \\
&\;+\int_0^T\tfrac{1}{\rho}\nabla\cdot[(\tilde{B}-\bar{B}-\tfrac{1}{2}D\nabla\log\rho)\rho](X^\epsilon_s,Y^\epsilon_s,s)ds
-\int_0^T\partial_s\log\rho(X^\epsilon_s,Y^\epsilon_s,s)ds.
\end{split}
\end{equation}
Moreover, it is easy to check that
\begin{equation}\label{matrix}
\begin{split}
&\;\tilde{B}-\bar{B}-\tfrac{1}{2}D\nabla\log\rho \\
=&\; \begin{pmatrix}
(\tilde{b}-\tfrac{1}{2}\nabla_x\cdot a-\tfrac{1}{2}a\nabla_x\log\rho)
+\epsilon^{-1}(\tilde{f}-\tfrac{1}{2}\nabla_y\cdot h-\tfrac{1}{2}h\nabla_y\log\rho) \\
\epsilon^{-1}(\tilde{g}-\tfrac{1}{2}\nabla_x\cdot h'-\tfrac{1}{2}h'\nabla_x\log\rho)
+\epsilon^{-2}(c-\tfrac{1}{2}\nabla_y\cdot\alpha-\tfrac{1}{2}\alpha\nabla_y\log\rho) \end{pmatrix}.
\end{split}
\end{equation}
By Assumption \ref{a2} and Lemma \ref{a2explain}, the pseudo-stationary density $\rho$ satisfies
\begin{equation*}
2c = \nabla_y\cdot\alpha+\alpha\nabla_y\log\rho.
\end{equation*}
Moreover, it follows from Assumption \ref{a1} that
\begin{equation*}
\tilde{f}-\tfrac{1}{2}\nabla_y\cdot h-\tfrac{1}{2}h\nabla_y\log\rho
= \tfrac{1}{2}\nabla_y\cdot h+\tfrac{1}{2}h\nabla_y\log\rho-f.
\end{equation*}
Therefore, the right side of \eqref{matrix} can be simplified as
\begin{equation*}
\tilde{B}-\bar{B}-\tfrac{1}{2}D\nabla\log\rho = \begin{pmatrix}
(\tilde{b}-\tfrac{1}{2}\nabla_x\cdot a-\tfrac{1}{2}a\nabla_x\log\rho)
+\epsilon^{-1}(\tfrac{1}{2}\nabla_y\cdot h+\tfrac{1}{2}h\nabla_y\log\rho-f) \\
\epsilon^{-1}(\tilde{g}-\tfrac{1}{2}\nabla_x\cdot h'-\tfrac{1}{2}h'\nabla_x\log\rho) \end{pmatrix}.
\end{equation*}
This shows that
\begin{equation*}
\begin{split}
&\; \nabla\cdot[(\tilde{B}-\bar{B}-\tfrac{1}{2}D\nabla\log\rho)\rho] \\
=&\; \nabla_x\cdot[(\tilde{b}-\tfrac{1}{2}\nabla_x\cdot a-\tfrac{1}{2}a\nabla_x\log\rho)\rho] \\
&\; +\epsilon^{-1}\nabla_x\cdot[(\tfrac{1}{2}\nabla_y\cdot h+\tfrac{1}{2}h\nabla_y\log\rho-f)\rho]
+\epsilon^{-1}\nabla_y\cdot[(\tilde{g}-\tfrac{1}{2}\nabla_x\cdot h'-\tfrac{1}{2}h'\nabla_x\log\rho)\rho] \\
=&\; \nabla_x\cdot[(\tilde{b}-\tfrac{1}{2}\nabla_x\cdot a-\tfrac{1}{2}a\nabla_x\log\rho)\rho]
+\epsilon^{-1}[\nabla_y\cdot(\tilde{g}\rho)-\nabla_x\cdot(f\rho)].
\end{split}
\end{equation*}
Inserting this into \eqref{h1} gives the desired result.
\end{proof}

The following lemma gives the limit of $I^\epsilon_T$.

\begin{lemma}\label{boundary}
For each $t$, let $p(x,t)$ be the probability density of $X_t$. Then for any $\eta>0$, we have
\begin{equation*}
\lim_{\epsilon\rightarrow 0}\Pnum\left(\sup_{t\leq T}
\left|I^\epsilon_t-\log\frac{p(X^\epsilon_0,0)}{p(X^\epsilon_t,t)}\right|>\eta\right) = 0.
\end{equation*}
\end{lemma}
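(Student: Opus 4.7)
The plan is to reduce the statement to a pointwise expansion of the joint density $p^\epsilon(x,y,t)$ in the form $p^\epsilon(x,y,t) = p(x,t)\rho(x,y,t) + O(\epsilon)$, valid uniformly on compact sets in $(x,y,t)$, and then transfer this to a probabilistic statement via Assumption \ref{a3}. The starting point is to rewrite
\begin{equation*}
I^\epsilon_t-\log\frac{p(X^\epsilon_0,0)}{p(X^\epsilon_t,t)}
= \log\frac{p^\epsilon(X^\epsilon_0,Y^\epsilon_0,0)}{p(X^\epsilon_0,0)\rho(X^\epsilon_0,Y^\epsilon_0,0)}
-\log\frac{p^\epsilon(X^\epsilon_t,Y^\epsilon_t,t)}{p(X^\epsilon_t,t)\rho(X^\epsilon_t,Y^\epsilon_t,t)},
\end{equation*}
so that it suffices to prove that, uniformly in $t\in[0,T]$ and on any compact set $K\subset\Rnum^{n+m}$,
\begin{equation}\label{densityratio}
\frac{p^\epsilon(x,y,t)}{p(x,t)\rho(x,y,t)} \;\longrightarrow\; 1 \qquad\text{as}\;\epsilon\rightarrow 0.
\end{equation}

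To obtain \eqref{densityratio}, I would use the standard matched asymptotic expansion for the forward Kolmogorov equation
\begin{equation*}
\partial_t p^\epsilon = (\epsilon^{-2}L_0^*+\epsilon^{-1}L_1^*+L_2^*)p^\epsilon,
\end{equation*}
writing $p^\epsilon = p_0+\epsilon p_1+\epsilon^2 p_2+\cdots$ and matching orders. At order $\epsilon^{-2}$ one gets $L_0^*p_0 = 0$, which together with integrability and the uniqueness of the pseudo-stationary density forces $p_0(x,y,t) = \pi(x,t)\rho(x,y,t)$ for some marginal $\pi$. Going to the solvability condition for $p_2$ (after integrating out $y$ at order $\epsilon^{-1}$ using the centering condition \eqref{centering} and the Poisson equation for $\phi$) yields exactly the Fokker--Planck equation for the averaged process $X$ with coefficients $w^i$ and $A^{ij}$ from Lemma \ref{convergence}, so $\pi(x,t) = p(x,t)$. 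Because the original process starts at $t=-1$, the initial layer has already dissipated by time $0$, so the expansion is valid uniformly for $t\in[0,T]$, and classical Schauder-type estimates on the remainder give $p^\epsilon(x,y,t) = p(x,t)\rho(x,y,t)+O(\epsilon)$ in $C^0(K\times[0,T])$ for any compact $K$.

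Once \eqref{densityratio} is established, the conclusion is routine: fix $\delta>0$ and use Assumption \ref{a3} to pick a compact set $K\subset\Rnum^{n+m}$ with $\Pnum(\inf_{\epsilon>0}\tau^\epsilon_K\geq T)\geq 1-\delta$. On the event $\{\tau^\epsilon_K\geq T\}$ the process $(X^\epsilon_s,Y^\epsilon_s)$ stays in $K$ for all $s\in[0,T]$, so \eqref{densityratio} together with the uniform positivity and smoothness of $p$ and $\rho$ on $K$ (which follow from the regularity conditions (a)--(c) and a Harnack-type lower bound on $p(x,t)$) lets us take logarithms and bound
\begin{equation*}
\sup_{t\leq T}\left|I^\epsilon_t-\log\frac{p(X^\epsilon_0,0)}{p(X^\epsilon_t,t)}\right|
\leq 2\sup_{(x,y,t)\in K\times[0,T]}\left|\log\frac{p^\epsilon(x,y,t)}{p(x,t)\rho(x,y,t)}\right| = O(\epsilon).
\end{equation*}
Letting $\epsilon\rightarrow 0$ and then $\delta\rightarrow 0$ gives the claim.

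The main obstacle I anticipate is the rigorous justification of the matched asymptotic expansion and, in particular, a uniform-in-time estimate on the remainder of the form $\sup_{t\in[0,T]}\|p^\epsilon(\cdot,\cdot,t)-p(\cdot,t)\rho(\cdot,\cdot,t)\|_{C^0(K)} = O(\epsilon)$. Constructing a corrector that eliminates the initial layer is unnecessary here because the process is started at $t=-1$, but one still needs a quantitative ergodic bound for the fast process $Y^x$ (uniform in $x$ on compact sets) to control the propagation of the remainder; this is exactly the payoff of the uniform ellipticity in (b) and the smoothness in (a). Once this uniform estimate is in hand, the rest of the argument is a straightforward combination of Assumption \ref{a3} and dominated continuity of the logarithm on a compact set bounded away from zero.
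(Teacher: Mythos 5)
Your proposal is correct and follows essentially the same route as the paper: both reduce the claim to the uniform-on-compacts convergence of $p^\epsilon(x,y,t)/(p(x,t)\rho(x,y,t))$ to $1$ via the matched asymptotic expansion of the Kolmogorov forward equation (with the leading term $p(x,t)\rho(x,y,t)$, positivity from a Harnack-type bound, and the initial layer killed by starting the process at $t=-1$), and then transfer this to the pathwise statement using Assumption \ref{a3}. The only difference is presentational: the paper quotes the expansion and the exponential decay of the initial-layer correctors from Khasminskii--Yin rather than re-deriving them, whereas you sketch the order-by-order matching yourself.
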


\begin{proof}
We first assume that the original process $(X^\epsilon,Y^\epsilon)$ starts at $t = 0$. By the matched asymptotic expansions of singularly perturbed diffusion processes under second-order averaging \cite{khasminskii2005limit}, there exist two sequences of smooth functions $u_i(x,y,t)$ and $v_i(x,y,t)$ such that for any $n\geq 0$ and any compact set $K\subset\Rnum^{m+n}$,
\begin{equation*}
\sup_{(x,y,t)\in K\times[0,T]}\left|\sum_{i=0}^n\epsilon^iu_i(x,y,t)
+\sum_{i=0}^n\epsilon^iv_i(t/\epsilon,x,y)-p^\epsilon(x,y,t)\right| = O(\epsilon^{n+1}),
\end{equation*}
where $u_i(x,y,t)$ are called outer expansions and $v_i(x,y,t)$ are called initial layer corrections. Moreover, the initial layer corrections $v_i(x,y,t)$ decay exponentially fast:
\begin{equation*}
\sup_{(x,y)\in K}\left|v_i(x,y,t)\right| \leq c_1e^{-c_2t},
\end{equation*}
where $c_1$ and $c_2$ are two constants independent of $x$ and $y$. Although the above two estimates are proved in \cite{khasminskii2005limit} for singularly perturbed diffusions on compact Riemannian manifolds, the proof can be easily generalized to diffusions on the entire space \cite[Section 6]{khasminskii1996transition}. Combining the above two estimates, for any $\delta>0$, we have
\begin{equation*}
\begin{split}
&\; \sup_{(x,y,t)\in K\times[\delta,T]}|p^\epsilon(x,y,t)-u_0(x,y,t)| \\
\leq&\; \sup_{(x,y,t)\in K\times[0,T]}|p^\epsilon(x,y,t)-u_0(x,y,t)-v_0(t/\epsilon,x,y)|
+\sup_{(x,y,t)\in K\times[\delta,T]}|v_0(t/\epsilon,x,y)| \\
\leq&\; O(\epsilon^2)+c_1e^{-c_2\delta/\epsilon}.
\end{split}
\end{equation*}
Moreover, the outer expansion $u_0(x,y,t)$ is given by \cite{khasminskii1996transition}
\begin{equation*}
u_0(x,y,t) = p(x,t)\rho(x,y,t) > 0,
\end{equation*}
where the positivity of $u_0$ follows from Harnack's inequality \cite{herzog2015practical}. It thus follows from the above two equations that
\begin{equation*}
\lim_{\epsilon\rightarrow 0}\sup_{(x,y,t)\in K\times[\delta,T]}
\left|\frac{p^\epsilon(x,y,t)}{p(x,t)\rho(x,y,t)}-1\right| = 0.
\end{equation*}
Since we have assumed that that the original process $(X^\epsilon,Y^\epsilon)$ starts at $t = -1$, we can replace $\delta$ by $0$ in the above limit. This implies that
\begin{equation*}
\lim_{\epsilon\rightarrow 0}\sup_{(x,y,t)\in K\times[0,T]}
\left|\log\frac{p^\epsilon(x,y,t)}{\rho(x,y,t)p(x,t)}\right| = 0.
\end{equation*}
For any $\delta>0$, it follows from Assumption \ref{a3} that there exists a compact set $K_\delta\subset\Rnum^{m+n}$ such that
\begin{equation*}
\Pnum\left(\inf_{\epsilon>0}\tau^\epsilon_{K_\delta}<T\right) \leq \delta.
\end{equation*}
Therefore, for any $\eta > 0$, when $\epsilon$ is sufficiently small, we have
\begin{equation*}
\begin{split}
&\;\Pnum\left(\sup_{t\leq T}\left|\log\frac{p^\epsilon(X^\epsilon_t,Y^\epsilon_t,t)}
{\rho(X^\epsilon_t,Y^\epsilon_t,t)p(X^\epsilon_t,t)}\right|>\eta\right) \\
\leq&\; \Pnum\left(\sup_{t\leq T}\left|\log\frac{p^\epsilon(X^\epsilon_t,Y^\epsilon_t,t)}
{\rho(X^\epsilon_t,Y^\epsilon_t,t)p(X^\epsilon_t,t)}\right|>\eta,\;
\inf_{\epsilon>0}\tau^\epsilon_{K_\delta}\geq T\right)
+\Pnum\left(\inf_{\epsilon>0}\tau^\epsilon_{K_\delta}<T\right)
\leq \delta.
\end{split}
\end{equation*}
This finally shows that
\begin{equation*}
\lim_{\epsilon\rightarrow 0}
\Pnum\left(\sup_{t\leq T}\left|\log\frac{p^\epsilon(X^\epsilon_t,Y^\epsilon_t,t)}
{\rho(X^\epsilon_t,Y^\epsilon_t,t)p(X^\epsilon_t,t)}\right|>\eta\right) = 0,
\end{equation*}
which implies the desired result.
\end{proof}

We next derive the explicit expressions of $G_T$ and $G^{(1)}_T$. Clearly, we have
\begin{equation}\label{notdiverge}
\begin{split}
&\;B+\tilde{B}-2\bar{B}-D\nabla\log\rho \\
=&\; \begin{pmatrix} (b+\tilde{b}-\nabla_x\cdot a-a\nabla_x\log\rho)
+\epsilon^{-1}(f+\tilde{f}-\nabla_y\cdot h-h\nabla_y\log\rho) \\
\epsilon^{-1}(g+\tilde{g}-\nabla_x\cdot h'-h'\nabla_x\log\rho)
+\epsilon^{-2}(2c-\nabla_y\cdot\alpha-\alpha\nabla_y\log\rho) \end{pmatrix}.
\end{split}
\end{equation}
By Assumptions \ref{a1} and \ref{a2}, the right side of the above equation can be simplified as
\begin{equation*}
B+\tilde{B}-2\bar{B}-D\nabla\log\rho = \begin{pmatrix}
b+\tilde{b}-\nabla_x\cdot a-a\nabla_x\log\rho \\
\epsilon^{-1}(g+\tilde{g}-\nabla_x\cdot h'-h'\nabla_x\log\rho) \end{pmatrix}.
\end{equation*}
It is easy to check that the functions $\sigma^{m+1}$, $b^{m+1}$, and $f^{m+1}$ defined in Lemma \ref{Hexpression} are all independent of $\epsilon$. It then follows from Lemma \ref{Hexpression} that the ordered triple $(X^\epsilon,H^\epsilon,Y^\epsilon)$ is a diffusion process solving the SDE
\begin{equation*}\left\{
\begin{split}
dX^{\epsilon}_t
&= [b(X^\epsilon_t,Y^\epsilon_t,t)+\epsilon^{-1}f(X^\epsilon_t,Y^\epsilon_t,t)]dt
+\sigma(X^{\epsilon}_t,Y^{\epsilon}_t,t)dW_t,\\
dH^{\epsilon}_t
&= [b^{m+1}(X^\epsilon_t,Y^\epsilon_t,t)+\epsilon^{-1}f^{m+1}(X^\epsilon_t,Y^\epsilon_t,t)]dt
+\sigma^{m+1}(X^\epsilon_t,Y^\epsilon_t,t)dW_t,\\
dY^{\epsilon}_t
&= [\epsilon^{-1}g(X^\epsilon_t,Y^\epsilon_t,t)+\epsilon^{-2}c(X^\epsilon_t,Y^\epsilon_t,t)]dt
+\epsilon^{-1}\eta(X^\epsilon_t,Y^\epsilon_t,t)dW_t,
\end{split}\right.
\end{equation*}
where $X^\epsilon$ and $H^\epsilon$ are slowly varying and $Y^\epsilon$ is rapidly varying. This SDE can be rewritten as
\begin{equation*}\left\{
\begin{split}
&d\begin{pmatrix} X^\epsilon_t \\ H^\epsilon_t \end{pmatrix}
= [\bar{b}(X^\epsilon_t,Y^\epsilon_t,t)+\epsilon^{-1}\bar{f}(X^\epsilon_t,Y^\epsilon_t,t)]dt
+\bar\sigma(X^\epsilon_t,Y^\epsilon_t,t)dW_t,\\
&dY^{\epsilon}_t
= [\epsilon^{-1}g(X^\epsilon_t,Y^\epsilon_t,t)+\epsilon^{-2}c(X^\epsilon_t,Y^\epsilon_t,t)]dt
+\epsilon^{-1}\eta(X^\epsilon_t,Y^\epsilon_t,t)dW_t,
\end{split}\right.
\end{equation*}
where
\begin{equation*}
\bar{b} = \begin{pmatrix} b \\ b^{m+1} \end{pmatrix},\;\;\;
\bar{f} = \begin{pmatrix} f \\ f^{m+1} \end{pmatrix},\;\;\;
\bar{\sigma} = \begin{pmatrix} \sigma \\ \sigma^{m+1} \end{pmatrix}.
\end{equation*}
The diffusion matrix of $(X^\epsilon,H^\epsilon,Y^\epsilon)$ is given by
\begin{equation*}
\begin{pmatrix} \bar{\sigma} \\ \epsilon^{-1}\eta \end{pmatrix}
\begin{pmatrix} \bar{\sigma} \\ \epsilon^{-1}\eta \end{pmatrix}'
= \begin{pmatrix} \bar{a} & \epsilon^{-1}\bar{h} \\
\epsilon^{-1}\bar{h}' & \epsilon^{-2}\alpha \end{pmatrix},
\end{equation*}
where
\begin{gather*}
\bar{a} = \begin{pmatrix} a & b+\tilde{b}-\nabla_x\cdot a-a\nabla_x\log\rho \\
(b+\tilde{b}-\nabla_x\cdot a-a\nabla_x\log\rho)' & \sigma^{m+1}(\sigma^{m+1})' \end{pmatrix},\\
\bar{h} = \begin{pmatrix} h \\ (g+\tilde{g}-\nabla_x\cdot h'-h'\nabla_x\log\rho)' \end{pmatrix}.
\end{gather*}
Let $\bar\phi = (\phi,\phi^{m+1})'$ be solution to the Poisson equation
\begin{equation*}\left\{
\begin{split}
&-L_0\bar\phi(x,y,t) = \bar{f}(x,y,t), \\
&\int_{\Rnum^n}\bar\phi(x,y,t)\rho(x,y,t)dy = 0.
\end{split}\right.
\end{equation*}
It thus follows from Lemma \ref{convergence} that $(X^\epsilon,H^\epsilon)\Rightarrow(X,H)$ in $C([0,T],\Rnum^{m+1})$, where $(X,H)$ is a diffusion process with generator
\begin{equation*}
\bar{L} = \sum_{i=1}^{m+1}\bar{w}^i(x,t)\partial_{x_i}+\frac{1}{2}\sum_{i,j=1}^{m+1}\bar{A}^{ij}(x,t)\partial_{x_ix_j},
\end{equation*}
where
\begin{gather*}
\bar{w}^i(x,t) = \int_{\Rnum^n}(\bar{b}^i+\partial_{x_j}\bar\phi^if^j+\partial_{y_j}\bar\phi^ig^j
+\partial_{x_jy_k}\bar\phi^ih^{jk})(x,y,t)\rho(x,y,t)dy,\\
\bar{A}^{ij}(x,t) = \int_{\Rnum^n}(\bar{a}^{ij}+\bar\phi^i\bar{f}^j+\bar\phi^j\bar{f}^i
+\partial_{y_k}\bar\phi^i\bar{h}^{jk}+\partial_{y_k}\bar\phi^j\bar{h}^{ik})(x,y,t)\rho(x,y,t)dy.
\end{gather*}
It is easy to see that $\bar{w}^i(x,t) = w^i(x,t)$ and $\bar{A}^{ij}(x,t) = A^{ij}(x,t)$ for any $1\leq i,j\leq m$. Thus, $(X,H)$ can be viewed as the solution to the SDE
\begin{equation*}\left\{
\begin{split}
dX^i_t &= w^i(X_t,t)dt+(\bar{A}^{1/2})^{x,\cdot}(X_t,t)dB_t,\\
dH_t &= \bar{w}^{m+1}(X_t,t)dt+(\bar{A}^{1/2})^{m+1,\cdot}(X_t,t)dB_t,
\end{split}\right.
\end{equation*}
where $\bar{A} = (\bar{A}^{ij})$ is an $(m+1)\times(m+1)$ matrix, $(\bar{A}^{1/2})^{x,\cdot}$ is the first $m$ rows of $\bar{A}^{1/2}$, $(\bar{A}^{1/2})^{m+1,\cdot}$ is the last row of $\bar{A}^{1/2}$, and $B = (B_t)_{t\geq 0}$ is an $(m+1)$-dimensional standard Brownian motion defined on some probability space. Since the diffusion matrix of $X$ is $A(x,t) = (A^{ij}(x,t))$, we have $\bar{A}^{x,x} = A$, where $\bar{A}^{x,x}$ the matrix obtained from $\bar{A}$ by retaining the first $m$ rows and first $m$ columns. Let $\tilde{X}^R$ be the averaged process of the comparable process $(\tilde{X}^{R,\epsilon},\tilde{Y}^{R,\epsilon})$. By Proposition \ref{same}, the diffusion matrix of $\tilde{X}^R$ is $A(x,T-t)$ and thus $\tilde{X}^R$ can be viewed as the solution to the SDE
\begin{equation*}
d\tilde{X}^R_t = \tilde{w}(\tilde{X}^R_t,T-t)dt+(\bar{A}^{1/2})^{x,\cdot}(\tilde{X}^R_t,T-t)dB_t.
\end{equation*}
Imitating the proof of Corollary \ref{Gexpression}, we can also prove that the laws of $(X_t)_{0\leq t\leq T}$ and $(\tilde{X}^R_{T-t})_{0\leq t\leq T}$ are absolutely continuous with respect to each other and the thermodynamic functional $G^{(1)}_t$ has the following explicit expression:
\begin{equation}\label{G1}
\begin{split}
G^{(1)}_T =&\;\log\frac{p(X_0,0)}{p(X_T,T)}
+\int_0^T(w+\tilde{w}-\nabla_x\cdot A)'A^{-1}(\bar{A}^{1/2})^{x,\cdot}(X_s,s)dB_s\\
&\;+\frac{1}{2}\int_0^T(w+\tilde{w}-\nabla_x\cdot A)A^{-1}(w+\tilde{w}-\nabla_x\cdot A)(X_s,s)ds\\
&\;+\int_0^T\nabla\cdot(\tilde{w}-\tfrac{1}{2}\nabla_x\cdot A)(X_s,s)ds.
\end{split}
\end{equation}
Obviously, we have
\begin{equation*}
H_T = \int_0^T(\bar{A}^{1/2})^{m+1,\cdot}(X_s,s)dB_s+\int_0^T\bar{w}^{m+1}(X_s,s)ds.
\end{equation*}
Since $(X^\epsilon,H^\epsilon)\Rightarrow(X,H)$ in $C([0,T],\Rnum^{m+1})$, we have
\begin{equation*}
\left(X^\epsilon_t,\log\frac{p(X^\epsilon_0,0)}{p(X^\epsilon_t,t)}+H^\epsilon_t\right)
\Rightarrow \left(X_t,\log\frac{p(X_0,0)}{p(X_t,t)}+H_t\right).
\end{equation*}
in $C([0,T],\Rnum^{m+1})$ \cite[Page 350]{jacod2002limit}. Moreover, it follows from Lemma \ref{boundary} that for any $\eta>0$,
\begin{equation*}
\lim_{\epsilon\rightarrow 0}\Pnum\left(\sup_{t\leq T}\left|I^\epsilon_t
-\log\frac{p(X^\epsilon_0,0)}{p(X^\epsilon_t,t)}\right|>\eta\right) = 0.
\end{equation*}
It then follows from the above two equations and \cite[Page 352, Lemma 3.31]{jacod2002limit} that
\begin{equation*}
\left(X^\epsilon_t,G^\epsilon_t\right) = \left(X^\epsilon_t,I^\epsilon_t+H^\epsilon_t\right)
\Rightarrow \left(X_t,\log\frac{p(X_0,0)}{p(X_t,t)}+H_t\right)
\end{equation*}
in $C([0,T],\Rnum^{m+1})$. Therefore, the limit functional $G_t$ has the following explicit expression:
\begin{equation}\label{G}
G_t = \log\frac{p(X_0,0)}{p(X_t,t)}+H_t.
\end{equation}

The following three lemmas play an important role in proving that the anomalous part of the limit functional $G_t$ is a martingale.

\begin{lemma}\label{w1}
\begin{equation*}
\bar{w}^{m+1} = \frac{1}{2}\bar{A}^{m+1,m+1}+\nabla_x\cdot\int_{\Rnum^n}(\tilde{b}-\tfrac{1}{2}\nabla_x\cdot a-\tfrac{1}{2}a\nabla_x\log\rho-\phi^{m+1}\tilde{f})\rho dy.
\end{equation*}
\end{lemma}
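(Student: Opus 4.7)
The plan is to substitute the explicit formulas for $\bar b^{m+1}$, $\bar f^{m+1}$, $\bar a^{m+1,m+1}$, $\bar h^{m+1,k}$ (read off from Lemma \ref{Hexpression} and the block matrices for $\bar a,\bar h$) into the defining integrals
\begin{equation*}
\bar w^{m+1} = \int(\bar b^{m+1}+\partial_{x_j}\psi\, f^j+\partial_{y_j}\psi\, g^j+\partial_{x_jy_k}\psi\, h^{jk})\rho\, dy,
\end{equation*}
\begin{equation*}
\bar A^{m+1,m+1} = \int(\bar a^{m+1,m+1}+2\psi\bar f^{m+1}+2\partial_{y_k}\psi\, \bar h^{m+1,k})\rho\, dy,
\end{equation*}
where $\psi:=\bar\phi^{m+1}=\phi^{m+1}$, and then to split the difference $\bar w^{m+1}-\tfrac{1}{2}\bar A^{m+1,m+1}$ into a ``diagonal'' piece $\mathcal D:=\int(\bar b^{m+1}-\tfrac{1}{2}\bar a^{m+1,m+1})\rho\,dy$ and a ``cross'' piece $\mathcal C$ collecting the four $\psi$-terms. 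The goal is to show $\mathcal D=\nabla_x\cdot\int(\tilde b-\tfrac{1}{2}\nabla_x\cdot a-\tfrac{1}{2}a\nabla_x\log\rho)\rho\,dy$ and $\mathcal C=-\nabla_x\cdot\int\psi\tilde f\rho\,dy$, which together give the claim.

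For $\mathcal D$, note that $\bar a^{m+1,m+1}=\sigma^{m+1}(\sigma^{m+1})'$ cancels the first term of $\bar b^{m+1}$ in Lemma \ref{Hexpression}, so $\bar b^{m+1}-\tfrac{1}{2}\bar a^{m+1,m+1}=\rho^{-1}\nabla_x\cdot[(\tilde b-\tfrac{1}{2}\nabla_x\cdot a-\tfrac{1}{2}a\nabla_x\log\rho)\rho]-\partial_s\log\rho$. Multiplying by $\rho$ kills the $\rho^{-1}$, lets one pull $\nabla_x$ outside the $y$-integral, and turns $\partial_s\log\rho\cdot\rho$ into $\partial_s\rho$, whose $y$-integral is $\partial_s 1=0$. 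This step is purely bookkeeping.

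The substantive step is $\mathcal C$. Expanding $\bar f^{m+1}=\rho^{-1}[\nabla_y\cdot(\tilde g\rho)-\nabla_x\cdot(f\rho)]$, integration by parts in $y$ converts $-\int\psi\nabla_y\cdot(\tilde g\rho)\,dy$ into $\int\partial_{y_k}\psi\,\tilde g^k\rho\,dy$, while the product-rule identity $\int\psi\nabla_x\cdot(f\rho)\,dy=\nabla_x\cdot\int\psi f\rho\,dy-\int\partial_{x_j}\psi\,f^j\rho\,dy$ extracts an $x$-divergence and kills the $\int\partial_{x_j}\psi\,f^j\rho\,dy$ term in $\mathcal C$. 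After these cancellations the $g^k$-terms in $\mathcal C$ (one from the explicit $\partial_{y_j}\psi\,g^j$ term, one from $\bar h^{m+1,k}$) also cancel, and the remaining three $h$-terms combine into a perfect $x$-divergence via
\begin{equation*}
\partial_{x_jy_k}\psi\,h^{jk}\rho+\partial_{y_k}\psi\,\partial_{x_j}h^{jk}\cdot\rho+\partial_{y_k}\psi\,h^{jk}\partial_{x_j}\rho=\partial_{x_j}(\partial_{y_k}\psi\,h^{jk}\rho).
\end{equation*}
At this point $\mathcal C=\nabla_x\cdot\int(\psi f^j+\partial_{y_k}\psi\,h^{jk})\rho\,dy$ (as a vector in $j$), and Assumption \ref{a1} is invoked in the form $f^j+\tilde f^j=\partial_{y_k}h^{jk}+h^{jk}\partial_{y_k}\log\rho=\rho^{-1}\partial_{y_k}(h^{jk}\rho)$; integration by parts in $y$ then yields $\int\psi(f^j+\tilde f^j)\rho\,dy=-\int\partial_{y_k}\psi\,h^{jk}\rho\,dy$, which precisely converts the integrand inside $\mathcal C$ into $-\psi\tilde f^j\rho$.

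The main obstacle is purely organizational: there are many terms with mixed $x$- and $y$-derivatives, and one must be disciplined about which integrations by parts are done in $y$ (allowed, the $y$-integral is over $\mathbb R^n$ with $\rho$ decaying suitably) and which rearrangements are merely product-rule manipulations that let $\nabla_x$ pass through the $y$-integral. The non-obvious ingredient is recognizing Assumption \ref{a1} as exactly the compatibility needed to merge the leftover $f$- and $h$-contributions into the single $\tilde f$-term demanded by the statement; everything else is integration by parts.
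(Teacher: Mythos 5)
Your proposal is correct and follows essentially the same route as the paper: both split $\bar{w}^{m+1}-\tfrac{1}{2}\bar{A}^{m+1,m+1}$ into the $\bar{b}^{m+1}$-versus-$\bar{a}^{m+1,m+1}$ piece (where $\sigma^{m+1}(\sigma^{m+1})'$ cancels and $\int\partial_s\rho\,dy=0$) and the $\phi^{m+1}$-cross piece, which is reduced by the same $y$-integrations by parts, the same product-rule extraction of $x$-divergences, and the same invocation of Assumption \ref{a1} to merge the leftover $f$- and $h$-contributions into $-\nabla_x\cdot\int\phi^{m+1}\tilde{f}\rho\,dy$. The only discrepancy is the trivial miscount of ``four'' $\psi$-terms in the cross piece, which does not affect the argument.
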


\begin{proof}
On one hand, we have
\begin{equation*}
\begin{split}
&\;\int_{\Rnum^n}\bar{b}^{m+1}(x,y,t)\rho(x,y,t)dy\\
=&\;\frac{1}{2}\int_{\Rnum^n}\sigma^{m+1}(\sigma^{m+1})'\rho dy
+\int_{\Rnum^n}\nabla_x\cdot[(\tilde{b}-\tfrac{1}{2}\nabla_x\cdot a-\tfrac{1}{2}a\nabla_x\log\rho)\rho]dy
-\int_{\Rnum^n}\partial_s\rho dy\\
=&\;\frac{1}{2}\int_{\Rnum^n}\bar{a}^{m+1,m+1}\rho dy
+\nabla_x\cdot\int_{\Rnum^n}(\tilde{b}-\tfrac{1}{2}\nabla_x\cdot a-\tfrac{1}{2}a\nabla_x\log\rho)\rho dy.
\end{split}
\end{equation*}
On the other hand, it follows from Lemma \ref{Hexpression}, Assumption \ref{a1}, and integration by parts that
\begin{equation*}
\begin{split}
&\;\int_{\Rnum^n}[\partial_{x_j}\phi^{m+1}f^j+\partial_{y_j}\phi^{m+1}g^j
+\partial_{x_jy_k}\phi^{m+1}h^{jk}](x,y,t)\rho(x,y,t)dy\\
=&\;\partial_{x_j}\int_{\Rnum^n}\phi^{m+1}f^j\rho dy
-\int_{\Rnum^n}\phi^{m+1}[\nabla_x\cdot(f\rho)+\nabla_y\cdot(g\rho)]dy
+\int_{\Rnum^n}\partial_{x_jy_k}\phi^{m+1}h^{jk}\rho dy\\
=&\;\partial_{x_j}\int_{\Rnum^n}\phi^{m+1}f^j\rho dy
+\int_{\Rnum^n}\phi^{m+1}[\nabla_y\cdot(\tilde{g}\rho)-\nabla_x\cdot(f\rho)]dy
-\int_{\Rnum^n}\phi^{m+1}\nabla_y\cdot[(g+\tilde{g})\rho]dy\\
&\;+\partial_{x_j}\int_{\Rnum^n}\partial_{y_k}\phi^{m+1}h^{jk}\rho dy
-\int_{\Rnum^n}\partial_{y_k}\phi^{m+1}\partial_{x_j}(h^{jk}\rho)dy\\
=&\;\partial_{x_j}\int_{\Rnum^n}\phi^{m+1}f^j\rho dy
+\int_{\Rnum^n}\phi^{m+1}f^{m+1}\rho dy
+\int_{\Rnum^n}\partial_{y_k}\phi^{m+1}(g^k+\tilde{g}^k)\rho dy\\
&\;-\partial_{x_j}\int_{\Rnum^n}\phi^{m+1}\partial_{y_k}(h^{jk}\rho)dy
-\int_{\Rnum^n}\partial_{y_k}\phi^{m+1}\partial_{x_j}(h^{jk}\rho)dy\\
=&\;\partial_{x_j}\int_{\Rnum^n}\phi^{m+1}(f^j-\partial_{y_k}h^{jk}-h^{jk}\partial_{y_k}\log\rho)\rho dy
+\int_{\Rnum^n}\phi^{m+1}f^{m+1}\rho dy\\
&\;+\int_{\Rnum^n}\partial_{y_k}\phi^{m+1}(g^k+\tilde{g}^k
-\partial_{x_j}h^{jk}-h^{jk}\partial_{x_j}\log\rho)\rho dy\\
=&\;-\nabla_x\cdot\int_{\Rnum^n}\phi^{m+1}\tilde{f}\rho dy
+\int_{\Rnum^n}\phi^{m+1}f^{m+1}\rho dy+\int_{\Rnum^n}\partial_{y_k}\phi^{m+1}\bar{h}^{m+1,k}\rho dy\\
\end{split}
\end{equation*}
Combing the above two equations yields
\begin{equation*}
\begin{split}
\bar{w}^{m+1}(x,t)
=&\;\int_{\Rnum^n}(\bar{b}^{m+1}+\partial_{x_j}\phi^{m+1}f^j+\partial_{y_j}\phi^{m+1}g^j
+\partial_{x_jy_k}\phi^{m+1}h^{jk})\rho dy\\
=&\;\frac{1}{2}\int_{\Rnum^n}(\bar{a}^{m+1,m+1}+2\phi^{m+1}f^{m+1}+2\partial_{y_k}\phi^{m+1}\bar{h}^{m+1,k})\rho dy\\
&\;+\nabla_x\cdot\int_{\Rnum^n}(\tilde{b}-\tfrac{1}{2}\nabla_x\cdot a-\tfrac{1}{2}a\nabla_x\log\rho-\phi^{m+1}\tilde{f})\rho dy\\
=&\;\frac{1}{2}\bar{A}^{m+1,m+1}+\nabla_x\cdot\int_{\Rnum^n}(\tilde{b}-\tfrac{1}{2}\nabla_x\cdot a-\tfrac{1}{2}a\nabla_x\log\rho-\phi^{m+1}\tilde{f})\rho dy.
\end{split}
\end{equation*}
This completes the proof.
\end{proof}

\begin{lemma}\label{w2}
\begin{equation*}
\bar{A}^{m+1,x} = w+\tilde{w}-\nabla_x\cdot A.
\end{equation*}
\end{lemma}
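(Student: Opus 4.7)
The plan is to expand $\bar{A}^{m+1,j}$ for $1\le j\le m$ directly from its definition into five summands coming from $\bar{a}^{m+1,j}$, $\phi^{m+1}\bar{f}^j$, $\phi^j\bar{f}^{m+1}$, $\partial_{y_k}\phi^{m+1}\bar{h}^{jk}$, and $\partial_{y_k}\phi^j\bar{h}^{m+1,k}$, and then, by repeated integration by parts in $y$ and $x$, to reassemble exactly the terms that constitute $w^j+\tilde{w}^j-\partial_{x_i}A^{ij}$. A useful starting observation is that $\bar{a}^{m+1,j}=(b+\tilde{b}-\nabla_x\cdot a-a\nabla_x\log\rho)^j$, so the first summand already produces $\int(b^j+\tilde{b}^j)\rho\,dy-\partial_{x_i}\int a^{ij}\rho\,dy$, contributing the leading pieces of the target expression.

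For the other four summands I would proceed as follows. In the fourth summand I invoke Assumption \ref{a1}, which is equivalent to $\partial_{y_k}(h^{jk}\rho)=(f+\tilde f)^j\rho$, and integrate by parts in $y_k$; combined with the second summand this produces the clean expression $-\int\phi^{m+1}\tilde f^j\rho\,dy$. In the third summand I substitute the explicit form of $f^{m+1}$ recorded in Lemma \ref{Hexpression} and integrate by parts in $y$ and then in $x$ to extract $\int\partial_{x_l}\phi^j f^l\rho\,dy$, $-\partial_{x_l}\int\phi^j f^l\rho\,dy$, and $-\int\partial_{y_l}\phi^j\tilde g^l\rho\,dy$. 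Likewise the fifth summand, after substituting $\bar{h}^{m+1,k}=g^k+\tilde g^k-\partial_{x_l}h^{lk}-h^{lk}\partial_{x_l}\log\rho$ and integrating by parts in $x_l$, yields $\int\partial_{y_k}\phi^j g^k\rho\,dy+\int\partial_{y_k}\phi^j\tilde g^k\rho\,dy-\partial_{x_l}\int\partial_{y_k}\phi^j h^{lk}\rho\,dy+\int\partial_{x_ly_k}\phi^j h^{lk}\rho\,dy$. The pieces built from $\phi^j$ and the data $b$, $f$, $g$, $h$ then add up exactly to $w^j$, while the terms with $\partial_{x_l}$ outside the integral merge with $-\partial_{x_i}\int a^{ij}\rho\,dy$ into $-\partial_{x_i}A^{ij}$ plus a manageable leftover.

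The hard part will be showing that the remaining excess, which after these manipulations has the schematic form $-\int\phi^{m+1}\tilde f^j\rho\,dy+\big[\partial_{x_l}\int\tilde\phi^j\tilde f^l\rho\,dy-\int\partial_{x_l}\tilde\phi^j\tilde f^l\rho\,dy\big]+\big[\partial_{x_l}\int\partial_{y_k}\tilde\phi^j h^{lk}\rho\,dy-\int\partial_{x_ly_k}\tilde\phi^j h^{lk}\rho\,dy\big]-\int\partial_{y_l}\tilde\phi^j\tilde g^l\rho\,dy$, actually vanishes; here the $\tilde\phi$ terms arise from differentiating the key symmetry identity of the proof of Proposition \ref{same}, namely $\int(\tilde\phi^j\tilde f^l+\partial_{y_k}\tilde\phi^j h^{lk})\rho\,dy=\int(\phi^l f^j+\partial_{y_k}\phi^l h^{jk})\rho\,dy$, with respect to $x_l$. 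I would close the calculation in two moves: (i) use Leibniz to turn the two bracketed combinations into $\int\tilde\phi^j\partial_{x_l}(\tilde f^l\rho)\,dy+\int\partial_{y_k}\tilde\phi^j\partial_{x_l}(h^{lk}\rho)\,dy$, and then exploit Assumption \ref{a1} in the form $\tilde f^l\rho=\partial_{y_k}(h^{lk}\rho)-f^l\rho$ plus one integration by parts in $y_k$ to collapse this to $-\int\tilde\phi^j\nabla_x\cdot(f\rho)\,dy$; (ii) apply Lemma \ref{a2explain}, which by Assumption \ref{a2} gives the $\rho$-symmetry of $L_0$, together with $-L_0\tilde\phi^j=\tilde f^j$ and $-L_0\phi^{m+1}=f^{m+1}$ to rewrite $\int\phi^{m+1}\tilde f^j\rho\,dy=\int\tilde\phi^j f^{m+1}\rho\,dy$, and then substitute $f^{m+1}\rho=\nabla_y\cdot(\tilde g\rho)-\nabla_x\cdot(f\rho)$ and integrate by parts in $y_l$. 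The resulting terms cancel in pairs, yielding $\bar{A}^{m+1,j}=w^j+\tilde{w}^j-\partial_{x_i}A^{ij}$ and completing the proof of Lemma \ref{w2}.
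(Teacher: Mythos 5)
Your proof is correct and rests on the same essential ingredients as the paper's own argument --- the expansion of $\bar{A}^{m+1,j}$ into its five summands, repeated integration by parts in $x$ and $y$, Assumption \ref{a1}, the $\rho$-symmetry of $L_0$ coming from Assumption \ref{a2} and Lemma \ref{a2explain} (in particular the relation $\int\phi^{m+1}\tilde f^j\rho\,dy=\int\tilde\phi^jf^{m+1}\rho\,dy$), and the Poisson equations for $\phi$, $\tilde\phi$, $\phi^{m+1}$. The only difference is organizational: the paper reduces both $\bar{A}^{m+1,j}$ and $w^j+\tilde{w}^j-(\nabla_x\cdot A)^j$ to the common intermediate form $\int[\bar a^{m+1,j}+\phi^{m+1}(f^j-\tilde f^j)+\partial_{y_k}\phi^j\bar h^{m+1,k}]\rho\,dy$, whereas you transform $\bar{A}^{m+1,j}$ directly into the target and kill the residual; I verified that your residual has exactly the schematic form you state and that steps (i) and (ii) do make it vanish.
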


\begin{proof}
For any $1\leq j\leq m$, we have
\begin{equation*}
\bar{A}^{m+1,j}(x,t)
= \int_{\Rnum^n}(\bar{a}^{m+1,j}+\phi^{m+1}f^j+\phi^jf^{m+1}
+\partial_{y_k}\phi^{m+1}h^{jk}+\partial_{y_k}\phi^j\bar{h}^{m+1,k})\rho dy.
\end{equation*}
It follows from Assumption \ref{a2} and Lemma \ref{a2explain} that $L_0$ is symmetric with respect to $\rho$. Therefore, for any $1\leq i,j\leq m+1$, we have
\begin{equation}\label{symmetric}
\begin{split}
\int_{\Rnum^n}\phi^if^j\rho dy = -(\phi^i,L_0\phi^j)_\rho
= -(L_0\phi^i,\phi^j)_\rho = \int_{\Rnum^n}\phi^jf^i\rho dy,\\
\int_{\Rnum^n}\tilde\phi^if^j\rho dy = -(\tilde\phi^i,L_0\phi^j)_\rho
= -(L_0\tilde\phi^i,\phi^j)_\rho = \int_{\Rnum^n}\phi^j\tilde{f}^i\rho dy.
\end{split}
\end{equation}
These two symmetric relations, together with Assumption \ref{a1} and integration by parts, show that
\begin{equation*}
\begin{split}
\bar{A}^{m+1,j}(x,t) &= \int_{\Rnum^n}[\bar{a}^{m+1,j}
+2\phi^{m+1}(f^j-\tfrac{1}{2}\partial_{y_k}h^{jk}-\tfrac{1}{2}h^{jk}\partial_{y_k}\log\rho)
+\partial_{y_k}\phi^j\bar{h}^{m+1,k}]\rho dy\\
&= \int_{\Rnum^n}[\bar{a}^{m+1,j}+\phi^{m+1}(f^j-\tilde{f}^j)
+\partial_{y_k}\phi^j\bar{h}^{m+1,k}]\rho dy.
\end{split}
\end{equation*}
Moreover, for any $1\leq i\leq m$, using Assumption \ref{a1}, integration by parts, and \eqref{symmetric} again yields
\begin{equation*}
\begin{split}
&\;w^i+\tilde{w}^i-(\nabla_x\cdot A)^i\\
=&\;\int_{\Rnum^n}(b^i+\partial_{x_j}\phi^if^j+\partial_{y_j}\phi^ig^j+\partial_{x_jy_k}\phi^ih^{jk}
+\tilde{b}^i+\partial_{x_j}\tilde\phi^i\tilde{f}^j+\partial_{y_j}\tilde\phi^i\tilde{g}^j
+\partial_{x_jy_k}\tilde\phi^ih^{jk})\rho\\
&\;-\partial_{x_j}\int_{\Rnum^n}(a^{ij}+\phi^if^j+\phi^jf^i+\partial_{y_k}\phi^ih^{jk}
+\partial_{y_k}\phi^jh^{ik})\rho dy\\
=&\;\int_{\Rnum^n}(b^i+\tilde{b}^i-\partial_{x_j}a^{ij}-a^{ij}\partial_{x_j}\log\rho)\rho dy
-\int_{\Rnum^n}\phi^i[\partial_{x_j}(f^j\rho)+\partial_{y_j}(g^j\rho)-\partial_{x_jy_k}(h^{jk}\rho)]dy\\
&\;+\int_{\Rnum^n}\partial_{x_j}\tilde\phi^i(\tilde{f}^j-\partial_{y_k}h^{jk}-h^{jk}\partial_{y_k}\log\rho)
\rho dy-\int_{\Rnum^n}\tilde\phi^i\partial_{y_j}(\tilde{g}^j\rho)dy\\
&\;-\partial_{x_j}\int_{\Rnum^n}\phi^j(f^i-\partial_{y_k}h^{ik}-h^{ik}\partial_{y_k}\log\rho)\rho dy\\
=&\;\int_{\Rnum^n}\bar{a}^{m+1,i}\rho dy
+\int_{\Rnum^n}\phi^i[\partial_{y_j}(\tilde{g}^j\rho)-\partial_{x_j}(f^j\rho)]dy\\
&\;-\int_{\Rnum^n}\phi^i\partial_{y_k}[(g^k+\tilde{g}^k-\partial_{x_j}h^{jk}-h^{jk}\partial_{x_j}\log\rho)\rho]dy\\
&\;-\partial_{x_j}\int_{\Rnum^n}\tilde\phi^if^j\rho dy
-\int_{\Rnum^n}\tilde\phi^i[\partial_{y_j}(\tilde{g}^j\rho)-\partial_{x_j}(f^j\rho)]dy
+\partial_{x_j}\int_{\Rnum^n}\phi^j\tilde{f}^i\rho dy\\
=&\;\int_{\Rnum^n}\bar{a}^{m+1,i}\rho dy+\int_{\Rnum^n}\phi^if^{m+1}\rho dy
+\int_{\Rnum^n}\partial_{y_k}\phi^i\bar{h}^{m+1,k}\rho dy-\int_{\Rnum^n}\tilde\phi^if^{m+1}\rho dy\\
=&\;\int_{\Rnum^n}[\bar{a}^{m+1,i}+\phi^{m+1}(f^i-\tilde{f}^i)+\partial_{y_k}\phi^i\bar{h}^{m+1,k}]\rho dy
= \bar{A}^{m+1,i}(x,t).
\end{split}
\end{equation*}
This completes the proof.
\end{proof}

\begin{lemma}\label{w3}
\begin{equation*}
\nabla_x\cdot(\tilde{w}-\tfrac{1}{2}\nabla_x\cdot A)
= \nabla_x\cdot\int_{\Rnum^n}(\tilde{b}-\tfrac{1}{2}\nabla_x\cdot a-\tfrac{1}{2}a\nabla\log\rho
-\phi^{m+1}\tilde{f})\rho dy.
\end{equation*}
\end{lemma}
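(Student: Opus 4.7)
The plan is to expand both sides, invoke Proposition \ref{same} so that $A^{ij}=\tilde A^{ij}$ can be written in the $\tilde\phi$-form, and then match the two sides via integration by parts together with Assumptions \ref{a1} and \ref{a2}. A convenient first simplification is that $(\tfrac{1}{2}\partial_{x_j}a^{ij}+\tfrac{1}{2}a^{ij}\partial_{x_j}\log\rho)\rho=\tfrac{1}{2}\partial_{x_j}(a^{ij}\rho)$, which collapses the middle two summands of the RHS integrand into $-\tfrac{1}{2}\partial_{x_ix_j}\int_{\Rnum^n}a^{ij}\rho\,dy$; this matches exactly the $a^{ij}$-piece of $\tfrac{1}{2}\partial_{x_j}A^{ij}$ on the LHS, so those purely diffusive contributions cancel at once.

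Next I would push the two $x$-derivatives sitting inside $\tilde w^i$ (namely those in $\partial_{x_j}\tilde\phi^i\tilde f^j$ and $\partial_{x_jy_k}\tilde\phi^i h^{jk}$) outside the $y$-integral by integration by parts in $x$, producing divergence-type pieces $\partial_{x_j}\!\int\tilde\phi^i\tilde f^j\rho\,dy$ and $\partial_{x_j}\!\int\partial_{y_k}\tilde\phi^i h^{jk}\rho\,dy$ plus leftovers with the derivative acting on $\rho$. Expanding $\tfrac{1}{2}\partial_{x_j}\tilde A^{ij}$ in the $\tilde\phi$-form delivers exactly these same divergence pieces (symmetrized in $i,j$, which reunite under the outer $\partial_{x_i}$), so they cancel. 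What survives is a combination of $\tilde\phi^i\partial_{x_j}(\tilde f^j\rho)$, $\partial_{y_j}\tilde\phi^i\tilde g^j\rho$, and $\partial_{y_k}\tilde\phi^i\partial_{x_j}(h^{jk}\rho)$; integrating the last by parts in $y_k$, Assumption \ref{a1} yields $\partial_{y_k}h^{jk}+h^{jk}\partial_{y_k}\log\rho=f^j+\tilde f^j$, packaging the fragments into $\tilde\phi^i\cdot[\nabla_y\cdot(\tilde g\rho)-\nabla_x\cdot(f\rho)]=\tilde\phi^i f^{m+1}\rho$ by the very definition of $f^{m+1}$ recorded in Lemma \ref{Hexpression}.

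The final step is to use the $L_0$-symmetry with respect to $\rho$ from Lemma \ref{a2explain} (equivalent to Assumption \ref{a2}) applied to the Poisson equations $-L_0\tilde\phi^i=\tilde f^i$ and $-L_0\phi^{m+1}=f^{m+1}$ to obtain the pairing
\begin{equation*}
\int_{\Rnum^n}\tilde\phi^i f^{m+1}\rho\,dy=(\tilde\phi^i,-L_0\phi^{m+1})_\rho=(-L_0\tilde\phi^i,\phi^{m+1})_\rho=\int_{\Rnum^n}\phi^{m+1}\tilde f^i\rho\,dy,
\end{equation*}
which, after the outer $\partial_{x_i}$-divergence, closes the identity against the surviving $-\nabla_x\cdot\int\phi^{m+1}\tilde f\,\rho\,dy$ on the RHS. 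The main obstacle will be bookkeeping the mixed-derivative $h^{jk}$ contributions: one has to integrate by parts twice (in $y_k$ and in $x_j$) and carefully track how $\partial_{y_k}h^{jk}$ and $h^{jk}\partial_{y_k}\log\rho$ recombine, so that Assumption \ref{a1} is invoked precisely once to reproduce $f+\tilde f$. The two compatibility assumptions play complementary roles throughout, with Assumption \ref{a1} producing $f^{m+1}$ out of the $y$-divergences and Assumption \ref{a2} effecting the final $\tilde\phi\leftrightarrow\phi^{m+1}$ swap, just as in the proofs of Lemmas \ref{w1} and \ref{w2}.
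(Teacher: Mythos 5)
Your proposal is correct, and it rests on exactly the same ingredients as the paper's proof: integration by parts, Assumption \ref{a1} to convert the $h^{jk}$-contributions, recognition of $f^{m+1}=\tfrac{1}{\rho}[\nabla_y\cdot(\tilde g\rho)-\nabla_x\cdot(f\rho)]$, the $L_0$-symmetry swap $\int\tilde\phi^i f^{m+1}\rho\,dy=\int\phi^{m+1}\tilde f^i\rho\,dy$ from Lemma \ref{a2explain}, and the death of an antisymmetric remainder under the outer divergence $\partial_{x_i}$. The one organizational difference is that you expand $\tfrac12\partial_{x_j}A^{ij}$ in the $\tilde\phi$-form via Proposition \ref{same}, which lets you cancel the symmetrized pieces $\tfrac12\partial_{x_j}\int(\tilde\phi^i\tilde f^j+\tilde\phi^j\tilde f^i)\rho\,dy$ and $\tfrac12\partial_{x_j}\int(\partial_{y_k}\tilde\phi^i h^{jk}+\partial_{y_k}\tilde\phi^j h^{ik})\rho\,dy$ against the corresponding divergence pieces extracted from $\tilde w^i$ right away (the leftovers being antisymmetric in $i,j$), and to invoke Assumption \ref{a1} only once, in the form $\partial_{y_k}(h^{jk}\rho)=(f^j+\tilde f^j)\rho$. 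The paper instead keeps $A^{ij}$ in the $\phi$-form, applies Assumption \ref{a1} separately to the $\tilde\phi$-block (giving $-f^j$) and to the two $\phi$-blocks (giving $-\tilde f^j$ each), uses the cross-symmetry $\int\tilde\phi^i f^j\rho\,dy=\int\phi^j\tilde f^i\rho\,dy$, and defers the antisymmetric cancellation to the single final term $\tfrac12\partial_{x_j}\int(\phi^i\tilde f^j-\phi^j\tilde f^i)\rho\,dy$. Your route is slightly tidier in its bookkeeping at the cost of leaning on Proposition \ref{same}, which the paper's proof of this lemma does not need; since that proposition is established before and independently of Lemma \ref{w3} (from Assumptions \ref{a1} and \ref{a2} alone), there is no circularity, and the signs in your surviving terms do close up correctly against $-\nabla_x\cdot\int\phi^{m+1}\tilde f\rho\,dy$.
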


\begin{proof}
For any $1\leq i\leq m$, using Assumption \ref{a1}, integration by parts, and \eqref{symmetric} yields
\begin{equation*}
\begin{split}
&\;\tilde{w}^i-\tfrac{1}{2}(\nabla_x\cdot A)^i\\
=&\;\int_{\Rnum^n}(\tilde{b}^i+\partial_{x_j}\tilde\phi^i\tilde{f}^j+\partial_{y_j}\tilde\phi^i\tilde{g}^j
+\partial_{x_jy_k}\tilde\phi^ih^{jk})\rho dy\\
&\;-\tfrac{1}{2}\partial_{x_j}\int_{\Rnum^n}(a^{ij}+\phi^if^j+\phi^jf^i+\partial_{y_k}\phi^ih^{jk}
+\partial_{y_k}\phi^jh^{ik})\rho dy\\
=&\;\int_{\Rnum^n}(\tilde{b}^i-\tfrac{1}{2}\partial_{x_j}a^{ij}-\tfrac{1}{2}a^{ij}\partial_{x_j}\log\rho)\rho dy\\
&\;+\int_{\Rnum^n}\partial_{x_j}\tilde\phi^i(\tilde{f}^j-\partial_{y_k}h^{jk}-h^{jk}\partial_{y_k}\log\rho)\rho dy-\int_{\Rnum^n}\tilde\phi^i\partial_{y_j}(\tilde{g}^j\rho)dy\\
&\;-\tfrac{1}{2}\partial_{x_j}\int_{\Rnum^n}\phi^j(f^i-\partial_{y_k}h^{ik}-h^{ik}\partial_{y_k}\log\rho)\rho dy
-\tfrac{1}{2}\partial_{x_j}\int_{\Rnum^n}\phi^i(f^j-\partial_{y_k}h^{jk}-h^{jk}\partial_{y_k}\log\rho)\rho dy\\
=&\;\int_{\Rnum^n}(\tilde{b}^i-\tfrac{1}{2}\partial_{x_j}a^{ij}-\tfrac{1}{2}a^{ij}\partial_{x_j}\log\rho)\rho dy-\partial_{x_j}\int_{\Rnum^n}\tilde\phi^if^j\rho dy
-\int_{\Rnum^n}\tilde\phi^i[\partial_{y_j}(\tilde{g}^j\rho)-\partial_{x_j}(f^j\rho)]dy\\
&\;+\tfrac{1}{2}\partial_{x_j}\int_{\Rnum^n}\phi^j\tilde{f}^i\rho dy
+\tfrac{1}{2}\partial_{x_j}\int_{\Rnum^n}\phi^i\tilde{f}^j\rho dy\\
=&\;\int_{\Rnum^n}(\tilde{b}^i-\tfrac{1}{2}\partial_{x_j}a^{ij}-\tfrac{1}{2}a^{ij}\partial_{x_j}\log\rho)\rho dy-\int_{\Rnum^n}\tilde\phi^if^{m+1}\rho dy
+\tfrac{1}{2}\partial_{x_j}\int_{\Rnum^n}(\phi^i\tilde{f}^j-\phi^j\tilde{f}^i)\rho dy\\
=&\;\int_{\Rnum^n}(\tilde{b}^i-\tfrac{1}{2}\partial_{x_j}a^{ij}-\tfrac{1}{2}a^{ij}\partial_{x_j}\log\rho
-\phi^{m+1}\tilde{f}^i)\rho dy
+\tfrac{1}{2}\partial_{x_j}\int_{\Rnum^n}(\phi^i\tilde{f}^j-\phi^j\tilde{f}^i)\rho dy.
\end{split}
\end{equation*}
Therefore, we obtain
\begin{equation*}
\nabla_x\cdot(\tilde{w}-\tfrac{1}{2}\nabla_x\cdot A)
= \nabla_x\cdot\int_{\Rnum^n}(\tilde{b}-\tfrac{1}{2}\nabla_x\cdot a-\tfrac{1}{2}a\nabla\log\rho
-\phi^{m+1}\tilde{f})\rho dy,
\end{equation*}
where we have used the fact that
\begin{equation*}
\partial_{x_ix_j}(\phi^i\tilde{f}^j-\phi^j\tilde{f}^i) = 0.
\end{equation*}
This completes the proof.
\end{proof}

We are now in a position to prove Theorem \ref{backward}.

\begin{proof}[Proof of Theorem \ref{backward}]
As discussed earlier in this section, we have proved that $G^\epsilon_t$ and $G^{(1)}_t$ are well defined for each $t$ and $(X^\epsilon_t,G^\epsilon_t) \Rightarrow (X_t,G_t)$ in $C([0,T],\Rnum^{m+1})$. We shall next prove that $G^{(2)}_t$ is a martingale. From \eqref{G1} and \eqref{G}, we obtain
\begin{equation*}
\begin{split}
G^2_T =&\; G_T-G^{(1)}_T\\
=&\; \int_0^T[(\bar{A}^{1/2})^{m+1,\cdot}-(w+\tilde{w}-\nabla_x\cdot A)'A^{-1}(\bar{A}^{1/2})^{x,\cdot}](X_s,s)dB_s\\
&\;+\int_0^T\left[\bar{w}^{m+1}-\frac{1}{2}(w+\tilde{w}-\nabla_x\cdot A)A^{-1}(w+\tilde{w}-\nabla_x\cdot A)-\nabla_x\cdot(\tilde{w}-\tfrac{1}{2}\nabla_x\cdot A)\right](X_s,s)ds.
\end{split}
\end{equation*}
Straightforward calculations show that
\begin{equation*}
\begin{split}
&\;[(\bar{A}^{1/2})^{m+1,\cdot}-(w+\tilde{w}-\nabla_x\cdot A)'A^{-1}(\bar{A}^{1/2})^{x,\cdot}]
[(\bar{A}^{1/2})^{m+1,\cdot}-(w+\tilde{w}-\nabla_x\cdot A)'A^{-1}(\bar{A}^{1/2})^{x,\cdot}]' \\
=&\; \bar{A}^{m+1,m+1}-2\bar{A}^{m+1,x}A^{-1}(w+\tilde{w}-\nabla_x\cdot A)+(w+\tilde{w}-\nabla_x\cdot A)'A^{-1}(w+\tilde{w}-\nabla_x\cdot A).
\end{split}
\end{equation*}
Combining Lemmas \ref{w1} and \ref{w3}, we obtain
\begin{equation*}
\bar{A}^{m+1,m+1} = 2[\bar{w}^{m+1}-\nabla_x\cdot(\tilde{w}-\tfrac{1}{2}\nabla_x\cdot A)].
\end{equation*}
This fact, together with Lemma \ref{w2}, shows that
\begin{equation*}
\begin{split}
&\;[(\bar{A}^{1/2})^{m+1,\cdot}-(w+\tilde{w}-\nabla_x\cdot A)'A^{-1}(\bar{A}^{1/2})^{x,\cdot}]
[(\bar{A}^{1/2})^{m+1,\cdot}-(w+\tilde{w}-\nabla_x\cdot A)'A^{-1}(\bar{A}^{1/2})^{x,\cdot}]' \\
=&\; 2[\bar{w}^{m+1}-\nabla_x\cdot(\tilde{w}-\tfrac{1}{2}\nabla_x\cdot A)]
-(w+\tilde{w}-\nabla_x\cdot A)'A^{-1}(w+\tilde{w}-\nabla_x\cdot A).
\end{split}
\end{equation*}
This shows that $e^{-G^{(2)}_t}$ is an exponential martingale.
\end{proof}

\section*{Acknowledgments}
The authors acknowledge George G. Yin and Wenqing Hu for valuable comments and stimulating discussions. H. Ge is supported by NSFC (No. 11971037 and 11622101). C. Jia acknowleges support from startup funds provided by the Beijing Computational Science Research Center.

%%%%%%%%%% 参考文献格式 %%%%%%%%%%
\setlength{\bibsep}{5pt}
\small\bibliographystyle{nature}
%\bibliography{martingale}

\end{document}